\documentclass[11pt]{article}
\usepackage{mathtools, amsthm, amsfonts, amssymb,bm,amsmath}
\usepackage[round]{natbib}
\usepackage{fullpage}
\usepackage{authblk}
\usepackage{setspace}
\usepackage[unicode]{hyperref}
\usepackage{pgfplots}
\usepackage{multicol}
\usepackage{thmtools}
\usepackage{subcaption}
\usepackage{algorithm}
\usepackage{algpseudocode}
\usepgfplotslibrary{fillbetween}

\newcommand{\Break}{\State \textbf{break} }

\theoremstyle{plain}
\declaretheorem[name=Theorem]{thm}
\newtheorem{prop}{Proposition}

\newtheorem{lem}{Lemma}

\newtheorem{assumption}{Assumption}
\newtheorem{defn}{Definition}

\newcommand{\bz}{{\boldsymbol{z}}}
\newcommand{\bN}{{\boldsymbol{N}}}
\newcommand{\bx}{{\boldsymbol{x}}}
\newcommand{\by}{{\boldsymbol{y}}}
\newcommand{\ba}{{\boldsymbol{a}}}
\newcommand{\bv}{{\boldsymbol{v}}}
\newcommand{\br}{{\boldsymbol{r}}}
\newcommand{\balpha}{{\boldsymbol{\alpha}}}
\newcommand{\bell}{{\boldsymbol{\ell}}}
\newcommand{\diag}{\operatorname{diag}}
\renewcommand{\Re}{\operatorname{Re}}
\renewcommand{\Im}{\operatorname{Im}}
\newcommand{\BS}{\mathcal{B}}
\newcommand{\FS}{\mathcal{F}}
\newcommand{\AS}{\mathcal{A}}
\newcommand{\LN}{\mathcal{N}}
\newcommand{\Lx}{\mathcal{X}}
\newcommand{\im}{\mathrm i}

\newcommand{\DLBshort}{DGD-LB}
\newcommand{\DLBlong}{Distributed Gradient Descent Load Balancing}

\begin{document}

\title{Load Balancing with Network Latencies\\via Distributed Gradient Descent}

\author[1,2]{Santiago R. Balseiro\thanks{srb2155@columbia.edu}}
\author[2]{Vahab S. Mirrokni\thanks{mirrokni@google.com}}
\author[2]{Bartek Wydrowski\thanks{bwydrowski@google.com}}
\affil[1]{Columbia University}
\affil[2]{Google Research}

\date{\today}

\maketitle

\begin{abstract}
Motivated by the growing demand for serving large language model inference requests, we study  distributed load balancing for global serving systems with network latencies. We consider a fluid model in which continuous flows of requests arrive at different frontends and need to be routed to distant backends for processing whose processing rates are workload dependent. Network latencies can lead to long travel times for requests and delayed feedback from backends. The objective is to minimize the average latency of requests, composed of the network latency and the serving latency at the backends.

We introduce {\DLBlong} ({\DLBshort}), a probabilistic routing algorithm in which each frontend adjusts the routing probabilities dynamically using gradient descent. Our algorithm is distributed: there is no coordination between frontends, except by observing the delayed impact other frontends have on shared backends. The algorithm uses an approximate gradient that measures the marginal impact of an additional request evaluated at a delayed system state. Equilibrium points of our algorithm minimize the centralized optimal average latencies, and we provide a novel local stability analysis showing that our algorithm converges to an optimal solution when started sufficiently close to that point. Moreover, we present sufficient conditions on the step-size of gradient descent that guarantee convergence in the presence of network latencies. Numerical experiments show that our algorithm is globally stable and optimal, confirm our stability conditions are nearly tight, and demonstrate that {\DLBshort} can lead to substantial gains relative to other load balancers studied in the literature when network latencies are large.
\end{abstract}

\setstretch{1.5}

\section{Introduction}

Load balancing is essential in today's distributed computing systems, especially with the rise of large language models (LLMs).  LLMs require significant and expensive computational resources for inference requests, and efficiently distributing these requests across multiple servers is crucial for scalability and performance.  Without effective load balancing, some servers could be overloaded while others sit idle, leading to increased latency, reduced throughput, and potential system instability. 

The goal of this paper is to present a distributed algorithm for load balancing requests across backends in the presence of network latencies when backends' processing rates are non-linear functions of their workloads, effectively reflecting the inference capacity. This combination of features distinguishes our work from existing algorithms, which either lack distributed operation~\citep{bassamboo2006design}, are suboptimal in the presence of non-linear processing rates~\citep{winston1977optimality,weber1978optimal,tassiulas1990stability}, or fail to account for network latencies~\citep{zhang2024optimal}.

\subsection{Main Contributions}

We consider a bipartite network with frontends on one side and backends on the other. The network represents a global load balancing system in which frontends are the entry point for user requests and each backend is a different data center with multiple servers. There is a continuous flow of requests arriving at the frontends that need to be instantly routed to the distant backends for processing. Network latencies between frontends and backends lead to extended travel times for requests and delayed feedback from backends. Due to so-called locality constraints, a frontend's requests can only be processed by a specific subset of backends that host the necessary data or machine learning models for those tasks~\citep{weng2020optimal,rutten2023mean}.

The processing rates of these backends are concave, increasing functions of their workloads. This aims to capture that backends can have multiple servers and the complex nature of large language model inference tasks conducted in each server, which can have multiple stages (prefill, decoding, etc.) with different levels of parallelism, resource bottlenecks, caching, and batching optimizations~\citep{kwon2023efficient, wan2024efficient}.

The objective is to minimize the average requests traveling in the network and in service at the backends. By Little's Law, this is equivalent to minimizing the average latency experienced by requests, which comprises both the network latency and the serving latency at the backends. We benchmark our algorithm against the optimal static routing, which is the centrally optimal solution, and a universal bound on the performance of every policy.

We develop a novel load balancer for bipartite networks with network latencies called {\DLBlong} ({\DLBshort}), which is a probabilistic routing algorithm where each frontend dynamically adjusts routing probabilities independently using gradient descent. Namely, each frontend maintains a vector of routing probabilities and arriving requests are routed to a frontend according to these probabilities. Frontends compute ``approximate gradients'' that capture the marginal impact of an additional request, evaluated at a delayed system state, and update the routing probabilities using projected gradient descent. The projection operation guarantees that routing decisions lie in the probability simplex.

Our algorithm operates in a distributed manner, eliminating the need for coordination between frontends except through observing the delayed impact on shared backends. This eliminates unnecessary communicating overhead between frontends and leads to a robust algorithm with no single point of failure. In addition, {\DLBshort} is a dynamic algorithm that can quickly react to changes in the frontends' arrival rates, network topology, or backends processing capacity. Notably, our algorithm does not need to know the arrival rates at the frontends.

This research makes several key technical contributions:

\begin{itemize}
%\item \emph{Distributed Algorithm Design}: We develop a novel distributed load balancing (DLB) algorithm for bipartite networks with network latencies. This algorithm allows frontends to make independent routing decisions based on delayed feedback from backends, eliminating the need for complex coordination.
   
%\item \emph{Approximate Gradient Formulation}: We introduce the concept of an "approximate gradient" that measures the marginal impact of an additional request evaluated at a delayed system state. This approximation enables the algorithm to operate effectively despite the lack of global information and instantaneous feedback. 
   
\item \emph{Optimality of Equilibrium Points}: We  prove that all equilibrium points of {\DLBshort} are optimal solutions to the static routing problem. This guarantees that the algorithm can only converge to a solution that minimizes the average latency of requests. 
   
\item \emph{Local Stability Analysis}: The dynamics of {\DLBshort} can be modeled as delayed differential equation with a right-hand side that is discontinuous due to gradient descent's projection operation. We conduct a novel local stability analysis of the algorithm, demonstrating its convergence to an optimal solution when initialized sufficiently close to that point. This analysis involves linearizing the dynamics around an equilibrium point and carefully considering the impact of the non-linear projection operator. 
   
\item \emph{Convergence Conditions}: We establish sufficient conditions on the step-size of gradient descent that ensure convergence for general networks in the presence of heterogeneous network latencies. These conditions provide practical guidelines for tuning the algorithm's parameters to achieve stable and efficient performance. 
   
\item \emph{Numerical Validation}: We perform numerical experiments to confirm the stability and optimality of our algorithm. These experiments also validate the tightness of our stability conditions for networks with one frontend and their sufficiency for general networks. Experiments illustrate that inappropriate tuning of step-sizes can lead to oscillatory behavior, which is typical of control systems with delay. Our numerical experiments confirm that our algorithm is globally stable and, in the presence of network latencies, drastically outperforms other popular load balancing algorithms that route based on the workload or latencies of backends. In the presence of delays, dynamic policies that do not incorporate delays can lead to oscillatory behavior and poor performance.
\end{itemize}

%It employs an approximate gradient that captures the marginal impact of an additional request, evaluated at a delayed system state.  Notably, the equilibrium points of our algorithm correspond to the minimization of centralized optimal average latencies. 

The stability analysis of our algorithm is complex due to the interplay of (i) request dynamics: requests arrive at frontends and depart from backends at varying rates, creating dynamic workloads that need to be balanced, (ii) network latencies: frontends received delayed feedback from backends and requests take time to travel from frontends to backends, and (iii) arbitrary network topologies: the algorithm must operate effectively in arbitrary network topologies, including those with complex connectivity patterns and heterogeneous latencies. 

Our analysis draws inspiration from the local stability analysis of congestion control algorithms for TCP, which have been successfully deployed in the real world~\citep{brakmo1995tcp}. {\DLBshort} shares similarities with primal-dual TCP algorithms, particularly in maintaining states in both the sources and links~\citep{srikant2004mathematics,low2022analytical}. However, the presence of a complex projection operation and the need to handle arbitrary network topologies necessitate the development of novel analytical techniques. These techniques, which include using Lyapunov's direct method to handle the projection operation, a uniformization technique for handling multiple frontends, and a geometric analysis of the eigenvalues of the loop transfer function, may be of independent interest in the broader field of distributed control systems. While this work provides a foundation for analyzing global stability, its complete characterization remains an open problem. Proving global stability for distributed control algorithms with delay is notoriously difficult~\citep{massoulie2002stability,paganini2005congestion}.

\subsection{Other Related Work}

There is an extensive literature on load balancing in computing system. We refer the reader to \citet{al2012survey} and \citet{der2022scalable} for recent surveys of load balancing algorithms in cloud computing and scalable systems.

Load balancing algorithms can be categorized as either distributed or centralized, and as either static or dynamic. In this paper, we consider distributed load balancing algorithms that have the advantage over centralized algorithms (such as \citealt{bassamboo2006design}) of being more robust, as there is no single point of failure. Static load balancing algorithms such as weighted round robin~\citep{hajek1985extremal} and probabilistic assignment~\citep{combe1994optimization} require prior knowledge of the system and cannot adapt to changes in the system. Dynamic algorithms like {\DLBshort}, in contrast, can react better to changes in the system and lead to performance improvements. 

Dynamic policies such Join-the-Shortest-Queue (JSQ)~\citep{winston1977optimality,weber1978optimal,gupta2007analysis}, Join-Idle-Queue (JIQ)~\citep{lu2011join}, and MaxPressure~\citep{tassiulas1990stability} are known to stabilize systems and attain near-optimal latencies when backends are homogeneous. These policies, however, do not incorporate network latencies and can be suboptimal when backends are heterogeneous with load-dependent processing rates. Moreover, policies such as JSQ and JIQ cannot be implement in our setting with workload-dependent processing rates as there is no clear notion of queues and idle servers. We refer to \citet{van2018scalable} for a comparison of many of these policies. 

\citet{weng2020optimal} studies load balancing in bipartite graphs with locality constraint with heterogeneous backends with one exponential server each. They evaluate two load balancing policies, Join-the-Fastest-of-the-Shortest-Queue (JFSQ) and Join-the-Fastest-of-the-Idle-Queue (JFIQ), demonstrating their asymptotic optimality in minimizing mean response time when the number of servers is large under a ``well-connected'' graph condition. \citet{rutten2023mean} also consider bipartite graphs with locality constraint and homogeneous servers. They study a power-of-$d$ choices load-balancing algorithm and show that, when the network size is large, arbitrary graphs with diverging degrees behave like a fully flexible system on a complete bipartite graph. \citet{horvath2019mean} study a single-frontend network in which backends are heterogeneous and have workload-dependent service rate functions as the one considered in our paper. They propose a threshold policy, which they analyze in a mean field regime, and propose an optimization approach to determine the thresholds that minimize response times.

In the presence of feedback delays, \citet{litvak2003routing} and \citet{mehdian2017join} study the performance of Round Robin and Join-the-Shortest-Queue, respectively. Interestingly, \citet{mehdian2017join} shows that Join-the-Shortest-Queue can lead to poor performance when feedback delays are long.

The most related paper is \citet{zhang2024optimal}, which presents a distributed load balancing algorithm called Greatest Marginal Service Rate (GMSR) that can handle non-linear processing rate functions. Their algorithm and analysis, however, ignores network latencies, which can be substantial in global systems. GMSR is a discrete or bang-bang control algorithm in which each request is routed to the connected backend with the smallest gradient. Their algorithm can be interpreted as a special case of our gradient-descent based algorithm when the step-sizes goes to infinity. Our analysis and simulations show that GMSR is unstable and suboptimal in the presence of network latencies.
%

%Load balancing is a critical aspect of distributed computing systems, ensuring efficient resource utilization and optimal performance.  In the context of large language model inference requests, distributed load balancing becomes even more crucial due to the substantial computational demands and latency considerations.  This paper delves into the intricacies of a distributed load balancing algorithm specifically designed for bipartite networks, where network latencies play a significant role.  

\section{Model}

Consider a bipartite network or graph $\mathcal G = (\FS, \BS, \AS)$ where $\FS$ is the set of frontends, $\BS$ is the set of backends, and $\AS$ is the set of arcs.  We denote by $\BS_i = \{ j \in \BS : (i,j) \in \AS\}$ and $\FS_j = \{ i \in \FS : (i,j) \in \AS\}$ the neighbors of a frontend and backend, respectively. 

We study a fluid model in which requests arrive to frontend $i \in \FS$ at rate $\lambda_i > 0$, with units of requests per second. Fluid models are used extensively in the queuing literature~\citep{chen2001fundamentals} and congestion control literature~\citep{srikant2004mathematics} as they approximate well the behavior of large systems and lead to more analytically tractable models.

We denote by $N_j(t)$ the amount of requests or workload in backend $j \in \BS$ at time $t$. Backend $j \in \BS$ has a processing rate function  is $\ell_j : \mathbb R_+ \rightarrow \mathbb R_+$ that determines the processing rate $\ell_j(N_j(t))$ in requests per second as a function of its workload. In global systems, backends, such as data centers with numerous servers, often employ internal load balancing algorithms for routing incoming requests and servers, in turn, can process multiple requests in parallel, sharing resource between requests. The processing rate functions are designed to model the overall processing capacity of a backend, effectively abstracting away its internal complexities. This approach of using workload-dependent service rates to model resource sharing servers has been explored in similar contexts, as noted by \citet{horvath2019mean} and \citet{zhang2024optimal}.

The travel time or latency from frontend $i \in \FS$ to backend $j \in \BS$ is $\tau_{ij}>0$. The units are seconds. We assume links have sufficient bandwidth so that network latencies are constant and load independent. This is a reasonable assumption for LLM-based queries for which prompts and responses can only require a few kilobytes of data.  In addition, denote by $N_{ij}(t)$ the amount of requests traveling from frontend $i \in \FS$ to backend $j \in \BS$ at time $t$.

We conduct our analysis under the following assumption on the processing rate functions.

\begin{assumption}\label{assume:processing-rate} The processing rate functions $\ell_j$ are strictly increasing, concave, and twice differentiable.
\end{assumption}

The monotonicity of the processing rate function is a natural assumption for work-conserving backends that never idle when requests are available for processing. Recall that $\ell_j(N_j)$ measures the total processing rate of backend $j$ when $N_j$ requests are available---not the processing rate of an individual request. Therefore, for higher workloads the total processing rate should never decrease. Concavity of the processing rate functions follows because backends usually exhibit decreasing returns to scale because of congestion of resources and increased contention. In the context of LLMs, throughput has been observed to increase with more requests, even for a single GPU, as computational overheads are amortized, improving efficiency. However, as explained in \citet{kwon2023efficient}, this gain diminishes as the system becomes constrained by GPU memory limits needed for the key-value cache of each request, leading to the concave throughput relationship. Twice differentiability is a technical assumption made to simplify the analysis. We suspect our results hold without this smoothness assumption. We refer the reader to \citet{zhang2024optimal} for additional motivation for this assumption.

\paragraph{Routing policies} We consider probabilistic or proportional routing policies that at time $t$ route a fraction $x_{ij}(t)$ of the incoming flow from frontend $i \in \FS$ to backend $j \in \BS$. The set of feasible routing probabilities for frontend $i$ is a probability simplex in which we restrict to zero the flow of arcs not in $\AS$, i.e.,
\[
    \Delta_i = \left\{\bx \in \mathbb{R}^{|\BS|}: x_j \geq 0, x_j = 0, \forall j \notin \BS_i, \sum\nolimits_{j\in\BS_i} x_j = 1 \right\}\,.
\]
Therefore, at every time $t$ we should have that $\bx_i(t) = (x_{ij}(t))_{j \in \BS} \in \Delta_i$.

The dynamics for $t > 0$ are given as follows
\begin{equation}\label{eq:dynamics-workloads}
\begin{split}
     \frac {d} {dt} N_j(t) &= \sum_{i\in\FS_j} \lambda_i x_{ij}(t - \tau_{ij}) - \ell_j (N_j(t))\,, \quad \forall j \in \BS\,, \\
    \frac {d} {dt} N_{ij}(t) &= \lambda_i \left( x_{ij}(t) - x_{ij}(t - \tau_{ij})\right)\,, \quad\forall (i,j) \in \AS\,.
\end{split}
\end{equation}
The initial conditions for times $t \le 0 $ are exogenously given. At each backend, there is an inflow of requests that originates from its connected frontends with a delay that takes into account travel times and an outflow based on the processing rate of the backend. The dynamics for the amount of requests in a link connecting a frontend and a backend leverage that requests have deterministic travel times. The objective is to minimize the long-run average amount of requests in the system, which by Little's Law, is equivalent to minimizing the average latency of requests. That is, the performance of a routing policy as measures as follows
\[
    \operatorname{ALG} = \lim_{T \rightarrow \infty} \frac 1 T \int_{0}^T \left( \sum_{j \in \BS} N_j(t) + \sum_{(i,j) \in \AS} N_{ij}(t) \right) dt\,.
\]
To simplify the exposition, our analysis does not include the network latency of requests traveling back from backends to frontends, which is nevertheless easy to incorporate.

\subsection{Optimal Static Routing}

We benchmark our algorithms against the optimal static routing policy that minimizes the average amount of requests in the system. 

The \emph{optimal static routing} problem can be formulated as
\begin{equation}\label{eq:opt}
\begin{split}
    \operatorname{OPT} = \min_{\bN,\bx} &\quad \sum_{j \in \BS} N_j + \sum_{(i,j) \in \AS} \lambda_i x_{ij} \tau_{ij} \\
    \text{s.t.}
    & \quad \sum_{i \in \FS_j} \lambda_i x_{ij} = \ell_j(N_j)\,, \forall j \in \BS\,,\\
    & \quad  \sum_{j \in \BS_i} x_{ij} = 1\,, \quad \forall i \in \FS\,,\\
    & \quad x_{ij} \geq 0\,, \quad \forall (i,j) \in \AS\,.
\end{split}
\end{equation}
The objective has two terms. The first term captures the average amount of requests in the backend and the second term captures the average amount of requests traveling from each frontend to a backend, which, by Little's Law, is given by the product of the flow and the travel time.  The first constraint says that, at equilibrium, the processing rate of each backend $j \in \BS$ should be equal to the flow of incoming requests. If inflows exceed outflows at a backend, then the workload of a backend could eventually explode. The second constraint guarantees that all requests are routed to a backend.

The objective value of the optimal static routing problem provides a lower bound to the performance of any feasible policy. To provide this result, we restrict attention to \emph{time-average convergent policies}, i.e., policies for which the limits $\int_0^T N_j(t) dt / T$ and $\int_0^T N_{ij}(t) dt/T$ as $T\rightarrow\infty$ exist and are finite, as otherwise their performance is not well defined. We prove this result by using Jensen's inequality to establish that the time-average workloads and routing probabilities are feasible for a relaxed version of \eqref{eq:opt} in which the inflow to each backend is at most its outflow. All proofs are located in the appendix.

\begin{lem}\label{lem:opt-bound}
    For every time-average convergent policy we have that $\operatorname{ALG} \ge \operatorname{OPT}$.
\end{lem}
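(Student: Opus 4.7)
The plan is to take long-run time-averages of the dynamics in \eqref{eq:dynamics-workloads}, use nonnegativity of the workloads together with Jensen's inequality on the concave processing rate functions to show the averaged quantities form a feasible point for a relaxation of \eqref{eq:opt} in which the equality constraint $\sum_{i\in\FS_j}\lambda_i x_{ij} = \ell_j(N_j)$ is weakened to $\le$, and then to observe that this relaxation has the same optimal value as \eqref{eq:opt} because $\ell_j$ is strictly increasing.

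First I would introduce the time-averages $\bar N_j = \lim_{T\to\infty}\tfrac1T\int_0^T N_j(t)\,dt$ and $\bar N_{ij} = \lim_{T\to\infty}\tfrac1T\int_0^T N_{ij}(t)\,dt$, which exist and are finite by assumption, together with $\bar x_{ij} = \lim_{T\to\infty}\tfrac1T\int_0^T x_{ij}(t)\,dt$, passing to a subsequence if necessary since $x_{ij}\in[0,1]$. The delayed quantity $x_{ij}(t-\tau_{ij})$ has the same time-average $\bar x_{ij}$, because its integral differs from that of $x_{ij}(t)$ only by a boundary term of size at most $\tau_{ij}$, which is negligible after dividing by $T$. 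Summing over $j\in\BS_i$ shows that the averaged routing vector $(\bar x_{ij})_{j\in\BS}$ lies in $\Delta_i$.

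Next, integrating the first ODE in \eqref{eq:dynamics-workloads} over $[0,T]$ and dividing by $T$ yields
\[
\tfrac{N_j(T)-N_j(0)}{T} \;=\; \tfrac1T\!\int_0^T\sum_{i\in\FS_j}\lambda_i x_{ij}(t-\tau_{ij})\,dt \;-\; \tfrac1T\!\int_0^T \ell_j(N_j(t))\,dt.
\]
A short argument using $N_j\ge 0$ and finiteness of the Cesaro average shows $\liminf_{T\to\infty} N_j(T)/T = 0$: any sustained linear growth of $N_j(T)$ along a tail would force $\tfrac1T\!\int_0^T N_j\,dt$ to diverge. Evaluating the displayed identity along such a subsequence makes the left-hand side vanish and both integrals on the right converge, producing the averaged flow balance $\sum_{i\in\FS_j}\lambda_i \bar x_{ij} = \overline{\ell_j(N_j)}$. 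Jensen's inequality applied to the concave $\ell_j$ then gives $\overline{\ell_j(N_j)} \le \ell_j(\bar N_j)$, whence the relaxed constraint $\sum_{i\in\FS_j}\lambda_i \bar x_{ij} \le \ell_j(\bar N_j)$ holds at every backend. Integrating the second ODE analogously yields $N_{ij}(t) = \lambda_i\!\int_{t-\tau_{ij}}^t x_{ij}(s)\,ds + C_{ij}$ with $C_{ij}\ge 0$ under physically meaningful initial conditions, and a Fubini computation gives $\bar N_{ij} \ge \lambda_i\tau_{ij}\bar x_{ij}$.

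Substituting these bounds into the expression for $\operatorname{ALG}$ gives
\[
\operatorname{ALG} \;=\; \sum_{j\in\BS}\bar N_j + \sum_{(i,j)\in\AS}\bar N_{ij} \;\ge\; \sum_{j\in\BS}\bar N_j + \sum_{(i,j)\in\AS}\lambda_i\tau_{ij}\bar x_{ij},
\]
and the right-hand side is the objective of \eqref{eq:opt} evaluated at the point $(\bar N_j, \bar x_{ij})$, which is feasible for the relaxation. I would close by noting that the relaxation has the same optimum as \eqref{eq:opt}: since $\ell_j$ is strictly increasing and the objective is increasing in $N_j$, any relaxed-feasible point can be improved by lowering each $N_j$ until the first constraint binds, yielding a point feasible for \eqref{eq:opt} with no larger objective. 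Hence $\operatorname{ALG} \ge \operatorname{OPT}$. The main obstacle I expect is the careful justification of the boundary-term vanishing and the subsequence selection for $N_j(T)/T\to 0$ in the presence of delays and possibly discontinuous routing; these follow from nonnegativity and the assumed convergence of time-averages, but the bookkeeping across multiple frontends, links, and heterogeneous delays must be organized cleanly so as not to tacitly strengthen the class of admissible policies.
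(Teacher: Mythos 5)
Your proposal is correct and follows essentially the same route as the paper's proof: pass to the relaxed program with $\sum_{i\in\FS_j}\lambda_i x_{ij}\le\ell_j(N_j)$ (which has the same value since $\ell_j$ is increasing and the objective is increasing in $N_j$), establish rate stability $N_j(T)/T\to 0$ from finiteness of the Ces\`aro averages, and use Jensen's inequality on the concave $\ell_j$ to show the time-averaged state is feasible for the relaxation. The only cosmetic differences are that the paper invokes Little's Law to get $\bar N_{ij}=\lambda_i\bar x_{ij}\tau_{ij}$ exactly where you integrate the link dynamics directly to get the one-sided bound $\bar N_{ij}\ge\lambda_i\tau_{ij}\bar x_{ij}$ (which suffices, and makes the dependence on consistent initial conditions explicit), and that you work with $\liminf$ along subsequences where the paper proves the full limit; neither changes the substance of the argument.
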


The following result provides the first-order  conditions for the optimal static routing problem, which play a key role in the design of our algorithm and the ensuing analysis. We prove the result by invoking the Karush-Kuhn-Tucker conditions of the static routing problem.

\begin{lem}[First-Order Conditions]\label{lem:foc}
At an optimal solution $(\bN^*, \bx^*)$, for each frontend  $i \in \FS$ there exists a constant $c_i > 0$ such that for backend $j \in \BS_i$ we have $1/\ell_j'(N_j^*) + \tau_{ij} \ge c_i$ with equality holding if $x_{ij}^* >0$.
\end{lem}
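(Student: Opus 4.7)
The plan is to apply the Karush--Kuhn--Tucker conditions, but first to reformulate \eqref{eq:opt} so that convexity and a constraint qualification are transparent. Since $\ell_j$ is strictly increasing and concave by Assumption~\ref{assume:processing-rate}, its inverse $\ell_j^{-1}$ is strictly increasing and convex. Using the backend balance constraint to substitute $N_j = \ell_j^{-1}\bigl(\sum_{i \in \FS_j} \lambda_i x_{ij}\bigr)$ eliminates $\bN$, leaving the problem of minimizing
\[
F(\bx) \;=\; \sum_{j \in \BS} \ell_j^{-1}\!\Bigl(\sum_{i \in \FS_j} \lambda_i x_{ij}\Bigr) \;+\; \sum_{(i,j) \in \AS} \lambda_i \tau_{ij} x_{ij}
\]
subject to the linear constraints $\bx_i \in \Delta_i$. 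Because $F$ is convex and the remaining constraints are linear (so Slater's condition holds trivially), the KKT conditions are both necessary and sufficient at the optimum $\bx^*$, and the corresponding $\bN^*$ is recovered from the substitution.

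Next I would differentiate. By the chain rule together with $(\ell_j^{-1})'(y) = 1/\ell_j'(N_j)$ at $y = \ell_j(N_j^*)$, one obtains $\partial F/\partial x_{ij} = \lambda_i\bigl(1/\ell_j'(N_j^*) + \tau_{ij}\bigr)$. Introducing multipliers $\tilde c_i \in \mathbb{R}$ for $\sum_{j \in \BS_i} x_{ij} = 1$ and $\nu_{ij} \ge 0$ for $x_{ij} \ge 0$, stationarity and complementary slackness give
\[
\lambda_i\Bigl(\frac{1}{\ell_j'(N_j^*)} + \tau_{ij}\Bigr) \;=\; \tilde c_i + \nu_{ij}, \qquad \nu_{ij}\, x_{ij}^* = 0, \qquad j \in \BS_i.
\]
Dividing by $\lambda_i > 0$ and setting $c_i \coloneqq \tilde c_i/\lambda_i$ yields $1/\ell_j'(N_j^*) + \tau_{ij} = c_i + \nu_{ij}/\lambda_i \ge c_i$, with equality whenever $\nu_{ij} = 0$, i.e.\ whenever $x_{ij}^* > 0$.

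Finally I would argue $c_i > 0$: since $\sum_{j \in \BS_i} x_{ij}^* = 1$, there is at least one $j \in \BS_i$ with $x_{ij}^* > 0$, and for that $j$ the equality above gives $c_i = 1/\ell_j'(N_j^*) + \tau_{ij}$, which is strictly positive because $\ell_j' > 0$ by strict monotonicity and $\tau_{ij} > 0$ by assumption. None of the individual steps is particularly delicate; the one point that genuinely benefits from a careful setup is the convex reformulation, which sidesteps having to check a constraint qualification on the nonlinear equality constraints of \eqref{eq:opt} and simultaneously certifies that the stated first-order conditions are sufficient, not merely necessary, for optimality.
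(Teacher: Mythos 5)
Your proposal is correct and follows essentially the same route as the paper's proof: eliminate $\bN$ via $N_j = \ell_j^{-1}\bigl(\sum_{i\in\FS_j}\lambda_i x_{ij}\bigr)$, observe the resulting problem is convex with linear constraints, and read the claim off the KKT stationarity and complementary slackness conditions. The only cosmetic differences are that you certify convexity via convexity of $\ell_j^{-1}$ rather than monotonicity of the derivative, and you rescale the multiplier by $\lambda_i$ at the end instead of building the factor into the Lagrangian.
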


The first-order conditions state that, for each frontend $i$, pushing an additional unit of flow through any outgoing arc $(i,j) \in \AS$ that has positive flow at an optimal solution must lead to the same marginal amount of requests in the system. Otherwise, we could improve the solution by increasing the flow of the arc with the lowest marginal impact on the objective. 
This ``marginal impact'' of an additional unit of flow, which we denote by $c_i$, is equal to the quantity $1/\ell_j'(N_j^*) + \tau_{ij}$. Intuitively, $1/\ell_j'(N_j^*)$ captures the marginal amount of requests in backend $j$ resulting from an additional unit of flow, while $\tau_{ij}$ captures the marginal amount of requests traveling between $i$ and $j$ when an additional unit of flow travels through this arc. For arcs that have no flow at an optimal solution, the marginal impact on the objective is naturally greater than $c_i$ as pushing flow through these is suboptimal.

The constants $c_i > 0$ associated to each frontend $i \in \FS$ are the Lagrange multipliers of the frontend flow balance constraint $\sum_{j \in \BS_i} \lambda_i x_{ij} = \lambda_i$ and measures the marginal impact in the objective, whose units are requests, of increasing the arrival rate of frontend $i$ by one request per second. The units of $c_i$ are in seconds.

\section{{\DLBlong} Algorithm}

Our goal is to design an algorithm in which  frontends make independent decisions. Our algorithm only requires communication between frontends and backends, which is essential to route requests but no additional coordination between frontends, except by observing the impact other frontends have on the backends. We refer to our algorithm as {\DLBlong} or, hereafter, {\DLBshort}.

{\DLBshort} is a decentralized first-order algorithm in which at time $t$ each frontend $i \in \FS$ computes an ``approximate gradient'' $\boldsymbol g_i(t) \in \mathbb R^{|\BS|}$ of the objective and then updates the routing probabilities $\bx_{i}(t) \in \mathbb R^{|\BS|}$ by moving in a direction opposite to the gradient using projected gradient descent to guarantee feasibility of the routing probabilities. Motivated by the first-order conditions in Lemma~\ref{lem:foc}, the gradient of with respect to $x_{ij}$ at time $t$ for $(i,j) \in \AS$ is approximated by
\[
    g_{ij}(t) = \frac 1 {\ell_j'(N_j(t-\tau_{ij}))} + \tau_{ij}\,,
\]
and $g_{ij}(t) = \infty$ for $(i,j) \not\in \AS$. The latter expression accounts for the feedback delay from backend $j$ to frontend $i$. We provide some intuition for this choice in Section~\ref{sec:motivation}. Note that backends can communicate to frontends either their gradients or their workloads and let frontends compute gradients. The former is advantageous because frontends do not need to know the processing rate functions.

Before introducing the algorithm, we need to specify the projection to the probability simplex for each frontend $i\in\FS$. First, let $T_{\Delta_i}(\bx_i)$ be the tangent cone of $\Delta_i$ at $\bx_i$, which is given by
\[
 T_{\Delta_i}(\bx_i) = \left\{\bv \in \mathbb{R}^{|\BS|}: \sum_{j\in \BS_i} v_j = 0,  v_j \geq 0  \text{ if } x_{ij}  = 0, v_j = 0 \text{ for all } j \not\in \BS_i \right\}\,.
 \]
Intuitively, the tangent cone captures directions along which the frontend can update the routing probabilities while maintaining feasibility. The components of a feasible direction $\bv \in T_{\Delta_i}(\bx_i)$ should sum up to zero to satisfy the constraint that probabilities sum up to one. Moreover, for backends whose probabilities are at zero, the corresponding component of the direction should be non-negative to preserve the non-negativity constraint. Second, given a vector $\boldsymbol{z} \in \mathbb{R}^{|\BS|}$, we define the Euclidean projection to a set $\mathcal C$ as the point in the set which is closest to $\boldsymbol{z}$ with respect to the Euclidean norm, and is given by
\[
\Pi_{\mathcal C}(\boldsymbol{z}) = \arg\min_{\bv \in \mathcal C} \|\bv - \boldsymbol{z}\|_2\,.
\]
For a closed and convex set $\mathcal C$, the projection always exists and is unique.
If the set $\mathcal C$ has a constraint $v_j = 0$, we ignore the corresponding component $(v_j - z_j)^2$ in the norm. We introduce this convention to allow gradients to take value of infinity for arcs that are not in the network.

The routing probabilities of our algorithm are updated as follows
\begin{equation}\label{eq:dynamics-delay-general}
\begin{split}
    \frac {d} {dt} \bx_{i}(t) &= \Pi_{T_{\Delta_i}(\bx_i(t))} \left(-\eta_i \boldsymbol{g}_i(t) \right)\,,
\end{split}
\end{equation}
where $\boldsymbol{g}_i(t) = (g_{ij}(t))_{j\in\BS}$ and $\eta_i > 0$ is the step-size or gain of frontend $i\in\mathcal F$. The latter is the continuous-time version of discrete-time projected gradient descent (see, e.g., chapter 3.5 of \citealt{aubincellina1984differential} and chapter 2 of \citealt{borkar2008stochastic}) in which routing probabilities would be updated recursively using the formula 
\begin{align}\label{eq:update-discrete}
    \bx_{i}(t+\delta t) = \Pi_{\Delta_i}\left(\bx_i(t)-\eta_i \boldsymbol{g}_i(t) \right)\,,
\end{align} 
with $\delta t > 0$ denoting the time between decision epochs. The latter update rule, which we adopt in our numerical experiments, can be used to  implement {\DLBshort} in discrete-time systems.

Some observations are in order. First, at each time step, frontend $i \in \FS$ needs to project gradients to the tangent cone of the probability simplex at $\bx_i(t)$. In Appendix~\ref{app:projection}, we give an efficient algorithm to compute the projection in $O(|\BS| \log |\BS|)$ steps. Second, the step-sizes capture how fast frontends update their routing decisions and their choice will play a key role in the stability analysis as large steps-sizes can lead to oscillatory behavior. Finally, by projecting gradients to the tangent cone of the feasible set, flow balance is always satisfied at the frontends, i.e., all requests are immediately routed to a backend.

Equations \eqref{eq:dynamics-workloads} and \eqref{eq:dynamics-delay-general} are a system of differential equations that determine the evolutions of the requests and routing probabilities under our algorithm. The following result shows this system admits a unique solution.

\begin{lem}\label{lem:existence}
There exists a unique absolutely continuous solution $(\bN(t), \bx(t))$ to the system of differential equations \eqref{eq:dynamics-workloads} and \eqref{eq:dynamics-delay-general} when the initial conditions $(\bN(t), \bx(t))$ for all $t \le0$ are absolutely continuous.
\end{lem}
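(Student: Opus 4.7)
The plan is to apply the method of steps for delay differential equations, leveraging that all network latencies are bounded below by $\tau_{\min} := \min_{(i,j)\in\AS} \tau_{ij} > 0$ (the graph is finite and all $\tau_{ij}$ are strictly positive). On any interval $[k\tau_{\min}, (k+1)\tau_{\min}]$, every delayed term $N_j(t-\tau_{ij})$ and $x_{ij}(t-\tau_{ij})$ appearing in \eqref{eq:dynamics-workloads} and \eqref{eq:dynamics-delay-general} has already been determined either by the prescribed absolutely continuous initial history (for $k=0$) or by the solution constructed on previous steps (for $k \ge 1$). Consequently, the system decouples on each step into three subproblems that can be solved sequentially.

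On each step I would construct the solution in three stages. First, since $\ell_j' > 0$ is $C^1$ by Assumption~\ref{assume:processing-rate} and $N_j$ is absolutely continuous on the preceding interval, the approximate gradient $g_{ij}(t) = 1/\ell_j'(N_j(t-\tau_{ij})) + \tau_{ij}$ is a known absolutely continuous function of $t$ on the step. The routing equation then reads $\dot{\bx}_i(t) = \Pi_{T_{\Delta_i}(\bx_i(t))}(-\eta_i \boldsymbol g_i(t))$, a projected dynamical system on the compact convex simplex $\Delta_i$ driven by a time-varying field that does \emph{not} depend on the state. Existence and uniqueness of an absolutely continuous solution follow from the classical theory of sweeping processes and differential inclusions with maximal monotone operators (e.g., Chapter 5 of \citealt{aubincellina1984differential}), via the equivalent formulation of the dynamics as a differential inclusion with the negative normal cone to $\Delta_i$ on the right-hand side. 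Second, with $\bx_i$ in hand, the traveling-request equation $\dot N_{ij} = \lambda_i(x_{ij}(t) - x_{ij}(t-\tau_{ij}))$ is solved by direct integration and yields an absolutely continuous $N_{ij}$. Third, the backend workload equation $\dot N_j = \sum_{i\in\FS_j}\lambda_i x_{ij}(t-\tau_{ij}) - \ell_j(N_j)$ has a known integrable forcing and a locally Lipschitz drain $\ell_j$ (since $\ell_j \in C^2$), so the Carath\'eodory existence and uniqueness theorem yields an absolutely continuous $N_j$.

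Since each step advances time by $\tau_{\min}$, finitely many iterations cover any bounded interval, and concatenating the per-step solutions preserves absolute continuity (values match at step boundaries by construction, and the one-sided derivatives at these boundaries are controlled). To guarantee the construction extends to all of $[0,\infty)$, I would note that $\bx_i(t) \in \Delta_i$ for all $t$ (by construction of the projection), $|\dot N_{ij}| \le 2\lambda_i$ forces at most linear growth of $N_{ij}$, and $\dot N_j \le \sum_{i\in\FS_j}\lambda_i$ bounds $N_j$ linearly as well, so no blow-up occurs.

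The main obstacle is the first stage: the right-hand side of the projected dynamical system for $\bx_i$ is discontinuous in $\bx_i$ on the boundary of $\Delta_i$, so classical Picard-Lindel\"of does not directly apply. The decoupling afforded by the method of steps is precisely what makes this tractable, because the driving field $\boldsymbol g_i(t)$ is state-independent on each step, placing us in a well-understood subcase of the sweeping process for which existence and uniqueness of absolutely continuous trajectories are standard.
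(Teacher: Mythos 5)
Your proposal is correct and follows essentially the same route as the paper: the method of steps over intervals of length $\min_{(i,j)\in\AS}\tau_{ij}$ to decouple the delayed terms, Picard--Lindel\"of/Carath\'eodory for the workload equation, and the reformulation of the projected dynamics as a differential inclusion with the normal cone (equivalently, the subdifferential of the indicator function of $\Delta_i$), handled by the theory of maximally monotone operators. The only differences are cosmetic---you invoke Carath\'eodory where the paper uses Picard--Lindel\"of, and you spell out the $N_{ij}$ integration and the no-blow-up argument that the paper leaves implicit.
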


Showing existence of a solution is challenging because the projection operator is discontinuous. Consider a simple network with one frontend and two backends. When the routing decisions are interior, the differential equation for $\bx_1(t)$ is non-linear but Lipschitz continuous. However, when one component of $\bx_1(t)$ becomes zero, the right-hand side of the differential equation jumps to prevent the component from going below zero. Moreover, the dynamics are complicated by the delays. All in all, dynamics are governed by a delayed differential equation with discontinuous right-hand side. 

We combine two techniques to prove the result. First, we use the method of steps to handle delays, which breaks the delayed differential equation into a sequence of differential equations without delays (see, e.g., chapter 1.2 from \citealt{hale2013introduction}). This allows us to analyze the differential equation for the workloads and routing probabilities separately. Second, we use the theory of differential inclusions to handle the discontinuity of the projection operator and exploit the fact that the projection is a maximally monotone operator (see, e.g., chapter 3 from \citealt{aubincellina1984differential}). To the best of our knowledge, this is the first existence result for this kind of dynamical system.

\subsection{Motivation}\label{sec:motivation}

We next provide a heuristic derivation of our algorithm. Suppose we fix some routing decisions $x_{ij}$ for long enough time so that backends stabilize. When flow balance is satisfied at backend $j \in \BS$, the equilibrium workloads should be equal to
\[
    N_j(\bx) := \ell_j^{-1}\left(\sum\nolimits_{i \in \FS_j} \lambda_i x_{ij}\right)\,,
\]
where the inverse exists by Assumption~\ref{assume:processing-rate}.
Recall that our objective is the average amount of requests in the system (traveling and in service), which is equal to $\sum_{j \in \BS} N_j(\bx) + \sum_{(i,j) \in \AS} \lambda_i x_{ij} \tau_{ij}$.
Using the implicit function theorem, we obtain that the partial derivative of the objective with respect to $(i,j) \in \AS$ is
\[
    \frac{\partial N_j(\bx)}{\partial x_{ij}} + \lambda_i \tau_{ij} =
    \frac {\lambda_i} {\ell_j'(N_j(\bx))} + \lambda_i \tau_{ij}\,.
\]
Under Assumption~\ref{assume:processing-rate}, the objective function is convex in $\bx$ and an optimal solution can be computed using a first-order method such as projected gradient descent. This entails each frontend $i \in \FS$ computing a ``true'' gradient of the objective with respect to its routing probabilities $\bx_i \in \mathbb R^{|\BS|}$, moving in the opposite direction of these gradient, and projecting routing decisions to the probability simplex $\Delta_i$.

In practice, we cannot implement this algorithm because ``true'' gradients of the objective are not available since we do not access to the counterfactual equilibrium workloads $N_j(\bx)$. Computing $N_j(\bx)$ requires knowing the routing decisions of all other frontends connected to a backend and global information of the processing rate function to compute the inverse $\ell_j^{-1}$. 

The main idea of {\DLBshort} is to have each frontend $i \in \FS$ approximate $N_j(\bx)$ by the most recent observed workload of the backend at time $t$, which is given by $N_j(t - \tau_{ij})$ because of network latencies. Moreover, the algorithm uses local information of the processing rate functions: it only requires the derivative at the current workloads, which can be efficiently communicated and estimated. Even though {\DLBshort} computes gradients using an approximation of the equilibrium, we shall show that the algorithm locally converges to an optimal solution in a distributed fashion.

\subsection{Optimality of Equilibrium Points}

We next prove that all equilibrium points of the dynamical system~\eqref{eq:dynamics-workloads}~and~\eqref{eq:dynamics-delay-general} are optimal for \eqref{eq:opt}. For a routing vector $\bx$ we denote the set of \emph{active arcs} by $\mathcal A(\bx) = \left\{ (i,j) \in \mathcal A : x_{ij} > 0\right\}$, which captures arcs with strictly positive flow at $\bx$.

The following lemma shows that if the projection is zero, the gradient of active arcs should be equalized and inactive arcs should have higher gradients. Because gradient descent aims to make this projection zero, it is naturally seeking points that satisfy the first-order conditions stated in Lemma~~\ref{lem:foc}. If inactive arcs had lower gradients, these would be attractive destinations for the frontend and the algorithm would route requests there instead.

\begin{lem}\label{lem:projection-is-zero} Fix a frontend $i$ and a routing probability vector $\bx_i \in \Delta_i$. Suppose that $\boldsymbol 0 = \Pi_{T_{\Delta_i}(\bx_i)}(-\eta_i \boldsymbol g_i)$. Then, there exists some $c_i \in \mathbb R$ such that $g_{ij} = c_i$ for all $(i,j) \in \mathcal A(\bx)$ and $g_{ij} \ge c_i$ otherwise.
\end{lem}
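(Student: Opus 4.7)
The plan is to apply the variational characterization of the projection onto the convex cone $T_{\Delta_i}(\bx_i)$. Since a tangent cone of a convex set is itself a convex cone, the identity $\boldsymbol{0} = \Pi_{T_{\Delta_i}(\bx_i)}(-\eta_i \boldsymbol{g}_i)$ is equivalent to $-\eta_i \boldsymbol{g}_i$ lying in the polar cone of $T_{\Delta_i}(\bx_i)$. Concretely this reads $\langle -\eta_i \boldsymbol{g}_i, \bv\rangle \le 0$ for every $\bv \in T_{\Delta_i}(\bx_i)$, and since $\eta_i > 0$ the lemma reduces to showing that
$\langle \boldsymbol{g}_i, \bv\rangle \ge 0$ for all $\bv \in T_{\Delta_i}(\bx_i)$
implies the stated KKT-type conditions. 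Under the convention adopted just before Lemma~\ref{lem:projection-is-zero} that components with $v_j = 0$ for $j \notin \BS_i$ are dropped, this inner product is $\sum_{j \in \BS_i} g_{ij} v_j$, so the infinite values of $g_{ij}$ outside $\AS$ never actually appear.

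To derive the equality $g_{ij} = c_i$ on active arcs, I would take any two arcs $(i,j), (i,k) \in \mathcal A(\bx)$ and test with the direction $\bv$ whose only non-zero components are $v_j = 1$ and $v_k = -1$. This $\bv$ lies in $T_{\Delta_i}(\bx_i)$: its components sum to zero, and the signs at $j$ and $k$ are unrestricted because $x_{ij}, x_{ik} > 0$. Since $-\bv$ is feasible by the same reasoning, the two inequalities $\langle \boldsymbol{g}_i, \pm \bv\rangle \ge 0$ combine to $g_{ij} = g_{ik}$. Setting $c_i$ equal to this common value (well defined because $\bx_i \in \Delta_i$ forces $\mathcal A(\bx) \cap (\{i\}\times\BS_i)$ to be non-empty) establishes the first half of the claim.

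For an inactive arc, I would fix any $k \in \BS_i$ with $x_{ik} = 0$ together with an active arc $(i,j) \in \mathcal A(\bx)$ and test with $\bv = \boldsymbol{e}_k - \boldsymbol{e}_j$ (zero elsewhere). This direction is feasible for the tangent cone: $v_k = 1 \ge 0$ is permitted since $x_{ik} = 0$, $v_j = -1$ is permitted since $x_{ij} > 0$, and the components sum to zero. The variational inequality $\langle \boldsymbol{g}_i, \bv\rangle \ge 0$ then yields $g_{ik} - g_{ij} \ge 0$, i.e., $g_{ik} \ge c_i$. For $k \notin \BS_i$ the inequality $g_{ik} \ge c_i$ is trivial because $g_{ik} = \infty$ by convention.

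The argument is essentially elementary once the polar-cone characterization is in hand, so I do not expect a significant technical obstacle. The only places needing care are the degenerate regimes, namely $|\BS_i| = 1$, in which $T_{\Delta_i}(\bx_i) = \{\boldsymbol{0}\}$ and the inactive-arc statement is vacuous, and the bookkeeping around the value $g_{ij} = \infty$ for $(i,j) \notin \AS$; both are handled cleanly by the conventions introduced earlier in the paper.
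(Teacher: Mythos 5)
Your proposal is correct and follows essentially the same route as the paper: both reduce the hypothesis to the variational inequality $\langle \boldsymbol{g}_i, \bv\rangle \ge 0$ for all $\bv \in T_{\Delta_i}(\bx_i)$ (your polar-cone phrasing is equivalent to the paper's first-order optimality condition with $\bv = \boldsymbol{0}$), and both then test with the directions $\pm(\boldsymbol{e}_j - \boldsymbol{e}_{j'})$ for pairs of active arcs and $\boldsymbol{e}_{j'} - \boldsymbol{e}_j$ for an inactive $j'$ against an active $j$. Your extra remarks on the $|\BS_i|=1$ case and the $g_{ij}=\infty$ convention are harmless and consistent with the paper's setup.
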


We can leverage the previous lemma to show that every equilibrium point of the algorithm is an optimal solution to the static routing problem. Therefore, our algorithm can only converge to optimal solutions.

\begin{prop}\label{prop:eq-is-opt} Every equilibrium point of {\DLBshort} is optimal for \eqref{eq:opt}.
\end{prop}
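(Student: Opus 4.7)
The plan is to unpack what ``equilibrium point'' means for the coupled dynamics \eqref{eq:dynamics-workloads}--\eqref{eq:dynamics-delay-general} and then verify that each equilibrium satisfies the KKT conditions of the convex problem \eqref{eq:opt}, which (by convexity) are sufficient for optimality.

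First I would fix an equilibrium $(\bN^\star, \bx^\star)$ of the system. Since all quantities are time-independent at equilibrium, the delayed values coincide with the current ones: $x_{ij}^\star(t - \tau_{ij}) = x_{ij}^\star$ and $N_j^\star(t - \tau_{ij}) = N_j^\star$. Setting $\frac{d}{dt} N_j(t) = 0$ in \eqref{eq:dynamics-workloads} immediately yields the flow balance constraint $\sum_{i \in \FS_j} \lambda_i x_{ij}^\star = \ell_j(N_j^\star)$ at every backend $j \in \BS$, while feasibility $\bx_i^\star \in \Delta_i$ (non-negativity and summation to one) is guaranteed structurally by the projection to the tangent cone in \eqref{eq:dynamics-delay-general}. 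The delayed-flow equation for $N_{ij}$ is trivially satisfied at equilibrium.

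Next, I would turn to the frontend dynamics. Setting $\frac{d}{dt} \bx_i(t) = 0$ in \eqref{eq:dynamics-delay-general} gives $\Pi_{T_{\Delta_i}(\bx_i^\star)}(-\eta_i \boldsymbol g_i^\star) = \boldsymbol 0$, where $g_{ij}^\star = 1/\ell_j'(N_j^\star) + \tau_{ij}$ for $(i,j)\in \AS$. Applying Lemma~\ref{lem:projection-is-zero}, there is a constant $c_i \in \mathbb R$ such that $g_{ij}^\star = c_i$ for every active arc $(i,j) \in \mathcal A(\bx^\star)$ and $g_{ij}^\star \ge c_i$ otherwise. Positivity of $c_i$ follows from Assumption~\ref{assume:processing-rate} (which gives $\ell_j'(N_j^\star) > 0$) together with $\tau_{ij} > 0$, so the constants $c_i$ coincide, in both sign and content, with the ones in Lemma~\ref{lem:foc}.

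Finally I would invoke convexity to conclude. The problem \eqref{eq:opt}, after substituting the flow balance constraint to eliminate $N_j$, becomes a minimization of $\sum_j \ell_j^{-1}\!\bigl(\sum_i \lambda_i x_{ij}\bigr) + \sum_{(i,j)} \lambda_i x_{ij} \tau_{ij}$ over the product of simplices $\prod_i \Delta_i$; this is convex because each $\ell_j^{-1}$ is convex under Assumption~\ref{assume:processing-rate} (inverse of a strictly increasing concave function) and the linear term is affine. Therefore the KKT conditions are both necessary and sufficient for optimality, and $(\bN^\star, \bx^\star)$ satisfies them, so it is optimal for \eqref{eq:opt}.

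The proof is essentially a bookkeeping exercise once Lemma~\ref{lem:projection-is-zero} is in hand; the only subtle point I would double-check is the equivalence between the stationarity condition produced by Lemma~\ref{lem:projection-is-zero} and the KKT conditions of \eqref{eq:opt} in its original $(\bN,\bx)$ formulation, which is where the implicit-function identity $\partial N_j / \partial x_{ij} = \lambda_i / \ell_j'(N_j)$ from Section~\ref{sec:motivation} provides the translation.
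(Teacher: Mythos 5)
Your proposal is correct and follows essentially the same route as the paper: set the temporal derivatives to zero to obtain flow balance and a vanishing projection, invoke Lemma~\ref{lem:projection-is-zero} to recover the first-order conditions of Lemma~\ref{lem:foc}, and conclude by convexity of \eqref{eq:opt}. The extra detail you supply on the convexity of $\ell_j^{-1}$ and the implicit-function translation is exactly what the paper delegates to the proof of Lemma~\ref{lem:foc}, so there is no gap.
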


As a corollary, we obtain that at an equilibrium point, we should have that
\begin{align}
    0 &= \sum_{i\in\FS_j} \lambda_i x_{ij}^* - \ell_j (N_j^*)\,, \quad\forall j \in \BS\,, \label{eq:equilibrium-N}\\
    c_i &= \frac 1 {\ell_j'(N_j^*)} + \tau_{ij}\,,\quad\forall (i,j) \in \AS(\bx^*)\,,\label{eq:equilibrium-x}
\end{align}
for some constants $c_i > 0$ in $i \in \FS$. The second condition follows from the fact that, for each frontend, gradients are equal across active arcs.

\section{Local Stability Analysis}

We let $\|(\bN, \bx)\|_\infty = \max\left( \max_{j \in \BS} |N_j|, \max_{(i,j)\in \AS} |x_{i,j}|\right)$ be the infinity norm of a point $(\bN, \bx)$ and $(\bN^*, \bx^*)$ an optimal solution to $\operatorname{OPT}$. We present some definitions for the stability of the algorithm.

\begin{defn}
The solution of the dynamical system \eqref{eq:dynamics-delay-general} is \emph{locally stable} if for every $\epsilon > 0$ there exists some $\delta > 0$ such that if the initial conditions  satisfy $\|(\bN(t),\bx(t)) - (\bN^*, \bx^*) \|_{\infty} < \delta$ for all $t\le 0$, then for every $t\ge 0$ we have $\|(\bN(t),\bx(t)) - (\bN^*, \bx^*) \|_{\infty} < \epsilon$. Moreover, the solution is \emph{locally asymptotically stable} if it is locally stable and there exists some $\delta > 0$ such that if the initial conditions  satisfy $\|(\bN(t),\bx(t)) - (\bN^*, \bx^*) \|_{\infty} < \delta$ for all $t\le 0$, then $\lim_{t\rightarrow\infty} \|(\bN(t),\bx(t)) - (\bN^*, \bx^*) \|_{\infty} = 0$.
\end{defn}

Local stability implies that if the algorithm starts close to an optimal solution, it will remain close forever. Local asymptotic stability implies that if the algorithm starts close to an optimal solution it will remain close and converge to an optimal solution.

We analyze the stability of the algorithm under the following assumption.

\begin{assumption}\label{assume:interior-general}
There exists a unique optimal solution to the static routing problem \eqref{eq:opt} and the optimal solution satisfies strict complementary slackness, i.e.,
\[
    c_i < \frac 1 {\ell_j'(N_j^*)} + \tau_{ij}\,,\quad\forall (i,j) \in \AS \setminus \AS(\bx^*)\,.
\]
%, i.e., $x_{ij}^* > 0$ for all $(i,j) \in \AS$ and $N_j^* > 0$ for all $j \in \BS$.
\end{assumption}

The first part of the assumption requires that the optimal solution to the static routing problem is unique. If we strengthen  Assumption~\ref{assume:processing-rate} so that processing rate functions are strictly concave, then the  optimal workloads at the backends $N_j^*$ would be unique. The routing decision $x_{ij}^*$, however, might not be unique if there exists a cycle in the network such that the sum of travel times $\tau_{ij}$ of all forward arcs is equal to the sum of the travel times of the backward arcs (forward arcs have the same direction as the cycle and backward arcs have the opposite direction). If such cycle exists, we could push flow along this cycle without changing the objective value. %Because travel times model the latencies in physical networks, in practice we would expect no such cycle to exist and the solution to be unique. 
The second part of the assumption rules out a degenerate case in which gradients of inactive arcs have the same marginal impact that active ones. Slightly perturbing travel times would guarantee these conditions hold, so these are not too restrictive.

The standard approach to analyze the local stability of a non-linear dynamical system is to invoke the Principle of Linearized Stability, which involves linearizing the dynamics around an equilibrium point. This approach cannot be directly applied here because the projection operator is non-linear. We handle the projection by partitioning the local stability analysis in two steps. 

\begin{enumerate}
    \item In the first step, we show that if the arcs that are inactive at the optimal solution, which we denote by $\AS \setminus \AS(\bx^*)$, have positive flow initially, then they will have no flow in finite time. Strict complementary slackness plays a key role in this step as it guarantees that inactive arcs have negative drift.
    
    \item In the second step, we analyze the dynamics when inactive arcs have no flow and active arcs have positive flow. In this case, the projection operator can be written as a linear combination of gradients and we can use the Principle of Linearized Stability to analyze the algorithm. This is the most involved step of the proof.
\end{enumerate}

We begin with an analysis of the projection operator. By strict complementary slackness, there exists $\alpha > 0$ and a ball around the equilibrium point so that for every frontend $i \in \FS$ the gradients satisfy $\eta g_{ij} < \eta g_{ij'} - \alpha$ for all $(i,j) \in \AS(\bx^*)$ and $(i,j') \in \AS \setminus \AS(\bx^*)$. That is, the gradients of the active arcs are strictly better than those of the inactive ones. We refer to this condition as the gradients being ``well separated.'' All our analysis is conducted under the assumption that the system state is close to the equilibrium and lies in the above identified ball so that gradients are well separated. We show, moreover, that the system state never leaves the ball so that the well-separated condition holds throughout time.

The first part of Lemma~\ref{lem:projection-analysis} shows that, under this condition, arcs that are inactive at the optimal solution should have negative drift. The second part shows that, when all inactive arcs at the optimal solution have no flow, the projection admits a simple form: the inactive arcs stay at zero and the active arcs are updated by subtracting the average gradient of the active arcs. The third part shows that the projections are bounded when gradients are bounded. The proof of this result requires a careful analysis of the first-order conditions of the projection problem. In the following, we denote the set of active arcs of frontend $i$ by $\BS_i(\bx) = \left\{ j \in \BS_j : x_{ij} > 0 \right\}$.

\begin{lem}\label{lem:projection-analysis} Fix a frontend $i$ and a routing probability vector $\bx \in \Delta_i$ with $x_{ij} > 0$ for $(i,j) \in \AS(\bx^*)$. Suppose there exists some $\alpha > 0$ such that gradients satisfy $\eta g_{ij} < \eta g_{ij'} - \alpha$ for all $j \in \BS_i(\bx^*)$ and $j' \in \BS_i \setminus \BS_i(\bx^*)$. Let $\bv = \Pi_{T_{\Delta_i}(\bx_i)}(-\eta_i \boldsymbol g_i)$ be the projection of the gradient to the tangent cone.
\begin{enumerate}
    \item If $x_{ij} > 0$ for some $(i,j) \in \AS \setminus \AS(\bx^*)$, then for all $j \in \BS_i \setminus \BS_i(\bx^*)$ we have that
    \[
        \sum_{j \in \BS_i \setminus \BS_i(\bx^*)} v_j \le - \frac \alpha 2\,.
    \]

    \item If $x_{ij} = 0$ for all $(i,j) \in \AS \setminus \AS(\bx^*)$, then
    $v_j = 0$ for $j \in \BS_i \setminus \BS_i(\bx^*)$ and for $j \in \BS_i(\bx^*)$ we have
    \[
    v_j = -\eta g_{ij} + \frac 1 {|\BS_i(\bx^*)|} \sum_{j' \in \BS_i(\bx^*)}  \eta_i g_{ij'}\,.
    \]
    
    \item If $g_{ij} \in [0, \bar g]$, then $|v_j| \le \eta \bar g$ for all $j \in \BS_i$.
\end{enumerate}
\end{lem}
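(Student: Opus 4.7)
The plan is to characterize $\bv$ via the KKT conditions of the projection problem and then verify each claim by a short case analysis. The projection solves $\min_\bv \tfrac{1}{2}\sum_{j \in \BS_i}(v_j + \eta_i g_{ij})^2$ subject to $\sum_j v_j = 0$ and $v_j \ge 0$ whenever $x_{ij} = 0$. Introducing a Lagrange multiplier $\mu \in \mathbb{R}$ for the equality constraint and multipliers $\nu_j \ge 0$ for the non-negativity constraints, stationarity together with complementary slackness gives
\[
v_j = \mu - \eta_i g_{ij} \quad \text{if } x_{ij} > 0, \qquad v_j = (\mu - \eta_i g_{ij})_+ \quad \text{if } x_{ij} = 0,
\]
with $\mu$ uniquely determined by the sum-to-zero constraint. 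I would then partition $\BS_i$ into $A = \BS_i(\bx^*)$, $S^+ = \{j' \in \BS_i \setminus A : x_{ij'} > 0\}$, and $S^0 = \{j' \in \BS_i \setminus A : x_{ij'} = 0\}$, introduce the gradient averages $\bar g_A = \tfrac{1}{|A|}\sum_{j \in A}\eta_i g_{ij}$ and $\bar g_{S^+} = \tfrac{1}{|S^+|}\sum_{j \in S^+}\eta_i g_{ij}$, and set $P = \sum_{j' \in S^0} v_{j'} \ge 0$.

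For Part~1, the sum-to-zero constraint combined with the KKT expressions yields the identity $\sum_{j \in A} v_j = \frac{|A||S^+|}{|A|+|S^+|}(\bar g_{S^+} - \bar g_A) - \frac{|A|P}{|A|+|S^+|}$. I would then split on the sign of $P$. If $P>0$, some $j' \in S^0$ must satisfy $\mu > \eta_i g_{ij'}$, and the well-separation assumption forces $\mu > \bar g_A + \alpha$; in that case $\sum_{j \in A} v_j = |A|(\mu - \bar g_A) > |A|\alpha \ge \alpha$. If $P=0$, well-separation instead gives $\bar g_{S^+} - \bar g_A > \alpha$, so $\sum_{j \in A} v_j > \frac{|A||S^+|}{|A|+|S^+|}\alpha \ge \alpha/2$, using that $|A|,|S^+| \ge 1$. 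Either way, combining with $\sum_{j \in \BS_i} v_j = 0$ gives $\sum_{j \in \BS_i \setminus A} v_j \le -\alpha/2$, establishing Part~1.

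For Part~2, $S^+ = \emptyset$ reduces the sum constraint to $|A|(\mu - \bar g_A) + P = 0$, so $\mu \le \bar g_A$. If $P>0$, some $j' \in S^0$ would have $\mu > \eta_i g_{ij'} > \bar g_A + \alpha$ by separation, contradicting $\mu \le \bar g_A$. Hence $P = 0$ and $\mu = \bar g_A$, which gives the stated explicit formulas; the conclusion $v_{j'} = 0$ for $j' \in S^0$ follows because $\mu = \bar g_A < \eta_i g_{ij'}$. For Part~3, an analogous sum-balance argument shows $\mu \in [0, \max_j \eta_i g_{ij}]$: $\mu<0$ would make every $v_j$ with $x_{ij}>0$ strictly negative while the $(\mu - \eta_i g_{ij})_+$ terms vanish, violating the sum; $\mu > \max_j \eta_i g_{ij}$ would make every $v_j$ strictly positive. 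The bound $|v_j| \le \eta_i \bar g$ then drops out of the KKT expressions, since $\mu - \eta_i g_{ij} \in [-\eta_i \bar g, \eta_i \bar g]$ and $(\mu - \eta_i g_{ij})_+ \in [0, \eta_i \bar g]$.

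The main technical subtlety is the case split in Part~1: the nonnegative term $P$ works against the helpful $(\bar g_{S^+} - \bar g_A)$ term, but the very condition $P>0$ can only arise when $\mu$ is at least $\alpha$ above $\bar g_A$, which by itself produces the required $\Omega(\alpha)$ lower bound on $\sum_A v_j$. Once this trade-off is isolated, the remaining arguments are bookkeeping around the KKT characterization.
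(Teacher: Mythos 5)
Your proposal is correct and follows essentially the same route as the paper's proof: both rest on the KKT characterization of the projection, with the equality multiplier expressed as an average of the (scaled) gradients over the components not pinned at zero, and with the well-separation hypothesis supplying the $\Omega(\alpha)$ drift, the factor $|A|\,|S^+|/(|A|+|S^+|)\ge 1/2$, and the bounds in Parts 2 and 3. The only difference is bookkeeping in Part 1 for the zero-flow inactive components: the paper folds those with $v_j>0$ into the averaging set and bounds a single double sum of pairwise gradient differences, whereas you split on whether their total contribution $P$ is positive and note that $P>0$ itself forces the multiplier to exceed $\bar g_A+\alpha$; both yield the stated conclusion.
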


\subsection{Inactive Arcs}

The following result shows that if the algorithm starts close to an optimal solution, the flow of the arcs that are inactive at an optimal solution would go to zero in finite time.

\begin{lem}\label{lem:inactive-stability}
For every $\epsilon > 0$ there exists some $\delta > 0$ and $t_0>0$ such that if the initial conditions  satisfy $\|(\bN(t),\bx(t)) - (\bN^*, \bx^*) \|_{\infty} < \delta$ for all $t\le 0$, then for every $t \in [0,t_0]$ we have $\|(\bN(t),\bx(t)) - (\bN^*, \bx^*) \|_{\infty} < \epsilon$ and $x_{ij}(t_0) = 0$ for all $(i,j) \in \AS \setminus \AS(\bx^*)$.
\end{lem}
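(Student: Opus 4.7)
The plan is to combine strict complementary slackness with part~1 of Lemma~\ref{lem:projection-analysis} to show that the total inactive-arc mass at each frontend decreases at a uniform linear rate, and then bootstrap this into a statement about how long the state remains in a neighborhood where the separation hypothesis stays valid.

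The first step is to fix the geometry. By continuity of the map $N_j \mapsto 1/\ell_j'(N_j) + \tau_{ij}$, which follows from Assumption~\ref{assume:processing-rate}, together with strict complementary slackness (Assumption~\ref{assume:interior-general}), there exist $\epsilon_0 > 0$ and $\alpha > 0$ such that whenever $\|\bN - \bN^*\|_\infty < \epsilon_0$ the separation $\eta_i g_{ij} < \eta_i g_{ij'} - \alpha$ holds for every $i \in \FS$, $j \in \BS_i(\bx^*)$, and $j' \in \BS_i \setminus \BS_i(\bx^*)$. Shrinking $\epsilon_0$ further if necessary, I also ensure that $x_{ij} > 0$ for every active arc whenever $\|\bx - \bx^*\|_\infty < \epsilon_0$, which is needed to invoke Lemma~\ref{lem:projection-analysis}, and that the right-hand sides of~\eqref{eq:dynamics-workloads} and~\eqref{eq:dynamics-delay-general} are uniformly bounded in magnitude by some $M < \infty$ on this neighborhood (using part~3 of Lemma~\ref{lem:projection-analysis} for $\bx$ together with the smoothness of $\ell_j$ for $\bN$).

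Now define $s_i(t) := \sum_{j \in \BS_i \setminus \BS_i(\bx^*)} x_{ij}(t)$, the total inactive mass at frontend $i$. Given $\epsilon > 0$, set $\bar\epsilon := \min(\epsilon, \epsilon_0)$ and choose $\delta \in (0, \bar\epsilon)$ together with $t_0 := 2|\BS|\delta/\alpha$ small enough that $\delta + M t_0 < \bar\epsilon$. The bootstrap proceeds by letting $t^* := \sup\{t \ge 0 : \|(\bN(s),\bx(s)) - (\bN^*,\bx^*)\|_\infty < \bar\epsilon \text{ for all } s \in [0,t]\}$, which is strictly positive by continuity of the solution from Lemma~\ref{lem:existence}. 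On $[0,t^*)$, each delayed state $\bN(t - \tau_{ij})$ is either an initial-condition value (hence $\delta$-close) or lies in the $\bar\epsilon$-ball by definition of $t^*$; so the separation condition holds and part~1 of Lemma~\ref{lem:projection-analysis} yields $ds_i/dt \le -\alpha/2$ whenever $s_i(t) > 0$. The tangent-cone constraint $v_j \ge 0$ at $x_{ij} = 0$ ensures $s_i$ remains zero once it reaches zero. Since $s_i(0) \le |\BS|\delta$, each $s_i$ hits zero by time $t_0$. Combined with the uniform derivative bound, $\|(\bN(t),\bx(t)) - (\bN^*,\bx^*)\|_\infty \le \delta + Mt < \bar\epsilon$ on $[0, \min(t^*, t_0)]$, which forces $t^* > t_0$ and closes the bootstrap; this simultaneously proves the $\epsilon$-stability estimate on $[0,t_0]$ and $x_{ij}(t_0) = 0$ for every inactive arc.

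The main technical obstacle I foresee is managing the interplay between the delay structure and the bootstrap: the gradient at time $t$ depends on the workload at $t - \tau_{ij}$, which may still be an initial-condition value even when $t$ is close to $t_0$, so the hypothesis on the initial data for $t \le 0$ and the (yet-to-be-established) bound on the forward dynamics must be wielded simultaneously. A related subtlety is ensuring that each $s_i$ truly reaches zero in finite time rather than merely approaching it asymptotically; this is exactly where the strict negative drift bound $-\alpha/2$ from part~1 of Lemma~\ref{lem:projection-analysis} becomes essential, and why strict complementary slackness is needed rather than just uniqueness of the optimal solution.
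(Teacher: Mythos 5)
Your proposal is correct and follows essentially the same route as the paper: the Lyapunov function given by the total inactive-arc mass, the uniform drift bound $-\alpha/2$ from part~1 of Lemma~\ref{lem:projection-analysis}, the resulting hitting time of order $\delta/\alpha$, and growth bounds on $\bN$ and $\bx$ to keep the state in the well-separated neighborhood (your explicit bootstrap via $t^*$ just makes the paper's ``assume the conditions hold, then choose $\delta$'' step precise). One small correction: the tangent-cone constraint $v_j \ge 0$ alone does not keep $s_i$ at zero (it only prevents it from going negative); you need part~2 of Lemma~\ref{lem:projection-analysis}, whose well-separation hypothesis you have already established inside the bootstrap, to conclude $v_j = 0$ once all inactive arcs of frontend $i$ vanish.
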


Why is establishing the local stability of inactive arcs challenging?  Consider an arc $(i,j) \in \AS \setminus \AS(\bx^*)$ that is inactive at the optimal solution (i.e., $x_{ij}^* = 0$) and has positive flow initially (i.e., $x_{ij}(0) > 0$). Because inactive arcs have large gradients at the optimal solution and the initial state is close to the optimal solution, this arc must have a large gradient and the algorithm should reduce its flow. Unfortunately, this might not happen initially. If there is another arc $(i,j') \in \AS \setminus \AS(\bx^*)$ with even larger gradient, the projection operator would prioritize decreasing the flow of $(i,j')$ first. In fact, the flow of $(i,j)$ might even increase initially! Eventually, when the flow of $(i,j')$ goes to zero at some later time $t>0$, arc $(i,j)$ could become the arc with the largest gradient and only then its flow would be guaranteed to decrease. This delayed reduction in flow explains the difficulty in establishing the stability of inactive arcs.

We prove the result by considering the Lyapunov function 
\[
    V(\bx) = \sum_{(i,j) \in \AS \setminus \AS(\bx^*)} x_{ij}\,,
\]
which gives the flow of all arcs that are inactive at the optimal solution. In light of Lemma~\ref{lem:projection-analysis}, part~1, we can show that the total drift of the arcs that are inactive at the optimal solution is negative and the Lyapunov function has a negative drift, i.e., $\frac d {dt} V(\bx(t)) < 0 $. This property follows because, for each frontend, gradients are well separated, i.e., the gradients of arcs that are inactive at the optimal solution are strictly larger than those of the active arcs. Well separateness, in turn, follows because processing rate functions are smooth and strict complementary slackness holds. 

Because our Lyapunov function has negative drift,  the flows of the arcs in $\AS \setminus \AS(\bx^*)$ go to zero in finite time. By looking at the total flow across all arcs in $\AS \setminus \AS(\bx^*)$, we can avoid dealing with the complicated dynamics of individual arcs, which might be non-monotone as previously discussed. We also show that flows of the arcs in $\AS(\bx^*)$ and workloads do not change too much in the time it takes the other arcs' flows to go to zero using some basic estimates on their growth.

\subsection{Active Arcs}

We next move to the most challenging part of our analysis---studying the local stability of {\DLBshort} across active arcs. By Lemma~\ref{lem:inactive-stability}, the flow of the arcs that are inactive at the optimal solution would go to zero in a finite amount of time $t_0 > 0$. 
Part 2 of Lemma~\ref{lem:projection-analysis} guarantee that arcs that are inactive at the optimal solution would remain at zero as long as gradients remain well separated. Therefore, we now restrict attention to the case when all inactive arcs have no flow and active arcs have positive flow. Moreover, we assume that the network is connected. Otherwise, each connected component can be analyzed independently. 

Before stating our main result, we state some definitions. Let $A \in \{0,1\}^{|\FS|\times|\BS|}$ the bi-adjacency matrix of the network restricted to arcs $\AS(\bx^*)$ that are active at an optimal solution, i.e., $a_{i,j} = 1$ if $(i,j) \in \AS(\bx^*)$ and zero otherwise. Let $\ba_i^\top$ be the $i$-th row of the bi-adjacency matrix $A$, which captures the connectivity of a frontend $i \in \FS$. The matrix $E_i \in \mathbb R^{|\BS|\times|\BS|}$ given by
\begin{align}\label{eq:laplacian-matrix}
        E_i = \diag(\ba_i) - \frac{\ba_i \ba_i^\top}{\ba_i^\top \mathbf 1}\,.
\end{align}
is the Laplacian matrix of the subgraph induced by frontend $i \in \FS$. We denote by \emph{gap} the spectral gap or minimum non-zero eigenvalue of the weighted Laplacian matrix $\sum_{i\in\FS}\lambda_i \eta_i E_i$. The following theorem states our main  result.

\begin{thm}\label{thm:stability-general}
{\DLBshort} is locally asymptotically stable when the network $\mathcal G$ is connected if there exists some $\hat c \in \mathbb R_+$ with $\hat c \ge 1/\ell_j'(N_j^*)$ for all $j \in \BS$ such that
\begin{align}\label{eq:stability-general}
    2 \boldsymbol{\eta}^\top \boldsymbol{\lambda} \left(
    \max_{j \in \BS} \frac{\hat \tau_j  \sigma_j}{\ell_j'}
    + \frac{\sum_{i \in \FS} \lambda_i \eta_i |\hat c - c_i|}{\operatorname{gap}} \cdot \hat c \cdot \max_{j \in \BS} \sigma_j \right)  < 1\,,
\end{align}
where $\ell_j' = \ell_j'(N_j^*)$ is the derivative of the processing rate function at the optimal workload, $\sigma_j = - \ell_{j}''(N_{j}^*) / \ell_{j}'(N_{j}^*)^2 > 0$ captures the curvature of the processing rate function at the optimal workload, and $\hat \tau_j = \hat c - 1/ \ell_j'(N_j^*)$ is the \emph{uniform latency} of backend $j$.
\end{thm}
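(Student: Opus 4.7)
The plan is to follow the two-step strategy already sketched in the excerpt. First, Lemma~\ref{lem:inactive-stability} guarantees that in finite time the state enters the regime $x_{ij}(t) \equiv 0$ for all $(i,j)\in\AS\setminus\AS(\bx^*)$, and Part~2 of Lemma~\ref{lem:projection-analysis} then makes the projection linear: $\dot x_{ij} = -\eta_i(g_{ij} - \bar g_i)$ for $(i,j)\in\AS(\bx^*)$, where $\bar g_i$ is the average gradient over the active arcs of frontend $i$. Linearizing $g_{ij}(t) = 1/\ell_j'(N_j(t-\tau_{ij})) + \tau_{ij}$ around the equilibrium yields $g_{ij}(t) \approx c_i + \sigma_j \tilde N_j(t-\tau_{ij})$ by Lemma~\ref{lem:foc}, and the frontend-dependent constant $c_i$ cancels inside the averaging projection. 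The dynamics therefore reduce to a linear delay-differential system in the perturbations $(\tilde{\bN},\tilde{\bx})$, and by the Principle of Linearized Stability the theorem reduces to showing that every root of its characteristic equation lies in the open left half-plane.

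\textbf{Step 2: Characteristic equation, uniformization, and the Laplacian.} Taking Laplace transforms and eliminating $\hat x_{ij}(s)$ yields a closed-loop equation $\Phi(s) \hat{\bN}(s) = 0$ in which the diagonal factor is $s(s+\ell_j')$ and the off-diagonal coupling is transmitted by terms of the form $\lambda_i \eta_i \sigma_{j'} e^{-s(\tau_{ij}+\tau_{ij'})}/s$ routed through the bipartite active-arc graph. Two ideas then drive the stability bound. First, on active arcs the first-order condition $1/\ell_j'(N_j^*) + \tau_{ij} = c_i$ gives $\tau_{ij} = \hat\tau_j + (c_i - \hat c)$, so each $e^{-s\tau_{ij}}$ splits into a backend-homogeneous factor $e^{-s\hat\tau_j}$ times a frontend residual $e^{-s(c_i-\hat c)}$ that collapses to $1$ when $c_i \equiv \hat c$; this is the uniformization trick. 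Second, each frontend's projection is precisely the Laplacian $E_i/(\ba_i^\top\mathbf 1)$ from \eqref{eq:laplacian-matrix}, so the aggregated loop operator is governed by the weighted Laplacian $\sum_i \lambda_i \eta_i E_i$, and its spectral gap measures how strongly the residual $(c_i - \hat c)$ can be amplified on the subspace orthogonal to its kernel.

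\textbf{Step 3: Nyquist-style bound and main obstacle.} The remaining task is a Nyquist-type argument: evaluate the loop transfer matrix on $s = \im\omega$, show its spectral radius stays strictly below $1$ uniformly in $\omega$, and then use continuity in the delays together with the argument principle to rule out roots in the strict right half-plane. The first summand $\max_j \hat\tau_j \sigma_j/\ell_j'$ of \eqref{eq:stability-general} emerges from the uniformized diagonal piece, using $|e^{-\im\omega\hat\tau_j}-1|\le \min(2, |\omega|\hat\tau_j)$ together with $|\im\omega+\ell_j'|^{-1}\le 1/\ell_j'$ to absorb the $1/s$ integrator. The second summand arises from the residual part: it lives in the complement of the Laplacian kernel, so its norm is controlled by $1/\operatorname{gap}$, and the factor $|c_i - \hat c|$ appears after estimating $|e^{-\im\omega(c_i-\hat c)}-1| \le |\omega(c_i-\hat c)|$ and cancelling $|s|$ against the integrator, with $\hat c$ playing the role of a uniform bound on $1/\ell_j'(N_j^*)$. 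The factor $2\boldsymbol\eta^\top\boldsymbol\lambda$ collects the aggregated loop gain across frontends and backends. The main technical obstacle is executing this geometric analysis for a non-normal, frequency-dependent complex matrix over an arbitrary bipartite network with heterogeneous delays: controlling the worst-case $\omega$ and bringing the loop matrix into a form where the Laplacian spectral gap is actually usable requires careful bookkeeping, and this is precisely where the ``geometric analysis of the eigenvalues of the loop transfer function'' advertised in the introduction enters.
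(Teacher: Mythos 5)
Your overall architecture (reduction to active arcs via Lemma~\ref{lem:inactive-stability} and Lemma~\ref{lem:projection-analysis}, linearization, Laplace transform, uniformization around a pivot $\hat c$, and control of the residual through the spectral gap of $\sum_i\lambda_i\eta_i E_i$) matches the paper's proof. But the core analytic step in your Step~3 has a genuine gap: you propose to ``show its spectral radius stays strictly below $1$ uniformly in $\omega$.'' This cannot work. The loop transfer function contains the integrator $1/(s(s+\ell_j'))$, so its gain diverges as $\omega\to 0$; no small-gain / spectral-radius argument is available. What the paper actually proves, via the Generalized Nyquist criterion, is that the eigenloci never encircle $-1+\im 0$, and the mechanism is geometric rather than norm-based: Lemma~\ref{lem:geometric-bounds} shows that for each fixed $\omega$ the entire family of points $e^{-2\tau\im\omega}/(2\tau\im\omega(\im\omega(\hat c-\tau)+1))$, $\tau\in[0,\hat c)$, lies in a half-space through $-1+\im 0$ with frequency-dependent slope $1/(\omega\hat c)$. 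This is essential because the spectrum of the uniformized part is only contained in a \emph{convex hull} of such points (the matrix is not normal), and individual loci crossing the real axis to the right of $-1$ does not prevent their convex hull from covering $-1$. Your proposed bound $|e^{-\im\omega\hat\tau_j}-1|\le\min(2,|\omega|\hat\tau_j)$ applied to the diagonal piece controls only the \emph{deviation} of the delayed locus from the undelayed one; the undelayed second-order locus itself has real part as negative as $-\boldsymbol\lambda^\top\boldsymbol\eta\,\sigma_j/(\ell_j')^2$, which is not controlled by condition~\eqref{eq:stability-general}, so this route does not recover the stated constant.

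Two further ingredients are missing. First, $s=0$ is a root of the characteristic equation because the weighted Laplacian is singular ($\mathbf 1$ is in its kernel); the paper needs a separate argument (Lemma~\ref{lem:zero-solution}, using $\sum_j x_{ij}=1$ and uniqueness of the optimum from Assumption~\ref{assume:interior-general}) to exclude it, and you do not address it. Second, to make the decomposition $A+P$ usable you need a spectral containment result of the form $\operatorname{spec}((A+P)B)\subseteq W(A)\,(W(B)+W(A^+PB))$ for $A$ positive semi-definite and singular with $\operatorname{range}(P)\subseteq\operatorname{kernel}(A)^\perp$ (Lemma~\ref{lem:spectrum-product-sum}); saying the residual's ``norm is controlled by $1/\operatorname{gap}$'' presupposes exactly this nontrivial fact, including the verification that $\mathbf 1^\top P=0$. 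The final assembly also requires comparing the disk radius of the perturbation against the distance from $-1+\im 0$ to the tilted half-space, which produces the factor $\sqrt{1+\omega^2\hat c^2}$ and the reduction via Lemma~\ref{lem:ratio-bound}; none of this is reachable from a uniform spectral-radius bound.
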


Before providing the proof of the main result, which we postpone to Section~\ref{sec:proof-main}, we discuss the implications of Theorem~\ref{thm:stability-general}. First, in the case of single frontend, which we denote by $i$, we can set $\hat c = c_i$. This leads to $\hat \tau_j = \tau_{ij}$ and condition~\eqref{eq:stability-general} reduces to
\begin{align}\label{eq:stability-condition1}
    \max_{j \in \BS} \frac{2 \tau_{ij} \eta_i \lambda_i \sigma_j}{\ell_j'} < 1\,.
\end{align}
The main takeaway is that we need to pick the step-size to be inversely proportional to the delay to guarantee stability. This is intuitive: if a backend has a long delay, we should pick a smaller step-size to avoid being too reactive. The dependence on the arrival rate is more subtle because there are other parameters in the condition (such as $\ell_j'$ and $\sigma_j$) involving the first and second derivative of the processing rate function evaluated at the optimal workloads, which change with the arrival rate.

In the case of multiple frontends, our analysis relies on a novel uniformization technique that combines the different frontends with Lagrange  multipliers $c_i$ into a single representative frontend with a common Lagrange multiplier $\hat c$, which we refer to as the ``pivot.'' For this new frontend to satisfy the first-order condition \eqref{eq:equilibrium-x} we set the latencies of this virtual frontend to be the uniform latencies $\hat \tau_j = \hat c - 1/ \ell_j'(N_j^*)$ in the statement of the theorem.

Suppose the network $\mathcal G$ is complete, but only some arcs have flow in the optimal solution. A natural, albeit potentially suboptimal, choice is to set $\hat c = \max_{i \in \FS} c_i$. This choice is clearly feasible and gives that the uniform latency of backend $j \in \BS$ is at most the largest latency across all frontends connected to it. More formally,
\[
    \hat \tau_j = \hat c - \frac 1 {\ell_j'} = \max_{i \in \FS} \left\{ c_i - \frac 1 {\ell_j'}\right\} \le \max_{i \in \FS} \tau_{ij}\,,
\]
where the inequality follows from the first-order optimality conditions in Lemma~\ref{lem:foc}. Now, a sufficient condition for stability is
\[
    \max_{i\in \FS,j \in \BS} \frac{2 \tau_{ij} \boldsymbol{\eta}^\top \boldsymbol{\lambda} \sigma_j}{\ell_j'}
    + \frac{2 (\boldsymbol{\eta}^\top \boldsymbol{\lambda})^2 \max_{i \in \FS} c_i^2}{\operatorname{gap}} \cdot \max_{j \in \BS} \sigma_j  < 1\,,
\]
The first term is easy to interpret as it naturally extends the single-frontend stability condition presented in \eqref{eq:stability-condition1}. The second term is harder to interpret as it depends on the Lagrange multipliers $c_i$ of the frontends' flow balance constraints. From the first-order optimality conditions, we know that $c_i = \tau_{ij} + 1/\ell_j'$, so larger travel times would decrease the range of stability.

We conclude this section by providing a lower bound on the spectral gap of the Laplacian using techniques from spectral graph theory. Let $\mathcal P$ be a path between two backends restricted to arcs in $\AS(\bx^*)$. We denote by $\BS(\mathcal P)$ and $\FS(\mathcal P)$ to be the set of backends and frontends, respectively, visited by the path $\mathcal P$. We define the \emph{length} of the path to be $d(\mathcal P) = \sum_{i \in \FS(\mathcal P)} |\BS_i| / (\lambda_i \eta_i)$. The \emph{diameter} of the network is the length of the longest shortest path between backends, i.e.,
\[
    d(\mathcal G) = \max_{j,j' \in \BS} \min_{\substack{\mathcal P \text{ path}\\\text{from } j \text{ to } j'}} d(P)\,.
\]

\begin{lem}\label{lem:spectral-bound}Suppose the network $G$ is connected. Then, the spectral gap or minimum non-zero eigenvalue of the network satisfies
    \[
    \operatorname{gap}\left(\sum_{i\in\FS}\lambda_i \eta_i E_i\right) \ge \frac 1{|\BS| \cdot d(\mathcal G)}\,.
\]
\end{lem}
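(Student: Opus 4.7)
The plan is to identify $L = \sum_{i\in\FS}\lambda_i\eta_i E_i$ as the graph Laplacian of a weighted multigraph on the backends and then bound its spectral gap by a classical effective-resistance argument. First, I would verify the quadratic-form identity
\[
v^\top E_i v \;=\; \sum_{j\in\BS_i} v_j^2 \;-\; \frac{\bigl(\sum_{j\in\BS_i}v_j\bigr)^2}{|\BS_i|} \;=\; \frac{1}{2|\BS_i|}\sum_{j,j'\in\BS_i}(v_j-v_{j'})^2\,,
\]
which shows that $L$ is the Laplacian of the multigraph $H$ on vertex set $\BS$ that, for every frontend $i\in\FS$, contributes a clique on $\BS_i$ in which each edge has conductance $\lambda_i\eta_i/|\BS_i|$, i.e., resistance $|\BS_i|/(\lambda_i\eta_i)$. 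Since $\mathcal G$ is connected, so is $H$, and $L$ has a one-dimensional kernel spanned by $\mathbf 1$.

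Next, I would invoke the standard effective-resistance inequality. For any $v\in\mathbf 1^\perp$ and any backends $j,j'$, applying Cauchy--Schwarz to $L^{1/2}$ and the pseudo-inverse $L^{+1/2}$ on $\mathbf 1^\perp$ yields
\[
(v_j-v_{j'})^2 \;=\; \bigl\langle v,\,e_j-e_{j'}\bigr\rangle^2 \;\le\; \bigl(v^\top L v\bigr)\cdot (e_j-e_{j'})^\top L^+(e_j-e_{j'}) \;=\; \bigl(v^\top L v\bigr)\cdot R_{\mathrm{eff}}(j,j')\,,
\]
where $R_{\mathrm{eff}}$ denotes effective resistance in $H$. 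Any bipartite path $\mathcal P=j_0,i_1,j_1,\ldots,i_k,j_k$ in $\AS(\bx^*)$ induces a walk in $H$ whose $\ell$-th edge $\{j_{\ell-1},j_\ell\}$ comes from the clique at frontend $i_\ell$ and has resistance $|\BS_{i_\ell}|/(\lambda_{i_\ell}\eta_{i_\ell})$. Series resistances add and parallel connections can only decrease the effective resistance, so $R_{\mathrm{eff}}(j,j')$ is bounded by the sum of resistances along any such $\mathcal P$; minimizing over paths gives $R_{\mathrm{eff}}(j,j')\le d(\mathcal G)$.

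Finally, I would average the pointwise bound over all ordered pairs of backends. For $v\in\mathbf 1^\perp$ with $\|v\|_2=1$, a direct expansion gives $\sum_{j,j'\in\BS}(v_j-v_{j'})^2 = 2|\BS|\,\|v\|_2^2 - 2(v^\top\mathbf 1)^2 = 2|\BS|$, so
\[
2|\BS| \;\le\; |\BS|^2\, d(\mathcal G)\, v^\top L v\,,
\]
which yields $v^\top L v \ge 2/(|\BS|\,d(\mathcal G))$. Invoking the Courant--Fischer characterization $\lambda_2(L)=\min_{v\in\mathbf 1^\perp,\,v\ne 0} v^\top L v / \|v\|_2^2$ then completes the proof, in fact with a constant of $2$ rather than $1$.

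I expect the main obstacle to be the bookkeeping around the bipartite-to-multigraph translation: identifying the correct edge conductances $\lambda_i\eta_i/|\BS_i|$ at each frontend-clique and confirming that a simple bipartite path in $\AS(\bx^*)$ translates into a walk in $H$ whose cumulative resistance is exactly $d(\mathcal P)$, so that $d(\mathcal G)$ is genuinely an upper bound on $R_{\mathrm{eff}}$. Verifying that $L$ restricted to $\mathbf 1^\perp$ is positive definite (so that $L^+$ and the Cauchy--Schwarz step are legitimate) reduces to connectivity of $H$, which follows immediately from that of $\mathcal G$.
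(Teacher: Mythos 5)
Your proof is correct, and it reaches the stated bound (in fact with an extra factor of $2$) by a genuinely different route from the paper. Both arguments start from the same Dirichlet-form identity $\by^\top E_i \by = \tfrac{1}{2|\BS_i|}\sum_{j,j'\in\BS_i}(y_j-y_{j'})^2$, so the identification of $\sum_i \lambda_i\eta_i E_i$ as a weighted Laplacian of frontend-cliques on $\BS$ is shared. From there the paper argues variationally and ``by hand'': it takes the extremal unit vector $\by\perp\mathbf 1$, picks the coordinate $j_0$ of largest magnitude and a coordinate $j_1$ of opposite sign (which exists since $\mathbf 1^\top\by=0$), telescopes $y_{j_1}-y_{j_0}$ along a single shortest path, and applies Cauchy--Schwarz once to get $(y_{j_1}-y_{j_0})^2 \le d(\mathcal P)\cdot(\text{path Dirichlet energy})$, finishing with $y_{j_0}^2\ge 1/|\BS|$. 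You instead invoke the effective-resistance formalism: $(v_j-v_{j'})^2\le (v^\top L v)\, (e_j-e_{j'})^\top L^+(e_j-e_{j'})$, bound $R_{\mathrm{eff}}$ by series resistance along a path via Rayleigh monotonicity, and then average over all ordered pairs using $\sum_{j,j'}(v_j-v_{j'})^2=2|\BS|$ for unit $v\perp\mathbf 1$. The averaging is what buys you the constant $2$; the price is two extra pieces of machinery (the pseudo-inverse characterization of effective resistance and Rayleigh monotonicity) that the paper's self-contained argument avoids. The bookkeeping you flagged as the main risk does work out: a minimizing path never revisits a frontend (one could shortcut inside its clique), so the cumulative series resistance of the induced walk in the multigraph is exactly $d(\mathcal P)=\sum_{i\in\FS(\mathcal P)}|\BS_i|/(\lambda_i\eta_i)$, and the Courant--Fischer step is licensed by $\operatorname{kernel}(L)=\operatorname{span}(\mathbf 1)$ under connectivity, which is the paper's Lemma~\ref{lem:laplacian-matrix2}.
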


\section{Proof of Main Result}\label{sec:proof-main}

By Lemma~\ref{lem:inactive-stability}, the flow of the inactive arcs in $\AS \setminus \AS(\bx^*)$ goes to zero in a finite amount of time $t_0 > 0$ and active arcs in $\AS(\bx^*)$ remain active at $t_0$. We assume without loss of generality that $t_0 = 0$. Therefore, we have that $x_{ij}(0) = 0$ for all $(i,j) \not \in \AS(\bx^*)$ and $x_{ij}(t) > 0$ for all $(i,j) \in \AS(\bx^*)$ and $t \in [-\tau_{ij},0]$.

If the system is stable, the workloads will stay in a ball around $\bN^*$ and, by Assumptions~\ref{assume:processing-rate}~and~\ref{assume:interior-general}, gradients remain well separated. Part 2 of Lemma~\ref{lem:projection-analysis} guarantee that the drift of inactive arcs in $\AS \setminus \AS(\bx^*)$ would remain at zero as long as gradients remain well separated. Therefore, if the workloads remain in the ball around the equilibrium point, we would have that $x_{ij}(t) = 0$ for all $(i,j) \not \in \AS(\bx^*)$ and $t > 0$. To simplify the exposition we assume without loss that all arcs in the network are active as inactive arcs remain at zero and can be simply ignored. We set $\AS(\bx^*) = \AS$ and $\BS_i(\bx^*) = \BS_i$.\footnote{Lemma~\ref{lem:inactive-stability} only guarantees that $x_{ij}(t) = 0$ for all $(i,j) \not \in \AS(\bx^*)$ at $t=t_0$ but not necessarily at $t < t_0$. Because of delays introduced by travel times, a backend $j\in\BS$ can receive requests from an arc $(i,j) \in \AS \setminus \AS(\bx^*)$ during $t \in [t_0, t_0 + \tau_{ij}]$. The local stability of a delay differential equation depends on the functional operator and not on the initial conditions as long these are close to the equilibrium point. Thus, we can safely ignore the lagged flows from arcs $\AS \setminus \AS(\bx^*)$.}

Using part 2 of Lemma~\ref{lem:projection-analysis} we can write the dynamics as follows
\begin{equation}\label{eq:dynamics-delay-active}
\begin{split}
    \frac {d} {dt} N_j(t) &= \sum_{i\in\FS_j} \lambda_i x_{ij}(t - \tau_{ij}) - \ell_j (N_j(t))\,, \quad \forall j \in \BS\,, \\
    \frac {d} {dt} x_{ij}(t) &= -\eta_i g_{ij}(t) + \frac 1 {|\BS_i|} \sum_{j' \in \BS_i} \eta_i g_{ij'}(t)\,, \quad \forall (i,j) \in \AS\,. \\
\end{split}
\end{equation}
In our analysis, we can now safely ignore the projection to the probability simplex and assume that routing decisions are interior. That is, $x_{ij}(t) > 0$ and $\sum_j x_{ij}(t) = 1$.

\subsection{Principle of Linearized Stability}
We study the local stability of the dynamical system~\eqref{eq:dynamics-delay-active} around the equilibrium point $(\bN^*,\bx^*)$ by looking at its linearization.

Let $\bar N_j(t) = N_j(t) - N_j^*$ and $\bar x_{ij}(t) = x_{ij}(t) - x_{ij}^*$ be the deviations around the equilibrium point. Performing a first-order expansion of \eqref{eq:dynamics-delay-active} around the equilibrium point we obtain the following linear dynamics for the workloads
\begin{equation}\label{eq:linear-dynamics-N}
\begin{split}
    \frac {d} {dt} \bar N_j(t) &=
    \frac {d} {dt} N_j(t)
    \approx
    \sum_{i\in\FS_j} \lambda_i x_{ij}^* - \ell_j (N_j^*)
    + \sum_{i\in\FS_j} \lambda_i \bar x_{ij}(t - \tau_{ij}) - \ell_j' (N_j^*) \bar N_j(t)\\
    &= \sum_{i\in\FS_j} \lambda_i \bar x_{ij}(t - \tau_{ij}) - \ell_j' \bar N_j(t)\,,
\end{split}
\end{equation}
where the last equation follows from the equilibrium condition~\eqref{eq:equilibrium-N} and using $\ell_j'>0$ as shorthand for $\ell_j' (N_j^*)$. Similarly, for the routing probabilities we obtain that
\begin{equation}\label{eq:linear-dynamics-x}
\begin{split}
    \frac {d} {dt} \bar x_{ij}(t) &=
    \frac {d} {dt} x_{ij}(t) \approx
    - \eta_i \left( \frac 1 {\ell_j'(N_j^*)} + \tau_{ij} \right) + \frac {\eta_i} {|\BS_i|}
    \sum_{j' \in \BS_i} \left( \frac 1 {\ell_{j'}'(N_{j'}^*)} + \tau_{ij'}\right)\\
    &\quad + \eta_i \frac {\ell_j''(N_j^*)}{\ell_j'(N_j^*)^2} \bar N_j(t - \tau_{ij}) - \frac {\eta_i} {|\BS_i|}
    \sum_{j' \in \BS_i} \frac {\ell_{j'}''(N_{j'}^*)}{\ell_{j'}'(N_{j'}^*)^2} \bar N_{j'}(t - \tau_{ij'})\\
    &= - \eta_i \sigma_j \bar N_j(t - \tau_{ij}) + \frac {\eta_i} {|\BS_i|}
    \sum_{j' \in \BS_i} \sigma_{j'} \bar N_{j'}(t - \tau_{ij'})\,,
\end{split}
\end{equation}
where the last equation follows from setting $\sigma_j = - \ell_{j}''(N_{j}^*) / \ell_{j}'(N_{j}^*)^2 > 0$ and the equilibrium condition~\eqref{eq:equilibrium-x}.

For a real function $f$, we define its Laplace transform by $\mathcal L\{f\}(s) = \int_0^\infty e^{-st}f(t) dt$ where $s \in \mathbb C$ is a complex number. For a complex number $s \in \mathbb C$, we denote by $\Re(s)$ its real part and by $\Im(s)$ its imaginary part.

Let $\LN_j(s) = \mathcal L\{\bar N_j\}(s)$ and $\Lx_{ij}(s) = \mathcal L\{\bar x_{ij}\}(s)$ be the Laplace transforms of the deviations of workloads and routing probabilities, respectively, around their equilibrium points. Taking Laplace transforms to the linearized dynamics, we obtain that
\begin{align}
    s \LN_j(s) - \bar N_j(0) &= \sum_{i\in\FS_j} \lambda_i \Lx_{ij}(s) e^{-\tau_{ij}s} - \ell_j' \LN_j(s)\label{eq:laplace-N}\\
    s \Lx_{ij}(s) - \bar x_{ij}(0) &=
    - \eta_i \sigma_j \LN_j(s) e^{- \tau_{ij}s} + \frac {\eta_i} {|\BS_i|}
    \sum_{j' \in \BS_i} \sigma_{j'} \LN_{j'}(s) e^{-s \tau_{ij'}}\,,\label{eq:laplace-x}
\end{align}
where we used the derivative, time-shifting, and linearity properties of the Laplace transform. We solve for $\Lx_{ij}$ by multiplying \eqref{eq:laplace-N} by $s$ and then using \eqref{eq:laplace-x} to  obtain the following equations for each backend
\begin{align*}
    &s^2 \LN_j(s) + s \ell_j' \LN_j(s) +  \sum_{i\in\FS_j} \lambda_i \eta_i \sigma_j \LN_j(s)  e^{-2\tau_{ij}s}
    -  \sum_{i\in\FS_j} \frac{\lambda_i \eta_i e^{-s \tau_{ij} }}{|\BS_i|} \sum_{j' \in \BS_i} \sigma_{j'} \LN_{j'}(s) e^{-s\tau_{ij'}}\\
    &= s \bar N_j(0) +
    \sum_{i\in\FS_j} \lambda_i \bar x_{ij}(0) e^{-\tau_{ij}s}\,.
\end{align*}
Let $\boldsymbol{\LN}(s) = (\LN_j(s))_{j \in \BS}$ be the vector of workloads Laplace transforms and $H(s) = (H_j(s))_{j \in \BS}$ be right-hand vector given by $H_j(s) = s \bar N_j(0) + \sum_{i\in\FS_j} \lambda_i \bar x_{ij}(0) e^{-\tau_{ij}s}$. The previous equation can be written in matrix form as follows
\[
\left( s^2 I + s \diag(\boldsymbol{\ell'}) + \sum_{i \in \mathcal F} \lambda_i \eta_i Q_i(s) \diag(\boldsymbol{\sigma}) \right) \boldsymbol{\LN}(s) = H(s)\,,
\]
where $I \in \mathbb R^{|\BS|\times|\BS|}$ is the identity matrix, $\boldsymbol{\ell'} \in \mathbb R^{|\BS|}$ is a vector with $\ell_j = \ell_j'(N_j^*)$ in the $j$-th component, $\boldsymbol{\sigma} \in \mathbb R^{|\BS|}$ is a vector with $\sigma_j$ in the $j$-th component, and
\[
Q_i(s) = \diag(\br_i(s)) E_i \diag(\br_i(s))
\]
where the Laplacian matrix $E_i$ is defined in \eqref{eq:laplacian-matrix} and $\br_i(s)^\top$ is the $i$-th row of the delay matrix $R : \mathbb C \rightarrow \mathbb C^{|\FS|\times|\BS|}$ given by $r_{ij}(s) = \exp(-s \tau_{ij})$ if $(i,j) \in \AS$ and zero otherwise. In the previous equation, we used that $|\BS_i| = \sum_{j \in \BS_i} 1 = \ba_i^\top \mathbf 1$.

To guarantee local asymptotic stability,
we need to check that all solutions $s \in \mathbb C$ to the characteristic equation
\[
\det \left( s^2 I + s \diag(\boldsymbol{\ell'}) + \sum_{i \in \mathcal F} \lambda_i \eta_i Q_i(s) \diag(\boldsymbol{\sigma}) \right) = 0
\]
have negative real parts (see, e.g., \citealt[Theorem 4.8, p.55]{smith2011introduction}).{.%  (see, e.g., \citealt[Theorem 6.2, p.24]{hale2013introduction}).

A challenge is that $s=0$ can be a solution to the characteristic equation because
\[
\det \left( \sum_{i \in \mathcal F} \lambda_i \eta_i Q_i(0) \diag(\boldsymbol{\sigma}) \right) = \det \left(\sum_{i \in \mathcal F} \lambda_i \eta_i Q_i(0) \right) = 0
\]
since $\mathbf 1$ lies in the null space of $Q_i(0)$ (see Lemma~\ref{lem:laplacian-matrix2}). Using the fact that $\sum_{j \in \BS_i} x_{ij}(t) = 1$, we can show that $s = 0$ cannot lead to a valid solution to the differential equation. Otherwise, we would contradict the uniqueness of the optimal solution of the static routing problem from Assumption~\ref{assume:interior-general}.

\begin{lem}\label{lem:zero-solution} The only solution associated to $s = 0$ is $\bar x_{ij}(t) = 0$ and $\bar N_j(t) = 0$.
\end{lem}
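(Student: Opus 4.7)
The plan is to show that any solution of the linearized dynamics \eqref{eq:linear-dynamics-N}--\eqref{eq:linear-dynamics-x} associated to the characteristic root $s=0$ must be constant in $t$, and then to rule out nontrivial constant solutions using two ingredients: the conservation law $\sum_{j\in\BS_i}\bar x_{ij}=0$ inherited from the simplex constraint, and the uniqueness of the optimum guaranteed by Assumption~\ref{assume:interior-general}.

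First I would set $\bar N_j(t)\equiv\bar N_j$ and $\bar x_{ij}(t)\equiv\bar x_{ij}$ in \eqref{eq:linear-dynamics-N}--\eqref{eq:linear-dynamics-x} and equate the time derivatives to zero. This yields the two stationary conditions
\begin{equation*}
\ell_j'\,\bar N_j=\sum_{i\in\FS_j}\lambda_i\,\bar x_{ij}\quad\forall j\in\BS,\qquad
\sigma_j\bar N_j=\frac{1}{|\BS_i|}\sum_{j'\in\BS_i}\sigma_{j'}\bar N_{j'}\quad\forall (i,j)\in\AS.
\end{equation*}
The second identity says that $\sigma_j\bar N_j$ equals its own average over $j\in\BS_i$ and is therefore constant on each neighborhood $\BS_i$. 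Connectedness of the restricted network on active arcs then propagates this constancy across the whole graph: there is a single $K\in\mathbb R$ with $\sigma_j\bar N_j=K$ for every $j\in\BS$. Next I would sum the first stationary condition over $j\in\BS$ and use $\sum_{j\in\BS_i}\bar x_{ij}=0$ to obtain $\sum_j\ell_j'\bar N_j=0$. Combined with $\bar N_j=K/\sigma_j$ and the positivity of $\ell_j'$ and $\sigma_j$, this forces $K=0$ and hence $\bar\bN=\boldsymbol 0$.

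Once $\bar\bN=\boldsymbol 0$, the first stationary equation collapses to $\sum_{i\in\FS_j}\lambda_i\bar x_{ij}=0$ for all $j$. To conclude $\bar\bx=\boldsymbol 0$ I would consider the perturbed point $(\bN^*,\bx^*+\epsilon\bar\bx)$. For $|\epsilon|$ small enough, strict positivity of $\bx^*$ on the active arcs absorbs the perturbation, and the two conservation laws $\sum_i\lambda_i\bar x_{ij}=0$ and $\sum_{j}\bar x_{ij}=0$ ensure that all constraints of \eqref{eq:opt} remain satisfied with the \emph{same} backend workloads $\bN^*$. The objective of this perturbed solution equals $\operatorname{OPT}+\epsilon\sum_{(i,j)\in\AS}\lambda_i\bar x_{ij}\tau_{ij}$. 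If this linear term is nonzero, one sign of $\epsilon$ produces a strictly better feasible solution, contradicting optimality of $(\bN^*,\bx^*)$; if it vanishes, $(\bN^*,\bx^*+\epsilon\bar\bx)$ is a second optimal solution, contradicting the uniqueness part of Assumption~\ref{assume:interior-general}. Either way $\bar\bx=\boldsymbol 0$.

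I expect this last step to be the main obstacle: the stationary equations alone only force $\bar\bx$ to be a weighted circulation on the active-arc bipartite graph, which need not vanish when the graph contains cycles. It is precisely the uniqueness hypothesis---which the discussion after Assumption~\ref{assume:interior-general} explains excludes degenerate cycles where the sums of forward and backward travel times coincide---that rules out such nontrivial circulations and closes the argument.
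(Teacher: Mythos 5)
Your proposal is correct and follows essentially the same route as the paper: derive the two stationary conditions from the linearized dynamics, use connectedness to show $\sigma_j\bar N_j$ is a global constant, kill that constant via $\sum_{j\in\BS_i}\bar x_{ij}=0$, and then rule out a nonzero $\bar\bx$ by exhibiting a perturbed feasible point that contradicts optimality or uniqueness under Assumption~\ref{assume:interior-general}. Your treatment of the last step is in fact slightly more careful than the paper's (you scale by a small $\epsilon$ to preserve nonnegativity and split into the cases where the linear term $\sum_{(i,j)}\lambda_i\bar x_{ij}\tau_{ij}$ is zero or not), but the substance is identical.
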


In addition, because solutions $s = - \ell_j' < 0$ have negative real parts, we can factor out $s^2 I + s \diag(\bell')$ of the characteristic equation and check if the roots of the equation $\det(I + L(s)) = 0$ have negative real parts where the ``loop transfer'' function $L(s)$ is given by
\begin{align}\label{eq:loop-function}
    L(s) =  \left(s^2 I +  s \diag(\bell') \right)^{-1} \sum_{i \in \mathcal F} \lambda_i \eta_i Q_i(s) \diag(\boldsymbol{\sigma})\,.
\end{align}

\subsection{Generalized Nyquist Criterion}

Using \eqref{eq:equilibrium-x}, we obtain that travel times are additively separable and decompose the delay matrix as follows
\[
    r_{ij}(s) = \exp(-s \tau_{ij} ) = \exp(-s c_i) \exp(s / \ell_j')\,,
\]
where $c_i$ is the Lagrange multiplier of the flow balance constraint of frontend $i \in \FS$. Defining $\balpha : \mathbb C \rightarrow \mathbb C^{|\BS|}$ to be the vector function $\alpha_j(s) = \exp(s/\ell_j')$, we can write
\[
    Q_i(s) = \exp(-2 s c_i) \diag(\balpha(s)) E_i \diag(\balpha(s))\,.
\]

We apply the latter decomposition to the loop function~\eqref{eq:loop-function}. Because $\det(I + AB) = \det(I + BA)$ for square matrices $A,B$ together with the fact that diagonal matrices are multiplicatively commutative, we can look at the characteristic equation $\det(I + \hat L(s)) = 0$ with
\begin{align}\label{eq:hat-loop}  
    \hat L(s) = \left( \sum_{i \in \mathcal F}
    \exp(-2 s c_i) \lambda_i \eta_i E_i \right) D(s)
\end{align}
where the diagonal matrix $D(s)$ is given by
\[
    D(s) = \left( s^2 I + s \diag(\boldsymbol{\ell'})\right)^{-1}  \diag(\boldsymbol \sigma) \diag(\balpha(2s))\,.
\]

From the Generalized Nyquist criterion~\citep{desoer1980generalized}, it is sufficient to check that all eigenloci (the loci traced by eigenvalues as we vary the frequency) of the loop transfer function $\hat L(s)$ cross the real line to the right of the point $-1 + \im 0$ for $s = \im w$ with $w \in \mathbb R$ and $\im$ the imaginary unit. 

The standard approach to analyze the eigenvalues of the loop transfer function $\hat L(s)$ is to decompose it as a product of a positive semi-definite matrix and a diagonal matrix~\citep{srikant2004mathematics,low2022analytical}. Because the constants $c_i$ are different across frontends, the matrix $\sum_{i \in \mathcal F}
\exp(-2sc_i) \lambda_i \eta_i E_i$ in the loop transfer function $\hat L(s)$ is not necessarily positive semi-definite and we cannot apply standard results from spectral analysis. As a warm-up, we first consider the single-frontend case, in which we can simply decompose the loop transfer function as a product of a positive semi-definite matrix and a diagonal matrix because the aforementioned sum has a single term. In the general case, we analyze the eigenvalues of the loop function using a novel uniformization technique that decomposes the matrix into a positive semi-definite matrix and a “small” perturbation. Because of space considerations, we postpone the  general case to Appendix~\ref{sec:general-case}.

\subsection{Single-Frontend Case}

To develop some intuition, we first consider the case when there is only one frontend. We begin by presenting some fundamental facts about eigenvalues and the numerical range of a matrix. These are borrowed from chapter 1 of \citet{horn1991topics}. For a square complex matrix $A \in \mathbb C^{n \times n}$, we denote by  $\operatorname{spec}(A)$ the spectrum or set of eigenvalues. Let $A^\dagger = \overline{A^\top}$ denote the Hermitian conjugate and $W(A) = \left\{ \bx^\dagger A \bx : \bx \in \mathbb C^n, \bx^\dagger \bx = 1\right\}$ denote the numerical range or field of values of $A$. The numerical range is a subset of complex numbers, i.e., $W(A) \subseteq \mathbb C$. We state some properties of the spectrum and numerical range.
\begin{itemize}
    \item The numerical range contains all eigenvalues, i.e., $\operatorname{spec}(A) \subseteq W(A)$.
    \item The numerical range is subadditive, i.e., $W(A + B) \subseteq W(A) + W(B)$ where the sum on the right-hand side is the Minkowski sum of two sets.
    %\item The non-zero eigenvalues of $A B$ are the same as the non-zero eigenvalues of $B A$ and $\det(I + AB) = \det(I + BA)$.
    \item If $A$ is positive semi-definite then $\operatorname{spec}(A B) \subseteq W(A) W(B)$. Moreover, the numerical range of $A$ is $W(A) = [\min \operatorname{spec}(A), \max \operatorname{spec}(A)]$ with $\operatorname{spec}(A) \subseteq \mathbb R_+$.
    \item The numerical range of a diagonal matrix is the convex hull of its entries, i.e., $W(\diag(\boldsymbol a)) = \operatorname{conv}\{ a_i : i = 1,\ldots,n\}$.
    \item The numerical range of $A$ is contained in a disk of radius $\|A\|$, i.e., $W(A) \subseteq \operatorname{disk}(\|A\|)$ where $\operatorname{disk}(r) = \{ z \in \mathbb C : |z| \le r\}$ and  $\|\cdot\|$ is the spectral norm of a matrix.
\end{itemize}

We begin by showing some fundamental facts about the Laplacian matrix $E_i$.

\begin{lem}\label{lem:laplacian-matrix1}The Laplacian matrix $E_i = \diag(\ba_i) - (\ba_i \ba_i^\top) / (\ba_i^\top \mathbf 1)$ is positive semi-definite with spectral radius at most one.
\end{lem}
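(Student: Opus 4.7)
The plan is to analyze the quadratic form $\bx^\top E_i \bx$ directly, exploiting the fact that $\ba_i \in \{0,1\}^{|\BS|}$ is the indicator of $\BS_i$. Observe that $\diag(\ba_i)$ simply zeroes out coordinates outside $\BS_i$, while $\ba_i^\top \bx = \sum_{j \in \BS_i} x_j$ also depends only on the restriction of $\bx$ to $\BS_i$. So if I let $\by \in \mathbb R^{|\BS_i|}$ denote the subvector of $\bx$ indexed by $\BS_i$, the quadratic form collapses to
\[
    \bx^\top E_i \bx \;=\; \|\by\|_2^2 \;-\; \frac{(\mathbf 1^\top \by)^2}{|\BS_i|}\,,
\]
using $\ba_i^\top \mathbf 1 = |\BS_i|$. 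Since $E_i$ is symmetric, this quadratic form carries all the information needed for both parts of the claim.

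For positive semi-definiteness, I would invoke the Cauchy--Schwarz inequality $(\mathbf 1^\top \by)^2 \le |\BS_i| \cdot \|\by\|_2^2$, which immediately gives $\bx^\top E_i \bx \ge 0$ for every $\bx$. Equivalently, one recognizes the $\BS_i \times \BS_i$ block $I - \mathbf 1 \mathbf 1^\top / |\BS_i|$ as the orthogonal projection onto the hyperplane $\{\by : \mathbf 1^\top \by = 0\}$, which is manifestly PSD with eigenvalues in $\{0, 1\}$; off of $\BS_i$ the matrix acts as zero, so the full spectrum of $E_i$ lies in $\{0,1\}$.

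For the spectral radius bound the same identity suffices: since the subtracted term is nonnegative and $\by$ is a subvector of $\bx$,
\[
    \bx^\top E_i \bx \;\le\; \|\by\|_2^2 \;\le\; \|\bx\|_2^2\,,
\]
so the Rayleigh quotient of $E_i$ is bounded by $1$, giving $\lambda_{\max}(E_i) \le 1$. Combined with PSD-ness, this yields $\rho(E_i) \le 1$. I do not anticipate a genuine obstacle here; the argument is essentially an unpacking of the definition. The only mild care point is the ``ignore coordinates outside $\BS_i$'' convention used throughout the paper, which is handled cleanly by passing to the restriction $\by$, and the observation that the spectral radius can in fact be attained (exactly $1$ whenever $|\BS_i| \ge 2$), which I would note in passing since it will matter for sharpness of later bounds.
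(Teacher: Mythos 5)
Your proof is correct and follows essentially the same route as the paper's: compute the quadratic form $\bx^\dagger E_i \bx = \sum_{j\in\BS_i}|x_j|^2 - |\BS_i|^{-1}\big|\sum_{j\in\BS_i}x_j\big|^2$, get positive semi-definiteness from Cauchy--Schwarz (the paper phrases it as triangle inequality plus Jensen), and bound the spectral radius by dropping the nonnegative subtracted term and noting the restricted sum is at most $\|\bx\|_2^2$. The only cosmetic difference is that you work over $\mathbb R$ while the paper uses complex vectors (needed later for the numerical range), which is immaterial since $E_i$ is real symmetric; your added observation that the restricted block is the centering projection, so the spectrum is exactly $\{0,1\}$, is a nice sharpening but not needed for the statement.
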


Armed with the previous facts, we can analyze the spectrum of $\hat L(s)$ as follows:
\begin{equation}\label{eq:spec-bound}
\begin{split}
    \operatorname{spec}(\hat L(s))
    &= \operatorname{spec} \left( 
    \exp(-2 s c_i) \lambda_i \eta_i E_i  D(s) \right) \\
    &\subseteq %\sum_{i \in \FS}
    W(\exp(-2 s c_i) \lambda_i \eta_i   D(s)) \\
    &=  \operatorname{conv}\Bigg(0, \underbrace{\frac{ \lambda_i \eta_i \sigma_j e^{-2\tau_{ij} s}}{s^2 + s\ell_j'}}_{\hat L_{ij}(s)} : (i,j) \in \AS \Bigg)\,,
\end{split}
\end{equation}
where the first inclusion follows from the bound on the spectrum of the product of a positive semi-definite matrix and an arbitrary matrix together with  Lemma~\ref{lem:laplacian-matrix1}, and the second equality because the numerical range of a diagonal matrix is the convex hull of its entries and using the formula for the diagonal matrix $D(s)$ together with $\tau_i = c_i - 1/\ell_j'$ from equation~\eqref{eq:equilibrium-x}.

Fix an arc $(i,j) \in \AS$ and let $s = \im w$. Then,
\begin{align*}
     \hat L_{ij}(\im w) &
    = \frac{\lambda_i \eta_i\sigma_j e^{-2\tau_{ij} \im w}}{\im w(\im w + \ell_j')}
     = \frac{2\tau_{ij} \lambda_i \eta_i\sigma_j}{ \ell_j'} \frac{2\tau_{ij} \ell_j' e^{-2\tau_{ij} \im w}}{2\tau_{ij} \im w(2\tau_{ij}\im w + 2\tau_{ij}\ell_j')}\\
     &= \frac{2\tau_{ij} \lambda_i \eta_i\sigma_j}{\ell_j'} \frac{\theta_{ij} e^{- \im x}}{ \im x(\im x + \theta_{ij})}\,,
\end{align*}
where the last equation follows form setting $\theta_{ij} = 2\tau_{ij} \ell_j'$ and $x = 2\tau_{ij} w$. If we assume that condition \eqref{eq:stability-condition1} holds, %, i.e., 
%\begin{align*}
%    \frac{2\tau_{ij} \lambda_i \eta_i\sigma_j}{ \ell_j'} \le 1
%\end{align*}
we obtain by Lemma 5.6 of \citet{srikant2004mathematics} that $\hat L_{ij}(\im w)$ crosses the real line at a point to the right of $-1 + \im 0$. See Figure~\ref{fig:nyquist} for a Nyquist plot of the function. Unfortunately, this condition does not immediately imply the result because the convex hull operation might result in crossings to the left of $-1 + \im 0$. This condition is clearly sufficient when all the $\hat L_{ij}(s)$ are the same, i.e., in the symmetric case when $\tau_{ij}$ and $\ell_j$ are equal across arcs and backends, respectively. Interestingly, this fact is overlooked in standard treatments on the stability analysis of TCP congestion control algorithms. Some papers such as \citet{paganini2005congestion,tian2006stability} do perform a more careful analysis of the convex hulls of eigenvalues, but their analysis does not apply directly to our setting.

In the case of single frontend with asymmetric backends, we need a more refined analysis that leverages the first-order optimality conditions of the static routing problem in \eqref{eq:equilibrium-x}.

\begin{figure}
    \centering
    \begin{subfigure}[t]{0.45\textwidth}
    \centering
    \scalebox{0.8}{ % sigmetrics 0.6
    \begin{tikzpicture}[every node/.style={scale=1}] % sigmetrics 1.3
        \newcommand{\thetaval}{1}
        \begin{axis}[axis lines = middle, xmin = -2, xmax = 0.5, ymin=-1, ymax=1]
        \addplot[domain=0.5:20,samples=500,thick]({-\thetaval*(cos(deg(x))*x + sin(deg(x))*\thetaval)/x/(x^2 + \thetaval^2)},{-\thetaval*(cos(deg(x))*\thetaval - sin(deg(x))*x)/x/(x^2 + \thetaval^2)});% ();
        \node[fill, circle,inner sep=1.5pt] at (axis cs: -1,0) {};
        \end{axis}
    \end{tikzpicture}
    }
    \caption{Plot of $w \mapsto \theta e^{-\im w} / (\im w (\im w + \theta))$ for $\theta = 1$ as $w$ increases from 0 to $\infty$. The curve crosses the real line to the right of $-1 + \im 0$, which is marked with a dot. }
    \label{fig:nyquist}
    \end{subfigure}%
    \hspace{2em}
    \begin{subfigure}[t]{0.45\textwidth}
    \centering
    \scalebox{0.8}{
    \begin{tikzpicture}[every node/.style={scale=1}]
        \newcommand{\wval}{1}
        \begin{axis}[axis lines = middle, legend pos = south east, legend entries = {$c_1=2$\\$c_2=4$\\}, %$\mathcal H_1$\\$\mathcal H_2$\\%$\operatorname{conv}(\mathcal H_1, \mathcal H_2)$\\
        xmin = -2, xmax = 0.5, ymin=-1, ymax=1]%, legend style = {nodes={scale=0.75, transform shape}}]
        \foreach \cval in {2,4}{
        \addplot+[mark=none,domain=0.01:\cval,samples=100,thick]({-(cos(deg(2*x*\wval))*\wval*(\cval-x) + sin(deg(2*x*\wval))) / (2*\wval*x*(\cval^2*\wval^2 - 2*\cval*x*\wval^2+x^2*\wval^2+1))},
        {-(-sin(deg(2*x*\wval))*\wval*(\cval-x) + cos(deg(2*x*\wval))) / (2*\wval*x*(\cval^2*\wval^2 - 2*\cval*x*\wval^2+x^2*\wval^2+1))});
        }
        \addplot[thick,mark=none,dashed,blue, name path = first]
        coordinates{(-2,-0.5) (1, 1)};
        \addplot[thick,mark=none,dashed,red, name path = second]
        coordinates{(-2,-0.25) (1, 0.5)};
        %\addplot[thick,mark=none,dashed, name path = both]
        %coordinates{(-2,0) (0, 0.5) (1, 0.5)};
        \path [name path=bottom]
        (axis cs:-2,-1) -- (axis cs: 0.5,-1);
        %\addplot+[black!30, fill opacity=0.2] fill between[of=both and bottom];
        \addplot+[blue!30,fill opacity=0.2] fill between[of=first and bottom];
        \addplot+[red!30, fill opacity=0.2] fill between[of=second and bottom];
        \end{axis}
    \end{tikzpicture}
    }
    \caption{Plot of $\tau \mapsto e^{-2\tau \im w} / (2\tau \im w (\im w (c- \tau) + 1))$ for $\tau \in [0, c)$ with $w=1$, $c \in \{2,4\}$. The separating hyperplanes are shown as a dashed line and the half-spaces as shaded areas.}
    \label{fig:convexhull}
    \end{subfigure}
    \caption{Nyquist plots. The horizontal axis is the real part and the vertical axis is the imaginary part.}
\end{figure}
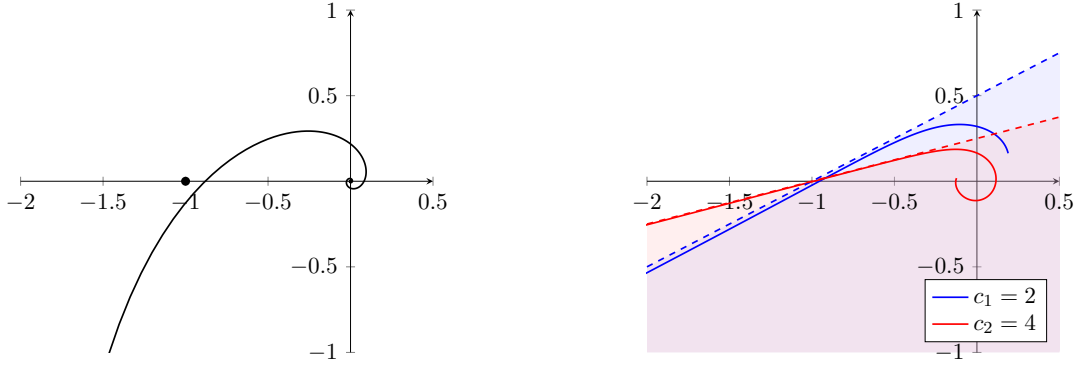

From the first-order conditions, we know that $\tau_{ij} < c_i$ and $\ell_j' = 1/(c_i - \tau_{ij})$. Therefore, we can write
\[
 \hat L_{ij}(\im w)
    = \frac{\lambda_i \eta_i\sigma_j}{\ell_j'} \frac{e^{-2\tau_{ij} \im w}}{\im w(\im w/\ell_j' + 1)}
     = \frac{2\tau_{ij} \lambda_i \eta_i\sigma_j} {\ell_j'} \frac{e^{-2\tau_{ij} \im w}}{2\tau_{ij} \im w (\im w (c_i - \tau_{ij}) + 1)}\,.
\]
Because $\hat L_{ij}(\im w)$ is conjugate symmetric, i.e., $\hat L_{ij}(\im w) = \overline{\hat L_{ij}(-\im w)}$, we only need to check results for $w \in [0, \infty)$ as results for negative values follow by symmetry.

Our analysis will study the convex hulls of these curves for a fixed value of $w$ and all possible values for the latencies $\tau$ simultaneously. The following lemma provides a geometric analysis of this curve for a fixed frontend and shows the loci generated by varying $\tau$ lies below a halfspace that goes through $-1 + \im 0$. Figure~\ref{fig:convexhull} illustrates the lemma. While the geometric fact stated in the lemma is simple, the proof is complex and requires careful trigonometric calculus. 

\begin{lem}\label{lem:geometric-bounds}Fix $c>0$ and $w \ge 0$. Consider the complex function $f : [0,c) \rightarrow \mathbb C$ given by
\[
    f(\tau) = \frac{e^{-2\tau \im w}}{2\tau \im w (\im w (c- \tau) + 1)}\,.
\]
Then, the function $f(\tau)$ lies below the line with slope $1/(wc)$ that goes through $-1 + \im 0$,  i.e., $\Re(f(\tau)) \ge -1 + wc \cdot \Im(f(\tau))$.
\end{lem}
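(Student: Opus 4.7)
The plan is to rewrite the target inequality as a single real-positivity statement, clear denominators to obtain a polynomial-trigonometric inequality in two real variables, and then exploit a quadratic-in-one-variable structure to reduce the problem to a clean univariate estimate.

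First, I would use the algebraic identity $1 + \im wc = \im w\tau + (1 + \im w(c-\tau))$ to decompose
\[
    (1 + \im wc)\, f(\tau) = \frac{e^{-2\tau\im w}}{2\tau\im w} + \frac{e^{-2\tau\im w}}{2(1 + \im w(c-\tau))}\,.
\]
Since $\Re((1+\im wc)\,f) = \Re(f) - wc\,\Im(f)$, the target inequality is equivalent to $\Re\bigl((1+\im wc) f(\tau) + 1\bigr) \ge 0$. Writing this as a ratio $P(\tau)/Q(\tau)$ of complex expressions in $e^{-2\tau\im w}$, the sign is governed by $\Re(P\bar Q)$ since $|Q|^2 > 0$. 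After introducing $u := 2\tau w > 0$ and $\phi := w(c-\tau) \ge 0$ (which satisfy $u/2 + \phi = wc$) and expanding with $e^{-\im u} = \cos u - \im\sin u$, a direct computation reduces the target to the two-variable inequality
\[
    H(u,\phi) := (1+\phi^2)(u - \sin u) + \tfrac{u}{2}(\cos u - \phi\sin u) \ge 0\,.
\]

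The second main step is to view $H(u, \phi)$ as a quadratic in $\phi$ with leading coefficient $u - \sin u \ge 0$, hence convex. Its unconstrained minimizer is $\phi^* = u\sin u/(4(u-\sin u))$. For $u \ge \pi$ we have $\sin u \le 1$ and $u - \sin u \ge \pi - 1 > 0$, and the elementary bound $(u - \sin u) + (u/2)\cos u \ge u/2 - 1 \ge \pi/2 - 1 > 0$ covers the evaluation at $\phi = 0$ (which is where the minimum on $\phi \ge 0$ lies, since $\phi^* \le 0$ in this range). For $u \in (0, \pi]$, $\phi^*$ is feasible and a discriminant calculation shows that $H \ge 0$ on $\mathbb{R}$ is equivalent, after dividing by $u^2$ and solving the resulting quadratic in $(u-\sin u)/u$, to the clean univariate inequality
\[
    4(u - \sin u) \ge u(1 - \cos u)\,, \qquad u \in [0, \pi]\,.
\]

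Finally, I would prove this univariate estimate by setting $h(u) := 3u - 4\sin u + u\cos u$. Then $h(0) = 0$ and a short calculation gives
\[
    h'(u) = 3(1-\cos u) - u\sin u = 2\sin(u/2)\bigl[3\sin(u/2) - u\cos(u/2)\bigr]\,.
\]
On $[0,\pi]$, $\sin(u/2) \ge 0$ and $\cos(u/2) \ge 0$, and the bracketed factor is $\cos(u/2)\bigl[3\tan(u/2) - u\bigr] \ge \cos(u/2)\bigl[3(u/2) - u\bigr] = \cos(u/2)\,u/2 \ge 0$ by the standard estimate $\tan x \ge x$ on $[0,\pi/2)$. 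Hence $h$ is non-decreasing on $[0,\pi]$, so $h \ge 0$. The main obstacle is recognizing the quadratic-in-$\phi$ structure at the heart of Step 2 and identifying the correct univariate reformulation: crude termwise bounds like $|\sin u|, |\cos u| \le 1$ or $\sin u \le u$ are uniformly too lossy because $H$ vanishes to high order at $u = 0$ and the two terms nearly cancel in intermediate regimes; completing the square in $\phi$ is what exposes the elementary trigonometric inequality that actually drives the estimate.
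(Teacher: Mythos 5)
Your overall strategy coincides with the paper's: after the change of variables the target becomes a quadratic in the remaining-latency variable with non-negative leading coefficient $u-\sin u$, and one minimizes over that variable. (The paper uses $\alpha = wc$ where you use $\phi = \alpha - u/2$; these are affinely related, so it is the same quadratic, and your $H(u,\phi)$ is exactly one quarter of the paper's numerator.) Your treatment of $u\in(0,\pi]$ is correct and in fact more explicit than the paper's, which evaluates the quadratic at its minimizer and simply asserts the resulting expression is non-negative: your factorization of the discriminant condition into $16\bigl(s-\tfrac{u(1-\cos u)}{4}\bigr)\bigl(s+\tfrac{u(1+\cos u)}{4}\bigr)\ge 0$ with $s=u-\sin u$, the reduction to $h(u)=3u-4\sin u+u\cos u\ge 0$, and the monotonicity proof via $h'(u)=2\sin(u/2)\bigl[3\sin(u/2)-u\cos(u/2)\bigr]$ all check out.

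There is, however, a genuine error in your case $u\ge\pi$. You assert that $\phi^{*}=u\sin u/(4(u-\sin u))\le 0$ on this range, so that the minimum of $H(u,\cdot)$ over $\phi\ge 0$ sits at $\phi=0$. But $\phi^{*}$ has the sign of $\sin u$, which is positive on $(2k\pi,(2k+1)\pi)$ for every $k\ge 1$; e.g.\ at $u=5\pi/2$ one has $\phi^{*}\approx 0.29>0$, so the constrained minimum is interior and your evaluation at $\phi=0$ does not bound it. The conclusion is still true there, and the cleanest repair is to drop the case split entirely: your discriminant criterion shows $H(u,\cdot)\ge 0$ on all of $\mathbb{R}$ if and only if $4(u-\sin u)\ge u(1-\cos u)$, and this univariate inequality holds for \emph{all} $u\ge 0$, not just $u\le\pi$ --- on $[0,\pi]$ by your monotonicity argument, and for $u\ge 2$ by the crude bound $4(u-\sin u)-u(1-\cos u)\ge 4(u-1)-2u=2u-4\ge 0$, the two ranges overlapping since $\pi>2$. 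With that observation your argument becomes a complete and self-contained proof, slightly sharper in its bookkeeping than the paper's own.
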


Therefore, for each $w>0$ there exists a hyperplane separating the convex hull from the point $-1 + \im 0$. We remark that the separating hyperplanes tilt as we change the value of $w$, so we need to construct a separating hyperplane for every value of $w$. More formally, Lemma~\ref{lem:geometric-bounds} implies that under condition~\eqref{eq:stability-condition1} for each frontend $i \in \mathcal F$ we have
\[
\operatorname{conv}\left(0, \hat L_{ij}(\im w) : j \in \BS_i \right)
\subseteq \left\{ z \in \mathbb C: \Re(z) \ge -1 + wc_i \cdot \Im(z) \right\}\,,
\]
and, thus, the corresponding convex hulls in \eqref{eq:spec-bound} cannot encircle $-1 + \im 0$. Therefore, condition~\eqref{eq:stability-condition1} is sufficient in the case of a single frontend. We have proved the following result.

\begin{prop}\label{prop:stability-one-frontend}
{\DLBshort} is locally asymptotically stable in the case of a single frontend if condition~\eqref{eq:stability-condition1} holds.
\end{prop}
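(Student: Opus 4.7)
The plan is to apply the Generalized Nyquist criterion to the loop transfer function $\hat L(s)$ defined in \eqref{eq:hat-loop}. Since Lemma~\ref{lem:zero-solution} already rules out $s = 0$ as a genuine characteristic root and the prefactor $s^2 I + s\diag(\bell')$ contributes only stable roots, local asymptotic stability reduces to showing that no eigenlocus of $\hat L(\im w)$ encircles $-1 + \im 0$ as $w$ ranges over $\mathbb R$. By conjugate symmetry $\hat L(\im w) = \overline{\hat L(-\im w)}$, it is enough to consider $w \ge 0$. I would invoke Nyquist in its crossing form: a continuous loop fails to encircle $-1$ provided it does not intersect the real ray $(-\infty, -1]$, so the goal becomes preventing any such real-axis crossing.

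With a single frontend $i$, the spectrum inclusion \eqref{eq:spec-bound} (which uses Lemma~\ref{lem:laplacian-matrix1}) places every eigenvalue of $\hat L(\im w)$ inside $\operatorname{conv}\bigl(0, \hat L_{ij}(\im w) : j \in \BS_i\bigr)$. Using the first-order optimality relation $\tau_{ij} = c_i - 1/\ell_j'$, each vertex factors as $\hat L_{ij}(\im w) = k_j\, f_{c_i}(\tau_{ij})$, where $k_j := 2\tau_{ij}\lambda_i\eta_i\sigma_j/\ell_j'$ and $f_c(\tau) = e^{-2\tau\im w}/\bigl(2\tau\im w(\im w(c-\tau)+1)\bigr)$ is the geometric function of Lemma~\ref{lem:geometric-bounds}. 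The stability hypothesis \eqref{eq:stability-condition1} is precisely the statement that $k_j < 1$ for every $j \in \BS_i$. Lemma~\ref{lem:geometric-bounds} gives $\Re(f_{c_i}(\tau_{ij})) \ge -1 + wc_i\,\Im(f_{c_i}(\tau_{ij}))$, so multiplying through by $k_j \in [0,1)$ yields the strict inequality $\Re(\hat L_{ij}(\im w)) > -1 + wc_i\,\Im(\hat L_{ij}(\im w))$. Equivalently, each vertex lies in the open halfspace $H_w^\circ := \{z \in \mathbb C : \Re(z) > -1 + wc_i\,\Im(z)\}$ whose boundary passes through $-1 + \im 0$.

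Because $0 \in H_w^\circ$ and the halfspace is convex, the entire convex hull, and in particular the eigenlocus at frequency $w$, lies inside $H_w^\circ$. At any real-axis crossing we have $\Im(z) = 0$, and the halfspace constraint collapses to $\Re(z) > -1$, precluding intersection with $(-\infty, -1]$. Applying this for every $w \ge 0$ and extending by conjugate symmetry gives that no eigenlocus encircles $-1 + \im 0$, so the Generalized Nyquist criterion of \citet{desoer1980generalized} yields local asymptotic stability. The main obstacle in this argument is the subtle fact that the separating halfspace $H_w^\circ$ rotates with $w$, so no single global halfspace would suffice; Lemma~\ref{lem:geometric-bounds} supplies a frequency-adapted hyperplane with slope $1/(wc_i)$, and the strict contraction $k_j < 1$ is exactly what upgrades the lemma's closed halfspace to the open one needed to rule out tangency to the critical line.
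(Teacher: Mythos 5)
Your proposal is correct and follows essentially the same route as the paper: the spectrum inclusion \eqref{eq:spec-bound} via Lemma~\ref{lem:laplacian-matrix1}, the factorization $\hat L_{ij}(\im w) = k_j f_{c_i}(\tau_{ij})$ using $\tau_{ij} = c_i - 1/\ell_j'$, and the frequency-dependent separating halfspace from Lemma~\ref{lem:geometric-bounds}. Your explicit observation that $k_j < 1$ upgrades the closed halfspace to an open one (so the loci stay strictly to the right of $-1+\im 0$) is a slightly more careful rendering of a step the paper leaves implicit, but it is not a different argument.
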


\section{Numerical Results}

In this section, we numerically explore the stability and optimality of our algorithm under different settings using simulation. In our stability analysis, we explore whether the system variables (routing probabilities and workloads) reach an equilibrium. In our optimality analysis, we study whether the algorithm converges to an optimal point and its performance, as measured by the workloads at the backends, along the path. We also compare {\DLBshort} with other load balancing algorithms.

We conduct three sets of experiments. In the first set of experiments, we study a simple network with one frontend and two backends, where we visualize the algorithm's stability by looking at the evolution of workloads. In the second set of experiments, we programmatically explore the local stability and optimality under different system topologies. In the third set of experiments, we explore the global stability {\DLBshort} and compare its performance with other algorithms.

To simulate the algorithm, we used Euler method with linear interpolation to handle delays and projected gradient descent as described in \eqref{eq:update-discrete} to handle discrete time steps. The projection operator is implemented using the sort algorithm from \citet{blondel2014large}. %Algorithm~\ref{alg:projection} in Appendix~\ref{app:projection}.

\subsection{Stability in a One-Frontend Two-Backend Network}

We consider the one-frontend-two-backend network depicted in Figure~\ref{fig:simple-model}. To simplify the analysis we assume processing rate functions have a squared-root dependence on workloads: $\ell_j(N_j) = \sqrt{a_j +  b_j N_j } - \sqrt{a_j}$ with $a_j > 0$ and $b_j>0$. These processing functions are increasing, concave, and twice differentiable. Moreover, they satisfy that $-\ell_j''(N_j)/\ell_j'(N_j)^3 = 2/b_j$, which is independent of the workload. We illustrate them in Figure~\ref{fig:processing-rate-function1}.

\begin{figure}[t]
    \centering
\begin{tikzpicture}
    \node[circle, draw] (n1) at (0,-1) {$f_1$};
    \node[circle, draw] (l1) at (2,0) {$b_1$};
    \node[circle, draw] (l2) at (2,-2) {$b_2$};
    \node at (4,0) {$\ell_1(N_1)$};
    \node at (4,-2) {$\ell_2(N_2)$};
    \draw[->] (n1) -- (l1) node [midway,above,sloped] {$\tau_{11}$};
    \draw[->] (n1) -- (l2) node [midway,above,sloped] {$\tau_{12}$};
    \node (lambda1) at (-1.5,-1) {$\lambda_1$};
    \draw[->] (lambda1) -- (n1);
\end{tikzpicture}
    \caption{A network with one frontend and two backends.}
    \label{fig:simple-model}
\end{figure}
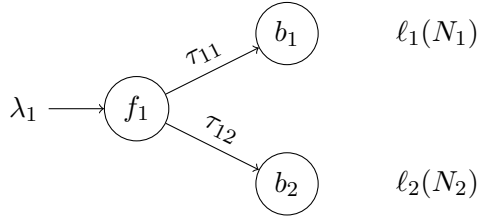

In this case, the stability condition \eqref{eq:stability-condition1} reduces to
\[
    \max_{j \in \BS} \frac{\tau_{1j} \eta_1 \lambda_1}{b_j} < 1\,,
\]
which is independent of the optimal workloads. We let $\eta_1^{c} = \min_{j \in \BS} b_j/({\tau_{1j} \lambda_1})$ to be the critical step-size that meets the stability condition exactly.

We consider symmetric backends with $a_j=1$ and $b_j=2$, and frontend with an arrival rate of $\lambda_1 = 1$. We try long delays ($\tau_{1j}=1$) and short delays ($\tau_{1j}=0.1$). The critical step-sizes are $\eta_1^c = 5$ and $\eta_1^c = 0.5$, respectively. For each case, we choose small step-sizes that satisfy condition~\eqref{eq:stability-condition1} and large step-sizes that do not satisfy condition~\eqref{eq:stability-condition1}. 

We plot the evolution of workloads and routing probabilities in Figure~\ref{fig:simple-model-path} for the different combinations of delays and step-sizes. We present results for initial workloads of zero and routing probabilities equal to $(.1, .9)$. Similar behavior is observed for other initial conditions.

Delays introduce oscillatory behavior in {\DLBshort}'s decision variables with the period of the oscillations being larger with longer delays. We see that step-sizes lower than the critical values lead to convergence, with the amplitude of the oscillations decreasing monotonically over time. For step-sizes higher than the critical values, the amplitude of the oscillations is large and does not decrease over time. Interestingly, the routing proportions vary widely for large step-sizes, even reaching the boundaries of the probability simplex in some cases. This experiment illustrates that the single-frontend stability condition \eqref{eq:stability-condition1} is nearly tight for this particular example.

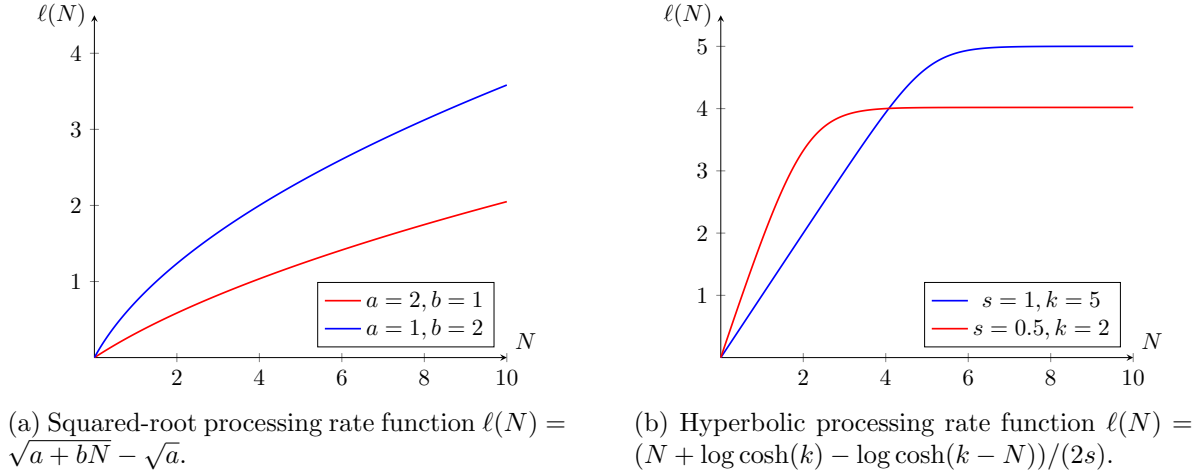
\begin{figure}[t]
    \centering
    \begin{subfigure}[t]{0.45\textwidth}
    \centering
    \scalebox{0.8}{ % sigmetrics 0.5
    \begin{tikzpicture}[every node/.style={scale=1}] % sigmetrics 1.3
        \begin{axis}[axis lines = middle, xmin = 0, xmax = 10, ymin=0, ymax=4.5, xlabel=$N$,ylabel=$\ell(N)$, legend pos = south east, legend entries = {$a=2,b=1$\\$a=1, b= 2$\\}, xlabel style={above right}, ylabel style={left}]
        \newcommand{\pa}{2}
        \newcommand{\pb}{1}        \addplot[domain=0:10,samples=500,thick,red](x,{sqrt(\pa + \pb*x) - sqrt(\pa)});
        \renewcommand{\pa}{1}
        \renewcommand{\pb}{2}  
        \addplot[domain=0:10,samples=500,thick,blue](x,{sqrt(\pa + \pb*x) - sqrt(\pa)});
        \end{axis}
    \end{tikzpicture}
    }
    \caption{Squared-root processing rate function $\ell(N) = \sqrt{a + bN} - \sqrt{a}$.}
    \label{fig:processing-rate-function1}
    \end{subfigure}%
    \hspace{2em}
    \begin{subfigure}[t]{0.45\textwidth}
    \centering
    \scalebox{0.8}{
    \begin{tikzpicture}[every node/.style={scale=1}]
        \begin{axis}[axis lines = middle, xmin = 0, xmax = 10, ymin=0, ymax=5.5, xlabel=$N$,ylabel=$\ell(N)$, legend pos = south east, legend entries = {$s=1,k=5$\\$s=0.5, k= 2$\\}, xlabel style={above right}, ylabel style={left}]
        \newcommand{\ps}{1}
        \newcommand{\pk}{5}
        \newcommand{\ee}{2.71828}
        \addplot[domain=0:10,samples=500,thick,blue]     
        {(2*x+ln((\ee)^(2*\pk)+1)-ln((\ee)^(2*\pk)+(\ee)^(2*x)))/(2*\ps)};
        \renewcommand{\ps}{0.5}
        \renewcommand{\pk}{2}
        \addplot[domain=0:10,samples=500,thick,red]     
        {(2*x+ln((\ee)^(2*\pk)+1)-ln((\ee)^(2*\pk)+(\ee)^(2*x)))/(2*\ps)};
        \end{axis}
    \end{tikzpicture}
    }
    \caption{Hyperbolic processing rate function $\ell(N) = (N+\log\cosh(k) - \log\cosh(k - N))/(2s)$.}
    \label{fig:processing-rate-function2}
    \end{subfigure}
    \caption{Plots of processing rate functions used in our numerical experiments.}
\end{figure}

\begin{figure}[t!]
    \centering
    \subcaptionbox{$\eta = 0.4$ and $\tau = 1$}{\includegraphics[width=0.4\linewidth]{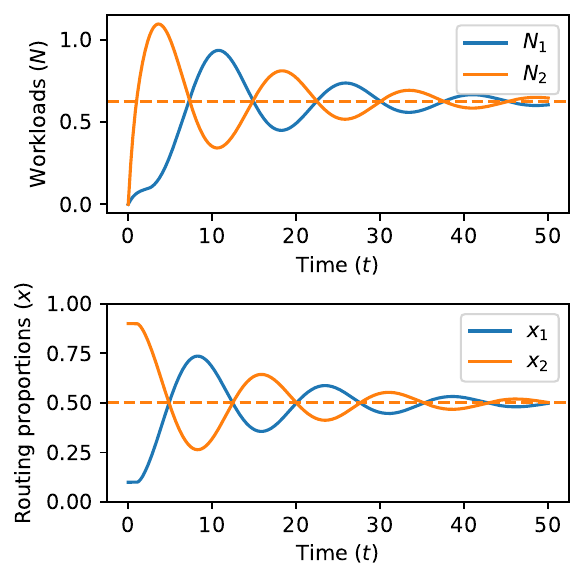}}
    \subcaptionbox{$\eta = 0.6$ and $\tau = 1$}{\includegraphics[width=0.4\linewidth]{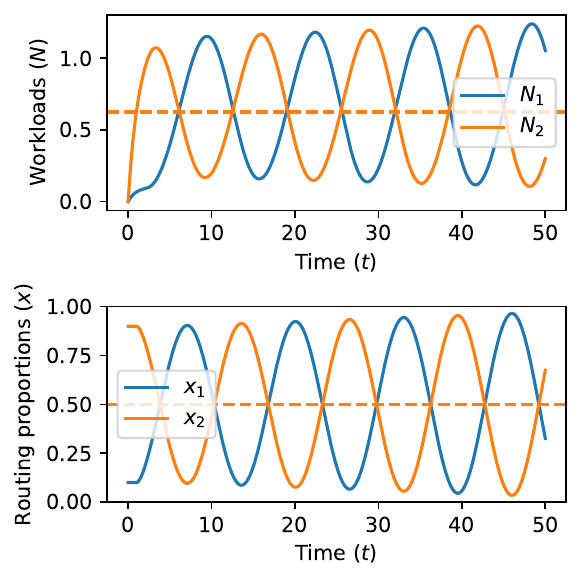}}
    \\
    \subcaptionbox{$\eta = 4$ and $\tau = 0.1$}{\includegraphics[width=0.4\linewidth]{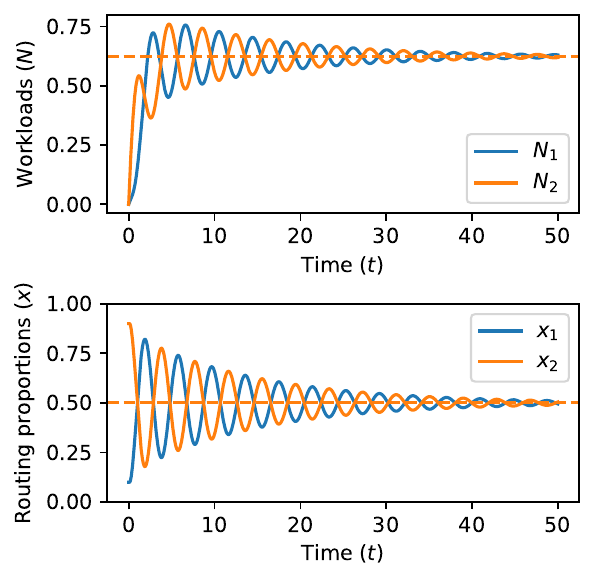}}
    \subcaptionbox{$\eta = 6$ and $\tau = 0.1$}{\includegraphics[width=0.4\linewidth]{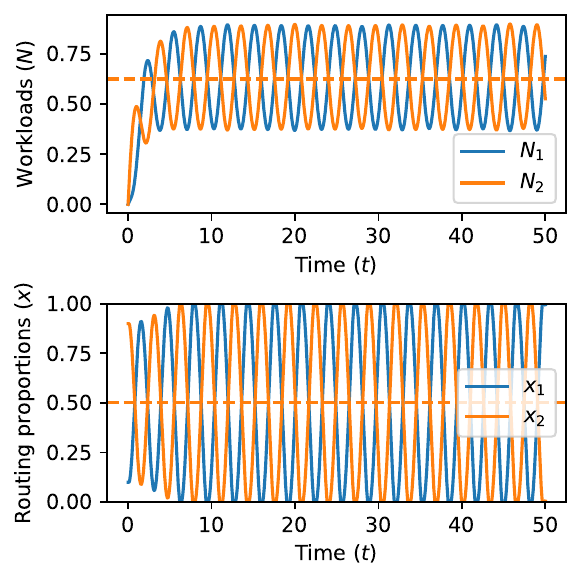}}
    \caption{Evolution of workloads and routing probabilitites as a function of time. Solids lines represent the decision variables of {\DLBshort} and dashed lines represent the optimal solutions of the static routing problem. The top panels have long delays ($\tau=1$) and the bottom panels have short delays ($\tau=0.1$). The algorithm has lower step-sizes and is stable in the left panels and has higher step-sizes and is unstable in the right panels.}
    \label{fig:simple-model-path}
\end{figure}

\subsection{Local Stability Across Different Topologies}

\newcommand{\taumax}{\tau^{\max}}

In this set of experiments, we vary (i) the size of the network and explore imbalanced networks with many frontends and few backends and vice-versa, (ii)  network latencies, (iii) the degree of heterogeneity between frontends arrival rates and backends processing rates.

We generate random complete networks as follows. The number of frontends is $\max(1, \mathrm{Poisson}(\mu_F))$ where $\mathrm{Poission}(\mu_F)$ is a Poisson random variable with mean $\mu_F$, while the number of backends is $\max(2, \mathrm{Poisson}(\mu_B))$ to have at least two backends. We study small networks with $\mu_F = \mu_B = 2$ and larger networks with $\mu_F = \mu_B = 5$.

Backend $j \in \BS$ has $k_j = \max(1, \mathrm{Poisson}(5))$ servers, each taking $s_j$ seconds to process a request, where $s_j$ is lognormally distributed with a expected value of 1 second. The processing rate function of backend $j$ has the hyperbolic form $\ell_j(N_j) = (N_j+\log\cosh(k_j) - \log\cosh(k_j - N_j))/(2s_j)$. The processing rate function is approximately linear with rate $1/s_j$ when the workloads are smaller than the number of servers $k_j$ and reaches a plateau around $k_j$ as capacity gets saturated. We illustrate the hyperbolic processing rate functions in Figure~\ref{fig:processing-rate-function2}.

%, which provides a soft-min approximation of the processing rate function for multiple servers with fixed processing rates.     %Therefore, serving latencies are centered around 1 second.

For network latencies, we place frontends and backends at random points in the unit sphere and then compute distances between frontends and backends using the great circle distance. Denoting by $d_{ij}\ge0$ the distance in the unit sphere between frontend $i \in \FS$ and backend $j \in \BS$, we set $\tau_{ij} = d_{ij} / \pi \cdot \taumax$ where $\taumax$ is the maximum possible network latency and $\pi$ is Archimedes' constant. Because great circle distances in the unit sphere are at most $\pi$, we have that $\tau_{ij} \in [0, \taumax]$. To measure the impact of network latencies, we try $\taumax \in \{0.1, 1\}$, corresponding to low and high network latencies, respectively. In the first case, network latencies are roughly 10\% of the typical serving latency without congestion, and in the second case, network latencies are comparable to serving latencies without congestion.

For the frontend arrival rates, we first pick a vector $\boldsymbol{y}$ in the $|\FS|$-dimensional simplex and then set $\lambda_i = y_i \cdot \rho \cdot \sum_{b \in \BS} \ell_b(\infty)$ where $\rho = 0.9$ is the utilization of the network.

%In our experiments, we first try $\mu_F = \{2, 5\}$ and $\mu_B \in \{2, 5\}$ for the means of the frontends and backends Poisson random variables. T

For the step-sizes, we first compute critical step-sizes $\eta_i^{c}$ for each frontend $i \in \FS$ that make the stability condition \eqref{eq:stability-general} hold with equality. The step-sizes are chosen to be proportional to arrival rates, i.e., $\eta_i^{c} / \lambda_i$ is constant across frontends. To do so, we leverage that the left-hand side of \eqref{eq:stability-general} is a positively homogeneous function of step-sizes. We then set the step-sizes to be $\eta_i = \alpha \cdot \eta_i^{c}$, where we try step-size multipliers $\alpha \in \{0.5, 2\}$. For values $\alpha<1$ condition \eqref{eq:stability-condition1} is satisfied while for values $\alpha>1$ the condition is violated. 

For each combination of parameters, we generate 10 random instances and run the simulation for $T=100$ seconds. The processing rate functions are flat for large values of the workload, which can lead to gradients that are too large. To avoid overflows, we clip the gradients of frontend $i\in\FS$ to be at most $4 c_i$, where $c_i$ is the optimal Lagrange multiplier. To measure the impact of perturbations on local stability, the initial routing probabilities are set to $\bx(t) = 0.9 \cdot \bx^* + 0.1 \cdot \bx^{\mathrm{RANDOM}}$ for all $t\le 0$ where $\bx^*$ are the optimal routing probabilities and $\bx_i^{\mathrm{RANDOM}}$ is a drawn uniformly at random from the probability simplex $\Delta_i$ for each frontend $i\in\FS$, and the initial workloads are set to $\bN(t) = 0.9 \cdot \bN^* + 0.1 \cdot \bN^{\mathrm{RANDOM}}$ for $t\le0$ where $\bN^*$ are the optimal workloads and $N_j^{\mathrm{RANDOM}}$ is a drawn uniformly from $[0, 2 k_j]$ for each backend $j \in \BS$.

We measure the gap of the algorithm as the relative difference between average amount of request in our algorithm and the static routing problem, i.e., 
\begin{align}\label{eq:GAP}
    \operatorname{GAP} = \frac 1 {\operatorname{OPT}} \frac 1 T \int_{0}^T \left( \sum_{j \in \BS} N_j(t) + \sum_{(i,j) \in \AS} N_{ij}(t) \right) - 1\,.    
\end{align}
Note that the gap can be negative if the initial workloads are low and {\DLBshort} quickly converges to the optimal values. 

We also measure convergence to the optimal solution by looking at the distance to optimality, averaged over the last $4 \taumax$ seconds. Averaging leads to a better representation of convergence trends as oscillations can lead to widely changing distances to optimality. We measure convergence of the workloads and routing probabilities as follows
\[
    \operatorname{error}_N = \frac 1 {4\taumax} \int_{T -4\taumax}^{T} \| \bN(t) - \bN^*\|_2 dt \quad \text{and} \quad \operatorname{error}_x = \frac 1 {4\taumax} \int_{T -4\taumax}^{T} \| \bx(t) - \bx^*\|_2 dt\,.
\]
Finally, we visually inspect the evolutions of workloads generated by {\DLBshort} and record the fraction of instances in which they converge to the optimal levels.

We report results in Table~\ref{tab:results}, where we present averages over all 10 instances for each combination of parameters. We see that when our sufficient condition \eqref{eq:stability-general} is satisfied (step-size multiplier $\alpha=0.5$), {\DLBshort} converges to an optimal solution when starting close to it. When our sufficient condition is not satisfied (step-size multiplier $\alpha=2$), {\DLBshort} can fail to converge. Moreover, our condition is not necessary as the algorithm converges even when the condition is not satisfied. Finally, averages GAPs and distances of the workloads to the optimal levels ($\operatorname{error}_N$) are usually small when step-sizes satisfy the stability condition. We observe, however, that the distance of the routing probabilities to the optimal ones ($\operatorname{error}_x$) does not always converge to zero even when workloads converge to optimality. In those cases, networks have cycles whose travel times sum to close to zero and there is a large set of solutions that are nearly optimal. Non-uniqueness, however, does not seem to impact convergence of the objective value. Errors for small and large networks are similar, but GAPs are smaller for smaller networks, which suggests than {\DLBshort} can take longer to converge in larger networks.

\begin{table}
\small
\begin{tabular}{ccccccc}
& Step-size & Satisfies \\
Network parameters  & multiplier & stability cond. \eqref{eq:stability-general} & 
$\operatorname{GAP}$ & $\operatorname{error}_N$  & $\operatorname{error}_x$ & Converged? \\
\hline 
$(\mu_F=2, \mu_B=2, \taumax=0.1)$ & $\alpha=0.5$ & Yes & 0.09\%  & 0.00276 
 & 0.0247 & 100\% \\
$(\mu_F=2, \mu_B=2, \taumax=0.1)$ & $\alpha=2$ & No & 105\%
 & 40.8 & 0.391 & 30\% \\
$(\mu_F=2, \mu_B=2, \taumax=1)$ & $\alpha=0.5$ & Yes & 0.13\%
 & 0.00314 &	0.0289 & 100\% \\
$(\mu_F=2, \mu_B=2, \taumax=1)$ & $\alpha=2$ & No & 57.3\% & 26.80 & 0.271 & 40\% \\
$(\mu_F=5, \mu_B=5, \taumax=0.1)$ & $\alpha=0.5$ & Yes & 0.32\% & 0.0170 & 0.141 & 100\% \\
$(\mu_F=5, \mu_B=5, \taumax=0.1)$ & $\alpha=2$ & No & 2.95\% & 0.933 & 0.150 & 80\% \\
$(\mu_F=5, \mu_B=5, \taumax=1)$ & $\alpha=0.5$ & Yes & 1.03\% & 0.252 & 0.144 & 100\% \\
$(\mu_F=5, \mu_B=5, \taumax=1)$ & $\alpha=2$ & No & 4.18\% & 4.39 & 0.087 & 70\% \\
\end{tabular}
\caption{Average performance of {\DLBshort} for different combination of parameters when initial state is close to optimality.}\label{tab:results}
\end{table}

\subsection{Global Stability and Benchmarking}

In this final set of experiments, we study the global stability of {\DLBshort} and compare our algorithms against the following load balancing algorithms:
\begin{itemize}
    \item \emph{Least workload} (LW), which routes an incoming request to the connected backend with lowest workload, i.e., at time $t$ frontend $i$ picks a backend $j \in \BS_i$ with the lowest $N_j(t - \tau_{ij})$.
    
    \item \emph{Least latency} (LL), which routes an incoming request to the connected backend with lowest network and serving latency, i.e., at time $t$ frontend $i$ picks a backend $j \in \BS_i$ with the lowest $\tau_{ij} + L_{j}(N_j(t - \tau_{ij}))$ where $L_j(N) = N / \ell_j(N)$ is as estimate for the serving latency of backend $j$ when the workload is $N$.
    
    \item \emph{Greatest Marginal Service Rate} (GMSR), which routes requests to the connected backends with the largest greatest marginal service rate, i.e., at time $t$ frontend $i$ picks a backend $j \in \BS_i$ with the largest $\ell_j'(N_j(t - \tau_{ij}))$. 
\end{itemize}

We remark that policies such as Join-the-Shortest-Queue (JSQ) and Join-the-Idle-Queue (JIQ) cannot be directly implemented in our setting because there is no notion of queue in our model. Nevertheless, LW is similar in spirit to JSQ or MaxPressure be directly implemented in our setting because there is no notion of queue in our model.as it chooses the least congested backend while LL is similar to the Never-Queue policy of \citet{shenker1989optimal}, which chooses the fastest server of the idle ones.

We use the random networks from the second set of experiments but choose the initial routing probabilities and workloads to be random points not necessarily close to the optimal ones. That is, we set $\bx(t) = \bx^{\mathrm{RANDOM}}$ for all $t\le 0$ and $\bN(t) = \bN^{\mathrm{RANDOM}}$ for $t\le0$, with $\bx^{\mathrm{RANDOM}}$ and $\bN^{\mathrm{RANDOM}}$ chosen as before. Moreover, we run the simulation for $T=1000$ seconds. For {\DLBshort}, we try step-size multipliers $\alpha = \{0.01, 0.05, 0.1, 0.5\}$ and report the best multiplier for each instance. For each instance, we run all policies and compute the gap relative to optimal static routing problem \eqref{eq:GAP} but averaged over the last $4 \taumax$ seconds and the distance to the optimal workloads $\operatorname{error}_N$. Results are presented in Table~\ref{tab:results-benchmarks}, where we present averages over all 10 instances for each combination of parameters.

Our results confirm the global stability of {\DLBshort} for arbitrary initial states. In all examples, we visually verify that the routing probabilities and workloads of {\DLBshort} converge to optimal values. When choosing the step-sizes we observe a tradeoff between convergence and responsiveness. Low step-sizes guarantee convergence to optimal values at the expense of longer convergence times. When initial routing probabilities are very suboptimal (e.g., we initially route a large amount of requests to a small backend), workloads can take long excursions before attaining optimal values. When step-sizes are large, {\DLBshort} reaches quickly the orbit of an optimal solution, but it can perpetually oscillate around optimal values.

The load balancing LW, LL, and GMSR do not converge to optimal values, and their performance relative to the optimal static routing is poor. Among these, LL performs best, but, still, its performance is orders of magnitude worse than {\DLBshort}. Because of feedback delays, the workloads under these policies oscillate widely, with the amplitude of the oscillations increasing proportionally with the network latencies. Intuitively, these policies are too reactive and delayed information leads to oscillatory behavior. For example, LL aims to balance the backends' workloads by routing requests to the backend with the least workload. When a frontend learns that backends' workloads are equal, the system might have already moved beyond the equilibrium in the opposite direction. This overcorrection can then trigger a response in the other direction, again based on delayed information, leading to oscillations around the equilibrium.

\begin{table}
\small
\begin{tabular}{ccccccccc}
& \multicolumn{2}{c}{{\DLBshort}} & \multicolumn{2}{c}{LW} & \multicolumn{2}{c}{LL} & \multicolumn{2}{c}{GMSR} \\
Network parameters  & $\operatorname{GAP}$ & $\operatorname{error}_N$  & $\operatorname{GAP}$ & $\operatorname{error}_N$& $\operatorname{GAP}$ & $\operatorname{error}_N$& $\operatorname{GAP}$ & $\operatorname{error}_N$ \\
\hline 
$(\mu_F=2, \mu_B=2, \taumax=0.1)$ & 0.057\% & 0.000340 & 38.4\% & 5.38 & 40.1\% & 6.79 & 19.4\% & 2.84\\ 
$(\mu_F=2, \mu_B=2, \taumax=1)$ & 0.17\% & 0.00145 & 136\% & 37.5 & 92.9\% & 26.6 & 170\% & 56.8 \\
$(\mu_F=5, \mu_B=5, \taumax=0.1)$ & 0.54\% & 0.0144 & 73.3\% & 11.3 & 129\% & 24.7 & 42.3\% & 7.44\\
$(\mu_F=5, \mu_B=5, \taumax=1)$ & 2.51\% & 0.347 & 260\% & 75.6 & 173\% & 61.4 & 380\% & 156
\end{tabular}
\caption{Average for different combination of parameters for random initial states and different policies. LW stands for least workload, LL stands for least latency, and GMSR stands for greatest marginal service rate.}\label{tab:results-benchmarks}
\end{table}

\section{Conclusions}

In this paper, we introduce {\DLBshort}, a distributed algorithm for load balancing in bipartite networks with network latencies based on decentralized gradient descent. We prove our algorithm is locally asymptotically stable and provide sufficient conditions for stability on the step-size of gradient descent. 

Several interesting research directions stem from this work. First, providing sufficient conditions for global stability would be a natural research direction. Our experiments suggest that {\DLBshort} can be globally stable, but step-size tuning is critical to guarantee convergence. Second, our results focus on a fluid model in which dynamics are deterministic. In practice, arrivals and processing times are random, and it would be interesting to study the performance of {\DLBshort} using a discrete, stochastic model. Finally, we assumed knowledge of the processing rate gradients, which need to be estimated. Another direction is to design a data-driven version of {\DLBshort} that learns gradients on the fly.

%Finally, while we present our theory for frontend-dependent step-sizes, our result can be easily extended to multiplicatively separable step-sizes of the form $\eta_{ij} = \eta_i \cdot \eta_j$. Unfortunately, we do not know how to extend our result to more general forms for the step-sizes.

\bibliographystyle{plainnat}
\bibliography{references}

\newpage
\appendix

\section{Proof of Main Result for General Networks}\label{sec:general-case}

We now present our analysis for general networks. We need to analyze the loop transfer function \eqref{eq:hat-loop}. Recall that because the constants $c_i$ are different across frontends, the matrix $\sum_{i \in \mathcal F}
\exp(-2sc_i) \lambda_i \eta_i E_i$ in the loop transfer function is not necessarily positive semi-definite and we cannot simply decompose the loop function $\hat L(s)$ as a product of positive semi-definite matrix and a diagonal matrix. We analyze the loop function using a uniformization technique that decomposes the matrix into a positive semi-definite matrix and a ``small'' perturbation. This technique is inspired by \citet{han2006multi}, but our analysis is more complex because of the summation over frontends, which introduces multiple terms. Let $\hat c \in \mathbb R_+$ be a ```pivot'' to be optimized later. We write
\[
    \sum_{i \in \mathcal F}
\exp(-2sc_i) \lambda_i \eta_i E_i = \exp(-2s\hat c) \Bigg(\underbrace{\sum_{i \in \mathcal F} \lambda_i \eta_i E_i}_{\text{p.s.d.}} +  \underbrace{\sum_{i \in \mathcal F}
\Big(\exp\big(2s(\hat c - c_i)\big) - 1\Big) \lambda_i \eta_i E_i}_{\text{perturbation}} \Bigg)\,.
\]
The following lemma provides some intuition for our decomposition.
\begin{lem}\label{lem:exp-bound} For every $a \in \mathbb R$ we have that $|\exp(\im a) - 1| \le |a|\,.$
\end{lem}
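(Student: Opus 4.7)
The plan is to prove this by a direct computation together with the standard inequality $|\sin(x)| \le |x|$ for all real $x$. First I would expand using Euler's formula: $\exp(\im a) - 1 = (\cos(a) - 1) + \im \sin(a)$, so
\[
    |\exp(\im a) - 1|^2 = (\cos(a) - 1)^2 + \sin^2(a) = 2 - 2\cos(a)\,.
\]
Next I would apply the half-angle identity $1 - \cos(a) = 2 \sin^2(a/2)$ to obtain $|\exp(\im a) - 1|^2 = 4 \sin^2(a/2)$, so $|\exp(\im a) - 1| = 2 |\sin(a/2)|$. Combining with $|\sin(a/2)| \le |a/2|$ yields the claim. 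There is no real obstacle here; the only thing to be careful about is which form to cite for $|\sin(x)| \le |x|$ (an elementary fact from real analysis).

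An equally short alternative, which may be preferable for symmetry with surrounding analysis, is the fundamental theorem of calculus argument: write $\exp(\im a) - 1 = \int_0^a \im \exp(\im t)\,dt$ and bound $|\exp(\im a) - 1| \le \int_0^{|a|} |\im \exp(\im t)|\,dt = |a|$, since $|\im \exp(\im t)| = 1$ for real $t$. Either route is a few lines and requires no delicate estimates.
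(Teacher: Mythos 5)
Your first argument is correct and is essentially the paper's own proof: both expand via Euler's formula to get $|\exp(\im a)-1|^2 = 2-2\cos a$, and the only difference is that you finish with the half-angle identity plus $|\sin x|\le|x|$ while the paper invokes the equivalent bound $\cos a \ge 1 - a^2/2$ directly. Your alternative integral argument is also valid, but no comparison is needed since the main route matches the paper.
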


For low frequencies $w$, Lemma~\ref{lem:exp-bound} implies that the perturbation term is small and the approximation is of good quality when $s = \im w$. For high frequencies $w$, however, the perturbation term is large but $D(\im w)$ has norm of order $|w|^{-2}$ and, thus, all eigenvalues of the loop function are close to zero. We require a delicate analysis to appropriately balance these trade-offs.

The following linear algebra lemma provides a key ingredient of our analysis as it allows to analyze the spectrum of the loop function in terms of the positive semi-definite and perturbation terms of our decomposition.

\begin{lem}\label{lem:spectrum-product-sum}
Let $A$ be a positive semi-definite singular matrix with pseudo-inverse $A^+$. If $\operatorname{range}(P) \subseteq \operatorname{kernel}(A)^\bot$, then
\[
    \operatorname{spec}\left( (A+P) B\right) \subseteq W(A) \left( W(B) + W(A^+ P B) \right)\,.
\]
\end{lem}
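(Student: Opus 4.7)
The plan is to factor $A+P$ as $A \cdot M$ for a single matrix $M$, after which the statement follows by combining two of the numerical-range facts already recalled in the excerpt: the spectral bound $\operatorname{spec}(AM) \subseteq W(A) W(M)$ for positive semi-definite $A$, and subadditivity $W(M+M') \subseteq W(M) + W(M')$.

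The crucial algebraic identity is $A A^+ P = P$. This is where the range hypothesis enters. Since $A$ is positive semi-definite, it is Hermitian, so $\operatorname{kernel}(A)^\bot = \operatorname{range}(A^\dagger) = \operatorname{range}(A)$; the assumption therefore reads $\operatorname{range}(P) \subseteq \operatorname{range}(A)$. A standard property of the Moore--Penrose pseudoinverse is that $AA^+$ is the orthogonal projector onto $\operatorname{range}(A)$, hence it fixes every vector in $\operatorname{range}(P)$. Applying this column by column to $P$ gives $AA^+ P = P$, which rearranges to the factorization
\[
A + P \;=\; A + AA^+ P \;=\; A\bigl(I + A^+ P\bigr)\,.
\]

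With this in hand, $(A+P)B = A\bigl(B + A^+ P B\bigr)$. Applying the bullet that $\operatorname{spec}(AM) \subseteq W(A) W(M)$ for positive semi-definite $A$ with $M = B + A^+ P B$, and then the subadditivity of the numerical range, yields
\[
\operatorname{spec}\bigl((A+P)B\bigr) \;\subseteq\; W(A) \cdot W(B + A^+ P B) \;\subseteq\; W(A) \bigl( W(B) + W(A^+ P B) \bigr)\,,
\]
which is the desired inclusion.

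I do not expect a real obstacle in this proof: everything reduces to the projector identity $AA^+ P = P$, and the remainder is a mechanical combination of cited facts. The only thing worth checking is the edge case $\mu = 0 \in \operatorname{spec}((A+P)B)$, which is automatic because $A$ is singular and hence $0 \in \operatorname{spec}(A) \subseteq W(A)$, so $0 = 0 \cdot z$ lies in the Minkowski product for any $z$ in the second factor. Thus the inclusion holds uniformly across all eigenvalues without any case analysis.
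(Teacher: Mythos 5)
Your proof is correct, but it takes a genuinely different and more direct route than the paper's. The paper regularizes: it sets $A_\epsilon = A + \epsilon I$, which is invertible, writes $(A_\epsilon + P)B = A_\epsilon(B + A_\epsilon^{-1}PB)$, applies the same two numerical-range facts to get the inclusion with $A_\epsilon$ in place of $A$, and then passes to the limit $\epsilon \downarrow 0$. That last step is where the paper spends its effort: it must argue that eigenvalues and numerical ranges behave continuously under the perturbation, and it uses the range hypothesis only at the very end, via a spectral decomposition of $A$, to show $\lim_{\epsilon \downarrow 0} A_\epsilon^{-1}P = A^+P$. You instead deploy the range hypothesis up front to obtain the exact projector identity $AA^+P = P$ (valid because $A$ is Hermitian, so $\operatorname{kernel}(A)^\bot = \operatorname{range}(A)$, and $AA^+$ is the orthogonal projector onto $\operatorname{range}(A)$), which gives the exact factorization $A + P = A(I + A^+P)$ and lets you conclude in one line from the two cited facts. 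This buys you a shorter argument with no limiting process and nothing to justify about continuity of spectra or of the sets $W(\cdot)$; the paper's $\epsilon$-regularization buys essentially nothing here that your factorization does not already provide. Your closing remark about the eigenvalue $0$ is harmless but unnecessary, since the cited fact $\operatorname{spec}(AM) \subseteq W(A)W(M)$ for positive semi-definite $A$ already covers singular $A$ without case analysis.
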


To invoke the previous lemma, we need an more refined characterization of the Laplacian matrices $E_i$ and their weighted sums.

\begin{lem}\label{lem:laplacian-matrix2}
The following holds:
\begin{enumerate}
    \item For each frontend $i \in \FS$, we have $\operatorname{kernel}(E_i) = \left\{ \by \in \mathbb C^{|\BS|} : y_j = y_{j'}\quad \forall j, j' \in \BS_i \right\}$.
    \item We have $\operatorname{kernel}\left(\sum_{i\in\FS} \lambda_i \eta_i E_i\right) = \operatorname{span}(\mathbf 1)$ if the network $\mathcal G$ is connected.
    \item The spectral gap or minimum non-zero eigenvalue of $\sum_{i\in\FS}\lambda_i \eta_i E_i$ when the network $\mathcal G$ is connected is given by
    \[
    \operatorname{gap}\left(\sum_{i\in\FS}\lambda_i \eta_i  E_i\right) = \min_{\by \in \mathbb R^{|\BS|} : \mathbf 1^\top \by = 0} \frac{\sum_{i\in\FS}\lambda_i \eta_i  \by^\top E_i \by} {\by^\top \by}\,.
    \]
\end{enumerate}
\end{lem}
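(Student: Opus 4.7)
The plan is to establish the three parts sequentially, with each building on its predecessor. For Part 1, I would compute $E_i \by$ componentwise directly from the definition $E_i = \diag(\ba_i) - \ba_i \ba_i^\top/(\ba_i^\top \mathbf 1)$. Using $\ba_i^\top \mathbf 1 = |\BS_i|$, the $k$-th component equals $a_{ik}(y_k - \bar y_i)$ where $\bar y_i := |\BS_i|^{-1} \sum_{j \in \BS_i} y_j$. This entry vanishes automatically when $k \notin \BS_i$ because $a_{ik}=0$, and vanishes for $k \in \BS_i$ iff $y_k = \bar y_i$. Imposing the latter for every such $k$ is equivalent to all $\{y_j\}_{j \in \BS_i}$ being equal, which gives the claimed characterization.

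For Part 2, I would leverage that by Lemma~\ref{lem:laplacian-matrix1} each $E_i$ is positive semi-definite, and since $\lambda_i \eta_i > 0$, the weighted sum $M := \sum_{i \in \FS} \lambda_i \eta_i E_i$ is PSD as well. Hence $M \by = \mathbf 0$ iff $\by^\top M \by = 0$ iff $\by^\top E_i \by = 0$ for every $i \in \FS$ iff (by PSD of each $E_i$) $E_i \by = \mathbf 0$ for every $i$. By Part 1 this forces $\by$ to be constant on $\BS_i$ for every frontend $i$. Connectivity of $\mathcal G$ then propagates constancy to all of $\BS$: for any two backends $j, j'$ there is a path in $\mathcal G$ between them, which, being bipartite, alternates between frontends and backends; any two consecutive backends along the path share the intervening frontend $i$ and hence both lie in $\BS_i$, so their $y$-values coincide, and chaining these equalities yields $y_j = y_{j'}$. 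This shows $\operatorname{kernel}(M) \subseteq \operatorname{span}(\mathbf 1)$, and the reverse inclusion follows from the direct calculation $E_i \mathbf 1 = \ba_i - \ba_i (\ba_i^\top \mathbf 1)/(\ba_i^\top \mathbf 1) = \mathbf 0$.

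For Part 3, I would invoke the Rayleigh--Courant--Fischer characterization for real symmetric PSD matrices: the smallest non-zero eigenvalue of $M$ equals $\min\{\by^\top M \by / \by^\top \by : \by \neq \mathbf 0,\ \by \perp \operatorname{kernel}(M)\}$. Substituting $\operatorname{kernel}(M) = \operatorname{span}(\mathbf 1)$ from Part 2 turns the orthogonality constraint into $\mathbf 1^\top \by = 0$, and expanding $\by^\top M \by = \sum_{i \in \FS} \lambda_i \eta_i \by^\top E_i \by$ produces the stated formula.

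No serious obstacles are anticipated; the only step requiring mild care is the connectivity argument in Part 2, where one must exploit the bipartite structure so that any path between two backends visits a sequence of shared frontends that transitively equalize the $y$-coordinates across all of $\BS$.
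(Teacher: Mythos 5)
Your proposal is correct and follows essentially the same route as the paper: a componentwise computation for Part 1, the fact that the kernel of a nonnegatively weighted sum of PSD matrices is the intersection of the kernels combined with connectivity for Part 2, and the variational (Courant--Fischer) characterization restricted to $\operatorname{kernel}(M)^\perp = \{\by : \mathbf 1^\top \by = 0\}$ for Part 3. The only difference is that you spell out details the paper cites as standard (e.g., proving the PSD-kernel-intersection fact via $\by^\top M \by = 0$ and verifying $E_i \mathbf 1 = \mathbf 0$), which is fine.
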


We invoke Lemma~\ref{lem:spectrum-product-sum} with $B = D(s)$, $P = \sum_{i \in \mathcal F}
\left(\exp\big(2s(\hat c - c_i)\big) - 1\right) \lambda_i \eta_i E_i$, and $A = \sum_{i \in \mathcal F} \lambda_i \eta_i E_i$. The matrix $A$ is positive semi-definite by Lemma~\ref{lem:laplacian-matrix1} because the conic sum of positive semi-definite matrices is positive semi-definite and $\lambda_i \eta_i \ge 0$. To see that $\operatorname{range}(P) \subseteq \operatorname{kernel}(A)^\bot$ it suffices to check that $\mathbf 1^\top P = 0$ because $\operatorname{kernel}(A)^\bot = \left\{ \by \in \mathbb C^{|\BS|} : \mathbf 1^\top \by = 0 \right\}$ by Lemma~\ref{lem:laplacian-matrix2}. Therefore,
\[
    \mathbf 1^\top P = \sum_{i \in \mathcal F}
\left(\exp\big(2s(\hat c - c_i)\big) - 1\right) \lambda_i \eta_i \mathbf 1^\top E_i = 0\,,
\]
where the last equation follows because $\mathbf 1 \in \operatorname{kernel}(E_i^\top) = \operatorname{kernel}(E_i)$ by  Lemma~\ref{lem:laplacian-matrix2} together with the fact that the matrix $E_i$ is symmetric.

Because $\hat L(\im w)$ is conjugate symmetric, i.e., $\hat L(\im w) = \overline{\hat L(-\im w)}$, we only need to check the loop function does not encircle $-1 + \im 0$ for $w \in [0, \infty)$ as results for negative values follow by symmetry. Therefore, we have that
\begin{align*}
    \operatorname{spec}(\hat L(\im w)) &
    =\operatorname{spec}\left( \left( \sum_{i \in \mathcal F}
    \exp(-\im 2 w c_i) \lambda_i \eta_i E_i \right) D(\im w)\right)\\
    &= \exp(-\im 2 w \hat c) \cdot \operatorname{spec}\Bigg(  \Bigg( \underbrace{\sum_{i \in \mathcal F} \lambda_i \eta_i E_i}_{=:A} +  \underbrace{\sum_{i \in \mathcal F}
\Big(\exp\big(\im 2 w(\hat c - c_i)\big) - 1\Big) \lambda_i \eta_i E_i}_{=:P} \Bigg) D(\im w) \Bigg) \\
    &  \subseteq \exp(-\im 2 w \hat c)\cdot W\left(A\right) \cdot \left( W(D(\im w)) + W\left(A^+ P D(\im w)\right) \right)\\
    %&\subseteq \sum_{i \in \FS}
    &  \subseteq  \exp(-\im 2 w \hat c)\cdot[0,\boldsymbol{\lambda}^\top \boldsymbol{\eta}] \cdot W(D(\im w)) + \exp(-\im 2 w \hat c) \cdot [0,\boldsymbol{\lambda}^\top \boldsymbol{\eta}] \cdot W\left(A^+ P D(\im w)\right)\\
    &\subseteq [0,\boldsymbol{\lambda}^\top \boldsymbol{\eta} \exp(-\im 2 w \hat c) ] \cdot W(D(\im w)) +  \operatorname{disk}\left(\boldsymbol{\lambda}^\top \boldsymbol{\eta} \| A^+ P D(\im w) \| \right)\,,
\end{align*}
where the second equation follows because the spectrum preserves scalar multiplications; the first inclusion follows from Lemma~\ref{lem:spectrum-product-sum}; the second inclusion because $W(A) = W\left(\sum_{i \in \mathcal F} \lambda_i \eta_i E_i\right) \subseteq \sum_{i \in \mathcal F} \lambda_i \eta_i W(E_i) = \sum_{i \in \mathcal F} \lambda_i \eta_i [0,1] = [0,\boldsymbol{\lambda}^\top \boldsymbol{\eta}]$ by sub-additivity of the numerical range and Lemma~\ref{lem:laplacian-matrix1}; and the third inclusion because the numerical range of a matrix is contained in a disk of radius equal to its norm together with the fact that $|\exp(-\im 2 w \hat c)|\le 1$ because $\hat c$ and $w$ are real numbers. We next analyze each term at a time.

For the first term, we first let $\hat \tau_j$ be the uniform latency of backend $j$, which is given by
\begin{align*}%\label{eq:uniform-latency}
    \hat \tau_j = \hat c - \frac{1} {\ell_j'}\,.
\end{align*}
This definition naturally puts a constraint on the pivot $\hat c$. We need $\hat c \ge 1/\ell_j'$ for all $j \in \BS$ for the uniform latencies to be non-negative. Using this definition we can obtain an expression that is amenable to invoking Lemma~\ref{lem:geometric-bounds}. We have that
\begin{align*}
     [0,\boldsymbol{\lambda}^\top \boldsymbol{\eta} \exp(-\im 2 w \hat c) ] \cdot W(D(\im w)) &= \operatorname{conv}\left(0, \frac{\boldsymbol{\lambda}^\top \boldsymbol{\eta} \sigma_j e^{-\im 2w\left(\hat c - 1/\ell_j'\right)} }{(\im w)^2 + \im w\ell_j'} : j \in \BS \right)\\
     &=\operatorname{conv}\left(0, \frac{2\hat \tau_j \boldsymbol{\lambda}^\top \boldsymbol{\eta} \sigma_j}{\ell_j'} \frac{e^{-\im 2 w \hat\tau_j} }{2 \hat \tau_j \im w(\im w(\hat c - \hat \tau_j) + 1)} : j \in \BS \right)\\
     &\subseteq \max_{j \in \BS} \frac{2\hat \tau_j \boldsymbol{\lambda}^\top \boldsymbol{\eta} \sigma_j}{\ell_j'} \cdot \operatorname{conv}\left(0,  \frac{e^{-\im 2w \hat\tau_j} }{2 \hat \tau_j \im w(\im w(\hat c - \hat \tau_j) + 1)} : j \in \BS \right)\\
      &\subseteq \max_{j \in \BS} \frac{2\hat \tau_j \boldsymbol{\lambda}^\top \boldsymbol{\eta} \sigma_j}{\ell_j'} \cdot \hat{\mathcal H}\,,
\end{align*}
where the first equation follows because the numerical range of a diagonal matrix is the convex hull of its entries and convex hulls preserve scalar multiplications, the first inclusion from extracting the maximum factor, and the last from Lemma~\ref{lem:geometric-bounds} and denoting by $\hat{\mathcal H}$ the lower halfspace with slope $1/(w\hat c)$ that goes through $-1 + \im 0$, i.e.,
\[
\hat{\mathcal H} = \left\{  z \in \mathbb C : \Re(z) \ge -1 + w \hat c\cdot \Im(z) \right\}\,.
\]

For the second term, we use that $\| A^+ P D(\im w) \| \le \|A^+\| \cdot \|P\| \cdot \|D(\im w) \|$ and bound each matrix independently. For the first matrix, we use that the norm of the pseudo-inverse is the reciprocal of the spectral gap to obtain that
\[
    \| A^+\| = \left\| \left( \sum_{i\in\FS}\lambda_i \eta_i E_i\right)^+ \right\| = \operatorname{gap}\left(\sum_{i\in\FS}\lambda_i \eta_i E_i\right)^{-1}\,.
\]
In the following, we use $\operatorname{gap}$ as shorthand for the spectral gap of the Laplacian matrix $\sum_{i\in\FS}\lambda_i \eta_i E_i$. For the perturbation matrix, we have
\begin{align*}
    \| P \| &= \left\| \sum_{i \in \mathcal F}
\Big(\exp\big(\im 2 w(\hat c - c_i)\big) - 1\Big) \lambda_i \eta_i E_i \right\|\\
&\le \sum_{i \in \mathcal F}
\Big|\exp\big(\im 2 w(\hat c - c_i)\big) - 1\Big| \cdot \lambda_i \eta_i \cdot \| E_i \|\\
&\le2 w \sum_{i \in \FS} \lambda_i \eta_i |\hat c - c_i|\,,
\end{align*}
where the first inequality follows from the triangle inequality and positive homogeneity properties of the norm; and the second inequality by Lemma~\ref{lem:exp-bound}, using that $w \ge 0$, together with $\|E_i\| \le 1$ by Lemma~\ref{lem:laplacian-matrix1}. For the diagonal matrix, we have
\[
    \| D(\im w) \| \le \max_{j \in \BS} \left| \frac{\sigma_j e^{ \im 2w/\ell_j'} }{(\im w)^2 + \im w\ell_j'} \right| \le \max_{j \in \BS} \frac{\sigma_j }{w \sqrt{w^2 + (\ell_j')^2}}\,.
\]
Combining these factors, we obtain that the second term can be bounded as follows
\[
    \| A^+ P D(\im w) \| \le \frac{2  \sum_{i \in \FS} \lambda_i \eta_i  |\hat c - c_i|}{\operatorname{gap}} \cdot \max_{j \in \BS} \frac{\sigma_j }{\sqrt{w^2 + (\ell_j')^2}}\,.
\]

Putting everything together, we obtain that
\[
     \operatorname{spec}(\hat L(\im w)) \subseteq
    \underbrace{\max_{j \in \BS} \frac{2\hat \tau_j \boldsymbol{\lambda}^\top \boldsymbol{\eta} \sigma_j}{\ell_j'}}_{:=a} \cdot \hat{\mathcal H} + \operatorname{disk}\Bigg(\underbrace{\frac{2 \boldsymbol{\lambda}^\top \boldsymbol{\eta}  \sum_{i \in \FS} \lambda_i |\hat c - c_i|}{\operatorname{gap}} \cdot \max_{j \in \BS} \frac{\sigma_j }{\sqrt{w^2 + (\ell_j')^2}}}_{:=b(w)}\Bigg)\,,
\]
A sufficient condition for stability is that the distance from the scaled halfspace $a \cdot \hat{\mathcal H}$ to $-1 + 0 \im$ is larger than the radius of the disk $b(w)$, i.e.,
\begin{align*}
    b(w) < \operatorname{dist}(-1, a \cdot \hat{\mathcal H}) = \frac{1-a}{\sqrt{1+w^2\hat c^2}}\,,
\end{align*}
where the last equation follows from using the standard formula for the distance of the point $-1+ 0 \im$ to the line $\left\{  z \in \mathbb C : \Re(z) = -a + w \hat c\cdot \Im(z) \right\}$ defining the boundary of $a \cdot \hat{\mathcal H}$. Therefore, a sufficient condition for stability is
\begin{align}\label{eq:a-b-condition}
    a + b(w) \cdot \sqrt{1+w^2\hat c^2} < 1\,,
\end{align}
which can be further simplified using the following lemma.

\begin{lem}\label{lem:ratio-bound}
For $a \ge 0$, $b > 0$, and $w\ge0$, we have
\[
    \sqrt{\frac{w^2 + a^2}{w^2 + b^2}} \le \max\left(1,\frac{a}{b}\right)\,.
\]
\end{lem}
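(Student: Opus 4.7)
The plan is to proceed by a simple case analysis based on whether $a \le b$ or $a > b$, since this is precisely the dichotomy governing which argument of the maximum on the right-hand side is active.

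First I would handle the case $a \le b$, where the right-hand side equals $1$. Here the claim reduces to $w^2 + a^2 \le w^2 + b^2$, which is immediate from $a^2 \le b^2$ under the assumption $0 \le a \le b$. Note that squaring is legitimate because both sides of the original inequality are non-negative.

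Next I would handle the case $a > b$, where the right-hand side equals $a/b$. Squaring (both sides non-negative) and cross-multiplying by $b^2(w^2+b^2) > 0$, the inequality to prove becomes $b^2(w^2 + a^2) \le a^2(w^2 + b^2)$. The $a^2 b^2$ terms cancel, leaving $b^2 w^2 \le a^2 w^2$, which holds since $w^2 \ge 0$ and $b^2 \le a^2$.

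I do not expect any serious obstacles here. The only points requiring mild care are that all quantities under the square root are strictly positive (ensured by $b > 0$), so squaring is reversible, and that the case split covers all of $a \ge 0$, $b > 0$ exhaustively. The lemma is essentially an elementary monotonicity statement about the function $w \mapsto \sqrt{(w^2+a^2)/(w^2+b^2)}$, which is monotone in $w$ depending on the sign of $a^2 - b^2$, with its supremum over $w \ge 0$ attained either at $w=0$ (giving $a/b$ when $a>b$) or as $w \to \infty$ (giving $1$ when $a \le b$); the case analysis above is the direct algebraic incarnation of that fact.
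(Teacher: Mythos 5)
Your proof is correct and follows essentially the same route as the paper: the identical case split on $a \le b$ versus $a > b$, with the paper phrasing each case via monotonicity of $w \mapsto (w^2+a^2)/(w^2+b^2)$ and you verifying the same bound by direct cross-multiplication. The two arguments are interchangeable, and your closing remark about monotonicity is exactly the paper's phrasing of the proof.
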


Using Lemma~\ref{lem:ratio-bound}, we obtain that
\begin{align}\label{eq:almost-final}
    &\max_{j \in \BS} \frac{\sigma_j }{\sqrt{w^2 + (\ell_j')^2}} \cdot \sqrt{1+w^2\hat c^2}
    = \hat c \cdot \max_{j \in \BS} \frac{\sigma_j \cdot \sqrt{w^2 + (1/\hat c)^2} }{\sqrt{w^2 + (\ell_j')^2}} \nonumber \\
    &\quad\le \hat c \cdot \max_{j \in \BS} \sigma_j \cdot \max\left(1,\frac{1}{\hat c \ell_j'}\right)
    = \hat c \cdot \max_{j \in \BS} \sigma_j \,,
\end{align}
where the last equation follows because $\hat c \ell_j' \ge 1$ for all $j \in \BS$. We therefore conclude the proof of Theorem~\ref{thm:stability-general} by combining \eqref{eq:a-b-condition} and \eqref{eq:almost-final}.

\section{Projection Algorithm}\label{app:projection}

Algorithm~\ref{alg:projection} presents an efficient procedure to find the projection of a vector $\boldsymbol{z} \in \mathbb R^n$ to the tangent cone of the probability simplex at a vector $\bx \in \Delta$. That is, we aim to solve the Euclidean norm projection problem
\[
\Pi_{T_{\Delta}(\bx)}(\boldsymbol{z}) = \arg\min_{\bv \in T_{\Delta}(\bx)} \|\bv - \boldsymbol{z}\|_2\,,
\]
where the tangent cone at is
\[
 T_{\Delta}(\bx) = \left\{\bv \in \mathbb{R}^{n}: \sum_{j\in [n]} v_j = 0,  v_j \geq 0  \text{ if } x_{i}  = 0\right\}\,.
\]
The algorithm's complexity is $O(n \log n)$, which is governed by the time taken to sort the components of $\boldsymbol{z}$ associated with $x_i = 0$. The next results establishes the correctness of our algorithm.

\begin{lem}
For every $\boldsymbol{z} \in \mathbb R^n$ and $\bx \in \Delta$, Algorithm~\ref{alg:projection} returns the projection of $\boldsymbol{z}$ to the tangent cone of the probability simplex at $\bx$.
\end{lem}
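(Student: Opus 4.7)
The plan is to attack this via the KKT conditions of the projection problem, which is a strictly convex quadratic program over a polyhedron, so KKT is both necessary and sufficient for optimality. Let $I_0 = \{j : x_j = 0\}$ and $I_+ = \{j : x_j > 0\}$, and introduce a Lagrange multiplier $\mu \in \mathbb{R}$ for the equality constraint $\sum_j v_j = 0$ together with non-negative multipliers $\gamma_j$ for each constraint $v_j \ge 0$, $j \in I_0$. Writing the stationarity condition $v_j - z_j + \mu - \gamma_j \mathbb{1}[j \in I_0] = 0$ and applying complementary slackness, I obtain the closed-form characterization
\begin{equation*}
    v_j = z_j - \mu \quad \forall j \in I_+, \qquad v_j = \max(z_j - \mu, 0) \quad \forall j \in I_0.
\end{equation*}
Thus the projection reduces to finding the unique scalar $\mu^*$ such that the equality constraint is satisfied, namely
\begin{equation*}
    F(\mu) := \sum_{j \in I_+} (z_j - \mu) + \sum_{j \in I_0} \max(z_j - \mu, 0) = 0.
\end{equation*}

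Next I would observe that $F$ is continuous, piecewise linear, and strictly decreasing in $\mu$ (strictly, because the first sum is strictly decreasing whenever $I_+ \ne \emptyset$; if $I_+ = \emptyset$ a separate argument treats the degenerate case where $\bx$ is a vertex of $\Delta$). The breakpoints of $F$ occur exactly at the values $\{z_j : j \in I_0\}$, so sorting these $z_j$'s in decreasing order, say $z_{(1)} \ge z_{(2)} \ge \cdots \ge z_{(|I_0|)}$, partitions the real line into intervals on each of which $F$ is linear with a known slope. On the interval where exactly the top $k$ sorted components are ``active'' (i.e., $z_{(\ell)} > \mu$ for $\ell \le k$ and $z_{(\ell)} \le \mu$ for $\ell > k$), one can solve $F(\mu) = 0$ explicitly:
\begin{equation*}
    \mu = \frac{\sum_{j \in I_+} z_j + \sum_{\ell \le k} z_{(\ell)}}{|I_+| + k}.
\end{equation*}
The algorithm simply sweeps $k$ from $0$ upward (or in whatever direction the pseudocode sweeps), computes this candidate $\mu$, and checks that it lies in the correct interval $[z_{(k+1)}, z_{(k)}]$ (with the endpoint conventions adjusted for $k=0$ and $k=|I_0|$); by monotonicity of $F$ and strict convexity of the objective, exactly one value of $k$ yields a valid $\mu$, and that $\mu$ is $\mu^*$.

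To finish, I would verify that the output $\bv$ reconstructed from $\mu^*$ via the closed-form above is feasible (the simplex sum constraint holds by construction of $\mu^*$; non-negativity on $I_0$ holds because $v_j = \max(z_j - \mu^*, 0) \ge 0$), satisfies stationarity with non-negative multipliers $\gamma_j = (\mu^* - z_j)_+$ for $j \in I_0$ with $\gamma_j v_j = 0$, and therefore is the unique minimizer. The cost analysis is then immediate: the sort dominates at $O(n \log n)$, and the sweep can be done in $O(n)$ using running sums. The only real subtlety I anticipate is handling the ``ignored-component'' convention introduced earlier in the paper — namely that entries $z_j$ with $j \notin \mathcal{B}_i$ (equivalently $v_j$ constrained to $0$) are dropped from the norm — but this is handled simply by restricting $I_0 \cup I_+$ to the support $\mathcal{B}_i$ before running the algorithm, which leaves the KKT derivation above unchanged.
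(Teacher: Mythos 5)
Your plan follows essentially the same route as the paper: both reduce the projection to the KKT conditions of a strictly convex QP, characterize the solution by a single scalar multiplier ($\mu$ in your notation, $\beta$ in the paper's) equal to the average of $z_j$ over the positive-support set together with the surviving zero-components, and argue that the sweep over sorted values of $\{z_j : x_j = 0\}$ identifies the correct active set. Your framing via the strictly decreasing piecewise-linear function $F(\mu)$ and its unique root is a clean way to package the uniqueness of the correct index $k$, and your worry about $I_+ = \emptyset$ is vacuous since $\bx \in \Delta$ forces $\sum_j x_j = 1$, hence $I_+ \neq \emptyset$.

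The one step you elide is the only non-trivial part of the correctness argument: you write that the algorithm ``checks that [the candidate $\mu$] lies in the correct interval $[z_{(k+1)}, z_{(k)}]$,'' but the pseudocode only tests one endpoint per iteration (it breaks as soon as the smallest surviving $z_i$ satisfies $z_i \ge \beta$, and never re-examines the components it already discarded). To conclude dual feasibility ($\gamma_j = \mu^* - z_j \ge 0$) for the discarded components, you must show that each removed $z_i$ satisfies $z_i \le \beta^{\mathrm{final}}$, not merely $z_i < \beta$ at the moment of its removal. The paper closes this by observing that $\beta$ increases monotonically along the run: each removed element is strictly below the current average, so deleting it can only raise the average. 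Your uniqueness-of-the-root framework makes this easy to supply, but as written the plan does not supply it.
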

\begin{proof}
Lemma~\ref{lem:projection-analysis} presents the Karush-Kuhn-Tucker optimality conditions of the projection problem, which are necessary and sufficient because the objective is convex and continuously differentiable and the constraints linear. 

Let $\bv \in \mathbb R^n$ be the optimal solution, which is unique because the objective is strictly convex. We denote by $\mathcal T = \{ i \in [n] : x_i > 0\}$ the set of indices for which $x_i$ is positive, $\mathcal S^> = \{ i \in [n] : x_i = 0 \text{ and } v_i > 0\}$, and $\mathcal S^0 = \{ i \in [n] : x_i \text{ and } v_i = 0\}$. We have $[n] = \mathcal T \cup \mathcal S^0 \cup \mathcal S^>$. Let $\beta \in \mathbb R$ be the Lagrange multiplier of the constraint $\sum_{i \in [n]} v_i = 0$ and $\mu_i \ge 0$ be the multiplier of the constraint $v_i \ge 0$ for $i \not\in \mathcal T$. The optimality conditions are
\[
    v_i - z_i + \beta = 0\quad\forall i \in \mathcal T\,,
\]
and 
\[
    v_i - z_i + \beta - \mu_i = 0\quad\forall i \not\in \mathcal T\,,
\]
together with the complementary slackness condition for $i \not \in T$ given by $\mu_i v_i = 0$. 

For $i \in \mathcal T$, we do not have any constraint on $v_i$ and we simply set $v_i = z_i - \beta$. For $i \in \mathcal S^>$ we have that $\mu_i = 0$ from complementary slackness and we set $v_i = z_i - \beta$. For feasibility, we require that $v_i \ge 0$ or equivalently $z_i \ge \beta$. For $i \in \mathcal S^0$ we have that $v_0 = 0$, which is trivially feasible, and because $\mu_i \ge 0$ we require that $\beta \ge z_i$. Equation \eqref{eq:beta-formula} implies that the Lagrange multiplier $\beta$ satisfies
\begin{align}\label{eq:beta-formula2}
    \beta = \frac 1 {|\mathcal T| + |\mathcal S^>|} \left( \sum_{j \in \mathcal T} z_j + \sum_{j \in \mathcal S^>} z_j\right)\,.    
\end{align}

We now discuss the correctness of the algorithm. We shall prove that at termination, the set $\mathcal S$ maintained by the algorithm satisfies $\mathcal S = \mathcal S^>$. If so, by construction we would have that $v_i = 0$ for $i \in \mathcal S^0$ because the set $\mathcal S$ has initially all $i \in [n]$ with $x_i = 0$ and in each iteration of the while loop we remove an element $i$ with $x_i = 0$ that we set to $v_i = 0$. The values for $i \in \mathcal T$ and $i\in\mathcal S$ are set at termination to $v_i = z_i - \beta$ with $\beta$ satisfying \eqref{eq:beta-formula2}, which satisfy the first-order conditions. Therefore, to conclude we need to show that (i) $z_i \ge \beta$ for all $i\in \mathcal S$ (the components that remain in $\mathcal S$ at termination) and (ii) $\beta \ge z_i$ for $i \in [n]$ with $x_i = 0$ and $i \not\in \mathcal S$ (the components that are removed from $\mathcal S$ along the run of the algorithm).

First, we show that $z_i \ge \beta$ for all $i\in \mathcal S$. If we never break the while loop, the set $\mathcal S$ is empty and the result is trivial. If we break the while loop, we do so because the $z_i \ge \beta$ for the component $i\in \mathcal S$ with smallest value in $z_i$. Therefore, the result holds for every other larger value in the set $\mathcal S$ because the value of $\beta$ does not change once we break the while loop.

Second, we show $\beta \ge z_i$ for $i \in [n]$ with $x_i = 0$ and $i \not\in S$. If we have a component $i \in [n]$ with $x_i = 0$ and $i \not\in \mathcal S$, then this element should have been removed during one iteration of the while loop at which point we would have that $z_i < \beta$. Note, however, the value of $\beta$ changes as we continue iterating the while loop. The result follows form the observation that $\beta$ increases monotonically during the run of the algorithm because we are always removing the smallest value of $\mathcal S$ and $\beta$ is an average of the values in $\mathcal S$ and $\mathcal T$.
\end{proof}

\begin{algorithm}
\caption{Projection to the tangent cone of the probability simplex}\label{alg:projection}
\begin{algorithmic}[1]
\Require A vector $\boldsymbol{z} \in \mathbb R^n$ to be projected and a vector $\bx \in \Delta$ in the probability simplex.
\Ensure The projection $\bv = \Pi_{T_{\Delta}(\bx)} \left(\boldsymbol{z} \right)$ of the vector $\boldsymbol{z}$ to the tangent cone of $\Delta$ at $\bx$. 

\State $\mathcal S \gets \{ i \in [n] : x_i = 0\}$ and $\mathcal T \gets \{ i \in [n] : x_i > 0\}$.
%\If{$|\mathcal S| = 0$}
%    \State $\beta \gets \frac 1 n \sum_{i \in [n]} z_i$
%    \State \Return $\boldsymbol{z} - \beta \mathbf 1$
%\EndIf
%\State Sort the components of $z_i$ for $i \in \mathcal S$ in an increasing order $\sigma$ with $z_{\sigma(i)}\le z_{\sigma(i+1)}$.
\State $\bv \gets \mathbf 0$
\While{$\mathcal S \neq \emptyset$}
    \State $\beta \gets \frac 1{|\mathcal T| + |\mathcal S|} \left(\sum_{i \in \mathcal T} z_i + \sum_{i \in \mathcal S} z_i \right)$
    \State Let $i\in \mathcal S$ be the component with the smallest value $z_i$
    \If{$z_i \ge \beta$}
        \Break
    \Else
       \State $v_i = 0$
       \State $\mathcal S \gets \mathcal S \setminus \{i\} $
    \EndIf
\EndWhile
\If{$\mathcal S = \emptyset$}
    \State $\beta \gets \frac 1{|\mathcal T|} \sum_{i \in \mathcal T} z_i$ 
\EndIf
\State $v_i \gets z_i - \beta$ for $i \in \mathcal S \cup \mathcal T$
\State \Return $\bv$
\end{algorithmic}
\end{algorithm}

\section{Proofs of Results}

\subsection{Proof of Lemma~\ref{lem:opt-bound}}

\begin{proof} Consider the following relaxation of $\operatorname{OPT}$, in which the inflow to each backend is at most its outflow:
\begin{equation}\label{eq:opt-relaxed}
\begin{split}
    \operatorname{OPT}' = \min_{\bN,\bx} &\quad \sum_{j \in \BS} N_j + \sum_{(i,j) \in \AS} \lambda_i x_{ij} \tau_{ij} \\
    \text{s.t.}
    & \quad \sum_{i \in \FS_j} \lambda_i x_{ij} \le \ell_j(N_j)\,, \forall j \in \BS\,,\\
    & \quad  \sum_{j \in \BS_i} x_{ij} = 1\,, \quad \forall i \in \FS\,,\\
    & \quad x_{ij} \geq 0\,, \quad \forall (i,j) \in \AS\,.
\end{split}
\end{equation}
Programs \eqref{eq:opt} and \eqref{eq:opt-relaxed} attain the same objective values because, at optimality, the relaxed flow balance constraint of every backend should hold with equality. This follows because, by Assumption~\ref{assume:processing-rate}, processing rate functions are increasing and the objective is increasing in workloads. Therefore, $\operatorname{OPT}' = \operatorname{OPT}$.

Fix a time-average convergent policy and denote by 
\[
    \bar N_j = \lim_{T \rightarrow \infty} \frac 1 T \int_0^T N_j(t) dt\ \quad \text{and} \quad \bar N_{ij} = \lim_{T \rightarrow \infty} \frac 1 T \int_0^T N_{ij}(t) dt
\]
the time-average workloads at the backends and  traveling requests, respectively, which exist by assumption. By Little's Law, we have that time-average routing probabilities 
\[
    \bar x_{ij} = \lim_{T \rightarrow \infty} \frac 1 T \int_0^T x_{ij}(t) dt
\] 
exist and satisfy $\bar N_{ij} = \lambda_i \bar x_{ij} \tau_{ij}$. Therefore, the time-average performance of the policy can be written as $\operatorname{ALG} = \sum_{j \in \BS} \bar N_j + \sum_{(i,j) \in \AS} \lambda_i \bar x_{ij} \tau_{ij}$, which coincides with the objective value of \eqref{eq:opt-relaxed}. To prove that $\operatorname{ALG} \ge \operatorname{OPT}'$, it suffices to show that the solution $(\bar \bN, \bar \bx)$ is feasible for $\operatorname{OPT}'$.

Clearly, we have that $\bar x_{ij} \ge 0$ and $\sum_{j \in \BS_i} \bar x_{ij} =1$. To establish flow balance, we need the following result, which proves that every policy with finite time averages is rate stable.

\begin{lem} If $\limsup_{T \rightarrow \infty} \frac 1 T \int_0^T N_j(t) dt < \infty$, then $\lim_{T\rightarrow\infty} N_j(T)/T = 0$.    
\end{lem}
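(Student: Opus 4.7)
The plan is a proof by contradiction based on a simple a priori growth estimate on $N_j$. First, from the workload dynamics~\eqref{eq:dynamics-workloads} I observe that
\[
\frac{d}{dt} N_j(t) \;\le\; \sum_{i \in \FS_j} \lambda_i \;=:\; C_j
\]
for all $t \ge 0$, because $x_{ij}(\cdot) \in [0,1]$ and $\ell_j(\cdot) \ge 0$. Consequently $N_j$ satisfies a one-sided Lipschitz estimate from above: $N_j(s) \ge N_j(T) - C_j (T - s)$ whenever $0 \le s \le T$. Note that this bound does not use Assumption~\ref{assume:processing-rate} nor any regularity of $\bx$ in time beyond the simplex constraint $\bx_i(t) \in \Delta_i$.

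Now suppose, contrary to the claim, that $N_j(T)/T$ does not converge to zero. Since $N_j \ge 0$, there exist $\epsilon > 0$, which without loss of generality satisfies $\epsilon \le 2 C_j$, and a sequence $T_k \to \infty$ with $N_j(T_k) \ge \epsilon T_k$. Applying the one-sided Lipschitz bound on the interval $I_k = [T_k - \epsilon T_k/(2 C_j),\, T_k]$, whose left endpoint is non-negative because $\epsilon \le 2 C_j$, I get $N_j(s) \ge \epsilon T_k / 2$ for every $s \in I_k$. Using $N_j \ge 0$ on the complement and integrating,
\[
\frac{1}{T_k} \int_0^{T_k} N_j(s)\, ds \;\ge\; \frac{1}{T_k} \int_{I_k} N_j(s)\, ds \;\ge\; \frac{1}{T_k} \cdot \frac{\epsilon T_k}{2 C_j} \cdot \frac{\epsilon T_k}{2} \;=\; \frac{\epsilon^2}{4 C_j}\, T_k,
\]
which tends to $\infty$ and contradicts the assumption that $\limsup_{T\to\infty} \tfrac{1}{T} \int_0^T N_j(t)\, dt < \infty$. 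Therefore $N_j(T)/T \to 0$.

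There is no real obstacle here; the only subtlety is observing that the uniform upper bound on $dN_j/dt$ is immediate from the dynamics and requires neither monotonicity nor concavity of $\ell_j$. The argument is the standard fact that a non-negative function with a uniform one-sided Lipschitz bound and a bounded Cesàro average must be $o(T)$, specialized to the fluid model at hand.
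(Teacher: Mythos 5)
Your proof is correct and rests on the same key estimate as the paper's: the uniform linear growth bound $\tfrac{d}{dt}N_j \le \sum_{i\in\FS_j}\lambda_i$ from the dynamics, hence $N_j(s) \ge N_j(T) - C_j(T-s)$, followed by lower-bounding the time average of $N_j$ over a window ending at $T$. The only difference is cosmetic: you argue by contradiction with a window of length $\Theta(T_k)$, while the paper computes directly with a window of length $\sqrt{T}$; both yield the same conclusion.
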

\begin{proof}
    Because $N_j(T) \ge 0$, it suffices to show that $\limsup_{T\rightarrow\infty} N_j(T)/T \le 0$. Integrating the dynamics of the backends from $t=s>0$ to $t=T$, we obtain that
    \begin{align}\label{eq:lower-bound-N}
        N_j(T) = N_j(s) + \sum_{i\in\FS_j} \lambda_i \int_{s}^T \left( x_{ij}(t - \tau_{ij}) - \ell_j (N_j(t)) \right) dt
        \le N_j(s) + \lambda (T-s)\,,
    \end{align}
    where the first equation follow from \eqref{eq:dynamics-workloads} and the first inequality because $x_{ij}(\cdot) \ge 0$, $\ell_j(\cdot) \le 1$, and $\sum_{i\in\FS_j} \lambda_i \le \sum_{i\in\FS} \lambda_i =: \lambda$. For every $t \in (0,T)$, we have
    \begin{align*}
        \frac 1 T \int_0^T N_j(s) ds &\le \frac 1 T \int_t^T N_j(s) ds \ge \frac 1 T \int_t^T N_j(T) - \lambda (T-s) ds\\
        &= \frac{T-t} T N_j(T) - \lambda \frac {(T-t)^2}{2T}\,,
    \end{align*}
    where the first inequality follows because $N_j(s) \ge 0$, the second inequality from \eqref{eq:lower-bound-N}, and the last equality from integrating and using that $\int_t^T(T-s) dt = (T-t)^2/2$. Re-arranging, we obtain that
    \[
    \frac 1 T {N_j(T)} \ge \frac 1 {T(T-t)} \int_0^T N_j(s) ds + \lambda \frac {T-t}{2T}\,.
    \]
    The result follows from setting $t = T - \sqrt{T}$ and letting $T\rightarrow\infty$.    
\end{proof}

Integrating the dynamics of the backends from $t=0$ to $t=T$, we obtain that
\begin{align*}
    N_j(T) = N_j(0) + \sum_{i\in\FS_j} \lambda_i \int_{0}^T \left( x_{ij}(t - \tau_{ij}) - \ell_j (N_j(t)) \right) dt\,.
\end{align*}
Divide by $T$ and take limits as $T\rightarrow\infty$. Note that $\lim_{T\rightarrow\infty} N_j(T)/T = 0$ by the previous lemma and $\lim_{T\rightarrow\infty} N_j(0)/T = 0$ because initial workloads are finite.
Moreover, we have that
\[
    \frac 1 T \int_{0}^T x_{ij}(t - \tau_{ij}) dt
    = \frac 1 T \int_{0}^T x_{ij}(t) dt + \frac 1 T \int_{-\tau_{ij}}^0 x_{ij}(t) dt - \frac 1 T \int_{T-\tau_{ij}}^T x_{ij}(t) dt\,.
\]
Using that $x_{ij}(\cdot) \in [0,1]$, we obtain by taking limits that $\lim_{T\rightarrow\infty} \frac 1 T \int_{0}^T x_{ij}(t - \tau_{ij}) dt = \bar x_{ij}$. Therefore, putting everything together we obtain that
\[
    \sum_{i\in\FS_j} \lambda_i \bar x_{ij} = \lim_{T\rightarrow\infty} \frac 1 T \int_0^T \ell_j (N_j(t)) dt \le \lim_{T\rightarrow\infty} \ell_j \left( \frac 1 T \int_0^T N_j(t) dt \right) = \ell_j \left( \bar N_j \right)\,,
\]
where the first inequality follows from Jensen's inequality since $\ell_j(\cdot)$ is concave, and the last equality because $\ell_j(\cdot)$ is continuous. The result follows.
\end{proof}

\subsection{Proof of Lemma~\ref{lem:foc}}

\begin{proof}
We first eliminate the variables $\bN$ from the static routing problem. Because the processing rate function is increasing, it is invertible and by flow balance we can write 
\[
    N_j(\bx) = \ell_j^{-1}\left(\sum_{i \in \FS_j} \lambda_i x_{ij}\right)\,.
\]
Therefore, we can succinctly write the static routing problem as
\[
    \operatorname{OPT} = \min_{\bx_i \in \Delta_i} \sum_{j \in \BS} N_j(\bx) + \sum_{(i,j) \in \AS} \lambda_i x_{ij} \tau_{ij}\,.
\]
This is a convex optimization problem because constraints are linear (and thus convex) and the objective is convex. Convexity of the objective follows because, by the inverse function theorem, the derivative of the objective with respect to $x_{ij}$ is $\lambda_i/\ell_j'(\ell_j^{-1}(\sum_{i\in\FS_j} \lambda_i x_{ij})) + \lambda_i \tau_{ij}$, which is increasing because $\ell_i'$ is decreasing and $\ell_i^{-1}$ is increasing. Moreover, strict monotonicity and continuous differentiability of the processing rate functions implies that the objective is continuously differentiable.

Consider the Lagrangian
\begin{align*}
    \mathcal{L}(\bx,c,\nu) = \sum_{j \in \BS} N_j(\bx) + \sum_{(i,j) \in \AS} \lambda_i x_{ij} \tau_{ij} + \sum_{i \in \FS}c_i \left(\lambda_i - \sum_{j\in\BS_i} \lambda_i x_{ij} \right) - \sum_{(i,j) \in \AS} \nu_{ij} x_{ij}\,,
\end{align*}
where $c_i \in \mathbb R$ is the Lagrange multiplier of the flow balance constraint of frontend $i \in \FS$ (pre-multiplied by $\lambda_i$) and $\nu_{ij} \ge 0$ are the Lagrange multipliers of the non-negativity constraints.

Because the problem constraints' are linear and the objective differentiable, the following conditions Karush-Kuhn-Tucker are necessary for optimality:
\begin{align*}
    &\frac{\partial N_j(\bx)}{\partial x_{ij}} + \lambda_i \tau_{ij} - \lambda_i c_i - \nu_{ij} = 0\,, \quad \forall (i,j) \in \AS\,, \\
    &\sum_{j \in \BS_i} x_{ij} = 1\,, \quad \forall i \in \FS\,, \\
    &\nu_{ij} x_{ij} = 0\,, \nu_{ij} \ge 0\,, x_{ij} \ge 0\,, \quad \forall (i,j) \in \AS\,,
\end{align*}
Therefore, from complementary slackness we have that if $x_{ij}^* > 0$ then $\nu_{ij} = 0$. Additionally, Lagrangian stationarity implies that
\[
\frac{\partial N_j(\bx)}{\partial x_{ij}} + \lambda_i \tau_{ij} = \lambda_i c_i\,.
\]
The result follows because, by the implicit function theorem, we have that
\[
    \frac{\partial N_j(\bx)}{\partial x_{ij}} =
    \frac {\lambda_i} {\ell_j'(N_j^*)}\,,
\]
and canceling $\lambda_i$ from both sides. The inequality follows because $\nu_{ij} \ge 0$. Finally, $c_i > 0$ because there always exists a backend $j$ with $x_{ij}^* > 0$ since $\sum_{j \in \BS_i} x_{ij}^* = 1$ together with the fact that $\tau_{ij} \ge 0$ and $\ell_j$ is increasing.
\end{proof}

\subsection{Proof of Lemma~~\ref{lem:existence}}
\begin{proof}
Let $\underline \tau = \min_{(i,j)\in\AS} \tau_{ij}$ be the smallest delay, which is positive by assumption. We use the method of steps, which involves breaking time into intervals $[t_0, t_1], [t_1, t_2], \ldots$ where $t_n = n \underline \tau$. We denote by $(\bN^{(n)}(t), \bx^{(n)}(t))$ the solution in the interval $[t_{n-1}, t_n]$. We prove existence inductively by considering intervals $[t_{n-1},t_n]$ with $n \in \mathbb N$. For the $n$-th interval, we solve the problem
\begin{align}
    \frac {d} {dt} N_j^{(n)}(t) &= \sum_{i\in\FS_j} \lambda_i x_{ij}^{(n-1)}(t - \tau_{ij}) - \ell_j \left(N_j(t)^{(n)}\right) \label{eq:step-N} \\
    \frac {d} {dt} \bx_{i}^{(n)}(t) &= \Pi_{T_{\Delta_i}(\bx_i^{(n)}(t))} \left(-\eta_i \boldsymbol{g}_i^{(n-1)}(t) \right)\,.\label{eq:step-x}
\end{align}
The induction hypotheses is that there exists a unique absolutely continuous solution for $[0,t_{n-1}]$, denoted by $(\bN^{(n-1)}(t), \bx^{(n-1)}(t))$. The method of steps leverages the fact that terms from the previous intervals are fixed from the perspective of the differential equation for the $n$-th period. Once we prove uniqueness and absolutely continuity for the $n$-th period, we construct the solution by ``pasting'' it with the $(n-1)$-th solution.

Note that the differential equations for the workloads and the routing probabilities are decoupled. That is, in \eqref{eq:step-N} the delayed routing probabilities $x_{ij}^{(n-1)}(t - \tau_{ij})$ are fixed data and we only need to solve for $N_j^{(n)}(t)$. The same applies for \eqref{eq:step-x}. Therefore, we can solve each equation at a time.

For the workloads, equation \eqref{eq:step-N} is an ordinary differential equation with a right-hand side that is globally Lipschitz continuous in $N_j$ by Assumption \ref{assume:processing-rate} and continuous in $t$. Therefore, by Picard–Lindel\"of theorem, a unique solution exists.

For the routing probabilities, we write \eqref{eq:step-x} as the following differential inclusion (see, e.g., chapter 3.5 of \citealt{aubincellina1984differential})
\[
\frac {d} {dt} \bx_{i}^{(n)}(t) \in -\eta_i \boldsymbol{g}_i^{(n-1)}(t) - \partial \mathbf 1_{\Delta_i}(\bx_i^{(n)}(t))\,,
\]
where $\mathbf 1_{\mathcal C}(\bx)$ is the indicator function of the set $\mathcal C$ at $\bx$, which is 0 if $\bx \in \mathcal C$ and $\infty$ otherwise and $\partial$ is the subdifferential. Because the set $\Delta_i$ is convex, closed, non-empty, we obtain that $\bx_i \mapsto \mathbf 1_{\Delta_i}(\bx_i)$ is a proper, lower semi-continuous, and convex function~\citep{boyd2004convex}. Therefore, we have that the set-valued differential $\bx_i \mapsto \partial \mathbf 1_{\Delta_i}(\bx_i)$ is maximally monotone by proposition 1 on page 159 from \citet{aubincellina1984differential}. Existence and uniqueness follows from Theorem 3.4 from \citet{brezis1973} because $\boldsymbol{g}_i^{(n-1)}(t)$ is absolutely continuous by Assumption \ref{assume:processing-rate}.
\end{proof}

\subsection{Proof of Lemma~\ref{lem:projection-is-zero}}
\begin{proof}
The first-order optimality conditions of the projection problem imply that $\bv = \Pi_{T_{\Delta_i}(\bx_i)}(-\eta_i \boldsymbol g_i)$ if and only if $(\bv + \eta_i \boldsymbol g_i)^\top (\bz - \bv)\ge 0$ for all $\bz \in T_{\Delta_i}(\bx_i)$. Because zero is optimal, we have that $\boldsymbol g_i^\top \bz \ge 0$ for all $\bz \in T_{\Delta_i}(\bx_i)$.

Because $\bx_i \in \Delta_i$, there always exists some $j \in \BS_i$ such that $x_{ij}>0$. Take $c_i = g_{ij}$. Suppose there exists $j' \neq j$ with $(i,j') \in \AS(\bx)$. Because $x_{ij}, x_{ij'} > 0$, we have that $\boldsymbol e_j - \boldsymbol e_{j'}$ and $\boldsymbol e_{j'} - \boldsymbol e_j$ both lie in $T_{\Delta_i}(\bx_i)$. Therefore, $g_{ij} = g_{ij'}$. For any other $(i,j') \not\in \AS(\bx)$, we have that $\boldsymbol e_{j'} - \boldsymbol e_j \in T_{\Delta_i}(\bx_i)$ because $x_{ij} > 0$, which implies that $g_{ij'} \ge g_{ij}$. The result follows.
\end{proof}
%Equivalently, $-\boldsymbol g_i$ lies in the polar cone of the tangent cone $T_{\Delta_i}(\bx_i)$, which is equal to the normal cone of $\Delta_i$ at $\bx_i$. The normal cone is
%\[
%    N_{\Delta_i}(\bx_i) = \left\{\bv \in \mathbb{R}^{|\BS|}: \bv^\top(\by - \bx_i) \le 0   \text{ for all } \by \in \Delta_i \right\}\,.
%\]
%Suppose $x_{ij} = 0$. Taking $\by = \boldsymbol e_j$ the unit vector, we obtain that $-g_{ij} $

\subsection{Proof of Proposition~\ref{prop:eq-is-opt}}
\begin{proof}
At an equilibrium point $(\bN^{eq},\bx^{eq})$ of ~\eqref{eq:dynamics-workloads}~and~\eqref{eq:dynamics-delay-general} the temporal derivatives must be zero, i.e., $\sum_{i \in \FS_j} \lambda_i x_{ij}^{eq} = \ell_j(N_j^{eq})$ and $\boldsymbol 0 = \Pi_{T_{\Delta_i}(\bx_i^{eq})}(-\eta_i \boldsymbol{g}_i^{eq})$. In the latter, the equilibrium gradients are $g_{ij}^{eq} = 1/\ell_j'(N^{eq}) + \tau_{ij}$ for $(i,j) \in \AS$ and zero otherwise.
By Lemma~\ref{lem:projection-is-zero}, we obtain that there exists some $c_i \in \mathbb R$ such that $g_{ij}^{eq} = c_i$ for $(i,j) \in \mathcal A(\bx^{eq})$ and $g_{ij}^{eq} \ge c_i$ otherwise. Therefore, equilibrium points satisfy the first-order-conditions of \eqref{eq:opt} stated in Lemma~\ref{lem:foc} and thus are stationary points of \eqref{eq:opt}. Under Assumption~\ref{assume:processing-rate}, the problem \eqref{eq:opt} is convex as shown in Lemma~\ref{lem:foc} and, thus, stationary points are globally optimal. %Uniqueness follows because the fluid optimization problem has a unique solution and every equilibrium point is optimal. %Strict convexity of the objective of \eqref{eq:opt} implies that $\operatorname{OPT}$ has a unique optimal solution and thus there exists a unique equilibrium point of \eqref{eq:dynamics}.
\end{proof}

\subsection{Proof of Lemma~\ref{lem:projection-analysis}}

\begin{proof}
Let $z_j = -\eta_i g_{ij}$, $\mathcal T = \BS_i(\bx^*)$, $\mathcal S^> = \{ j \in \BS_i \setminus \BS_i(\bx^*) : x_{ij} > 0\}$, and $\mathcal S^0 = \{ j \in \BS_i \setminus \BS_i(\bx^*) : x_{ij} = 0\}$. We ignore all components that have no arcs in the network. The Lagrangian of the projection problem is
\[
    \mathcal L(\bv, \beta, \boldsymbol \mu) = \frac 1 2 \|\bv - \bz\|_2^2 + \beta \mathbf 1^\top \bv - \sum_{j \in \mathcal S^0} \mu_j v_j\,,
\]
where $\beta \in \mathbb R$ is the Lagrange multiplier of the constraint $\sum_{j \in \BS_i} v_j = 0$ and $\mu_j \ge 0$ is the multiplier of the constraint $v_j \ge 0$ for $j \in \mathcal S^0$. The first-order optimality conditions are
\begin{align*}
    \frac{\partial \mathcal{L}}{\partial v_j} = v_j - z_j + \beta - \mu_j \mathbf 1\{ j \in \mathcal S^0 \} = 0\,,
\end{align*}
together with the complementary slackness condition $\mu_j v_j = 0$ for $j \in \mathcal S^0$.

We next solve for $\beta$. Let $\mathcal Q = \mathcal T \cup \mathcal S^> \cup \{ j \in \mathcal S^0: v_j > 0\}$. By feasibility, we have that
\begin{align}\label{eq:sum-of-v-is-zero}
    0 = \sum_{j \in \mathcal T \cup \mathcal S^> \cup \mathcal S^0} v_j = \sum_{j \in \mathcal T \cup \mathcal S^>} v_j +  \sum_{j \in \mathcal S^0 : v_j > 0} v_j =  \sum_{j \in \mathcal Q} \beta - z_j\,,
\end{align}
where we the second equality follows from removing terms $j\in \mathcal S^0$ for which $v_j = 0$, and the third from the first-order conditions together with complementary slackness to obtain that $\mu_j = 0$ for $j \in \mathcal S^0$ with $v_j > 0$. Therefore,
\begin{align}\label{eq:beta-formula}
    \beta = \frac 1 {|\mathcal Q|} \sum_{j \in \mathcal Q} z_j\,.
\end{align}

\paragraph{Part 1} Let $\mathcal P = \mathcal S^> \cup \{ j \in \mathcal S^0: v_j > 0\}$. By definition, we have that $\mathcal Q = \mathcal T \cup \mathcal P$. We have that
\begin{align}\label{eq:bound-sum-v}
    \sum_{j \in \BS_i \setminus \BS_i(\bx^*)} v_j
    &= \sum_{j \in \mathcal P} v_j =
    \sum_{j \in \mathcal P} \left( z_j - \beta \right)
    = \sum_{j \in \mathcal P} \left( z_j - \frac 1 {|\mathcal Q|} \sum_{j' \in \mathcal Q} z_{j'}\right) \nonumber\\
    &=  \frac 1 {|\mathcal Q|} \sum_{j \in \mathcal P} \sum_{j' \in \mathcal Q} \left( z_j - z_{j'} \right)\nonumber\\
    &= \frac 1 {|\mathcal Q|} \sum_{j \in \mathcal P} \sum_{j' \in \mathcal T} \left( z_j - z_{j'} \right) + \frac 1 {|\mathcal Q|} \sum_{j \in \mathcal P} \sum_{j' \in \mathcal P} \left( z_j - z_{j'} \right)\nonumber\\
    &= \frac 1 {|\mathcal Q|} \sum_{j \in \mathcal P} \sum_{j' \in \mathcal T } \left( z_j - z_{j'} \right) \le -\alpha \frac {|\mathcal T| \cdot |\mathcal P|} {|\mathcal Q|} \le  -\frac \alpha 2\,,
\end{align}
where the first equation follows because $\BS_i \setminus \BS_i(\bx^*) = \mathcal S^> \cup \mathcal S^0$ and removing terms $j\in \mathcal S^0$ for which $v_j = 0$, the second equation follows from the first-order conditions, the third equation from \eqref{eq:beta-formula}, the fifth equation because $\mathcal Q = \mathcal T \cup \mathcal P$, the sixth equation because the second summation is zero, the first inequality by the assumption that $z_{j'} > z_{j} + \alpha$ for every $j' \in \mathcal T$ and $j \in \mathcal P$, and the last equality because $|\mathcal P| \ge 1$, $|\mathcal T| \ge 1$ and $|\mathcal Q| = |\mathcal T| + |\mathcal P|$.

\paragraph{Part 2} In this case, we have $\mathcal S^> = \emptyset$. To prove the result we need to argue that $v_j = 0$ for all $j \in \mathcal S^0$. This would imply that $\mathcal Q = \mathcal T$ and, by the first-order conditions, $v_j = z_j - \beta$ for $j \in \mathcal T$.

Suppose there exists $j \in \mathcal S^0$ with $v_j > 0$. Equation~\eqref{eq:bound-sum-v} implies that
\begin{align*}
    \sum_{j \in \mathcal S^0} v_j < 0\,,
\end{align*}
which is a contradiction because all terms are positive.

\paragraph{Part 3} Fix $j \in \BS_i$. If $\mu_j > 0$, then $v_j = 0$ by complementary slackness and the result is trivial. If $\mu_j = 0$, the first-order optimality conditions imply that
\[
  v_j = z_j - \beta = \frac 1 {|\mathcal Q|} \sum_{j \in \mathcal Q} \left( z_j - z_{j'}\right)\,,
\]
where the second equation follows from \eqref{eq:beta-formula}. The result follows from taking absolute values and using that $|a - b| \le \max(a,b)$ for $a,b \ge 0$ together with $|z_j| \le \eta \bar g$.
\end{proof}

\subsection{Proof of Lemma \ref{lem:inactive-stability}}

\begin{proof} By Assumption~\ref{assume:processing-rate} and Assumption~\ref{assume:interior-general} we can find a ball of radius $r>0$ around the optimal workloads $\bN^*$ and constants $\bar g > 0$, $\alpha > 0$, and $L>0$ such that for all $\bN \in \mathbb R^{|\BS|}$ with $|N_j - N_j^*| < r$ we have that:
\begin{enumerate}
  \item Gradients are bounded in a ball around the optimal solution:
\[
 \frac 1{\ell_j'(N_j)} + \tau_{ij} \le \bar g\quad \forall (i,j) \in \AS\,.
\]
  \item Gradients of arcs that are active and inactive at the optimal solution are well separated:
\[
  \frac 1{\ell_{j'}'(N_{j'})} + \tau_{ij'} > \frac 1{\ell_j'(N_j)} + \tau_{ij} + \alpha\quad \forall i \in \FS, j \in \BS_i(\bx^*), j' \in \BS_i \setminus \BS_i(\bx^*)\,.
\]
  \item Processing rate functions are bounded in the ball:
\[
  |\ell(N_j) - \ell(N_j^*)| \le L \quad \forall j\in \BS\,.
\]
\end{enumerate}
We will assume throughout the conditions above hold and then discuss how we can choose $\delta$ small enough to guarantee that $|N_j(t) - N_j^*| < r$.

It suffices to prove the result for $\epsilon \le r$. Consider the Lyapunov function $V(\bx) = \sum_{(i,j) \in \AS \setminus \AS(\bx^*)} x_{ij}$ that gives the flow of arcs that are inactive at the optimal solution. From Lemma~\ref{lem:projection-analysis} we know that if $V(\bx(t)) > 0$ then
\[
  \frac d{dt} V(\bx(t)) = \sum_{(i,j) \in \AS \setminus \AS(\bx^*)} \frac {d}{dt} x_{ij}(t) \le -\frac \alpha 2 \,,
\]
where we used part 1 of Lemma~\ref{lem:projection-analysis} to show that if frontend $i\in\FS$ has an arc with positive flow that is inactive at the optimal solution then  $\sum_{j \in \BS_i \setminus \BS_i(\bx^*)}  \frac {d}{dt} x_{ij}(t) \le - \alpha/2$ and part 2 of Lemma~\ref{lem:projection-analysis} to show that if frontend $i\in\FS$ has no an arc with positive flow that is inactive at the optimal solution then $\sum_{j \in \BS_i \setminus \BS_i(\bx^*)}  \frac {d}{dt} x_{ij}(t)=0$. Let $t_0$ be such that $x_{ij}(t_0) = 0$ for all $(i,j) \in \AS \setminus \AS(\bx^*)$. Integrating we obtain that
\[
  V(\bx(t_0)) - V(\bx(0)) = \int_0^{t_0} \frac d{dt} V(\bx(t)) dt \le -\frac \alpha 2 t_0\,.
\]
Note that $V(\bx(t_0)) = 0$ and
\[
	V(\bx(0)) = \sum_{(i,j) \in \AS \setminus \AS(\bx^*)} \left| x_{ij}(0) - x_{ij}^* \right| \le |\AS| \delta\,,
\]
because inactive arcs have zero flow at the optimal solution (i.e., $x_{ij}^* = 0$) and the initial flows are at most $\delta$ from optimal with respect to the infinity norm. Therefore, we obtain that $t_0 \le 2 |\AS| \delta / \alpha$.

We next bound the growth of the routing probabilities $x_{ij}(t)$ and the workloads $N_j(t)$ up to time $t \in [0,t_0]$. For the routing probabilities, we have that deviations around the equilibrium point satisfy for $t \in [0,t_0]$
\begin{align}\label{eq:growth-of-x}
  \left| x_{ij}(t) - x_{ij}^* \right| &\le \left| x_{ij}(0) - x_{ij}^* \right|  + \int_0^{t_0} \left|\frac {d}{dt} x_{ij}(t)\right| dt \\
  &\le \delta + t_0 \eta \bar g \le \underbrace{\left(1 + \frac{2 \eta \bar g |\AS| }{\alpha} \right)}_{:=C} \delta = C \delta \,,
\end{align}
where the first inequality follows from the triangle inequality and using that $t\le t_0$, the second from part 3 of Lemma~\ref{lem:projection-analysis} together with the fact that gradients are bounded, and the last from our bound on the hitting time $t_0$. For the workloads, we have that deviations around the equilibrium point satisfy
\begin{align*}
    \frac {d} {dt} N_j(t)
    &= \sum_{i\in\FS_j} \lambda_i x_{ij}(t) - \ell_j (N_j(t))\\
    &= \sum_{i\in\FS_j} \lambda_i \left(x_{ij}(t) - x_{ij}^*\right) - \left( \ell_j (N_j(t)) - \ell_j (N_j^*)\right)\,,
\end{align*}
where the second equation follows from flow balance at an equilibrium point as stated in \eqref{eq:equilibrium-N}. Integrating and taking absolute values leads to
\begin{align}\label{eq:growth-of-N}
  \left| N_j(t) - N_j^* \right| &\le \left| N_j(0) - N_j^* \right|  + \int_0^{t_0} \left|\frac {d} {dt} N_j(t)\right| dt \nonumber\\
   &\le \delta +  \sum_{i\in\FS_j} \lambda_i \int_0^{t_0} \left| x_{ij}(t) - x_{ij}^* \right| dt  
   +  \int_0^{t_0} \left| \ell_j (N_j(t)) - \ell_j (N_j^*) \right| dt\nonumber\\
   &\le \delta + \lambda C \delta t_0 + L t_0 \le \left( 1 + \frac{2 (\lambda C \delta + L ) |\AS| }{\alpha}\right) \delta\,,
\end{align}
where the first and second inequality follows from the triangle inequality and using that $t\le t_0$, the third inequality from \eqref{eq:growth-of-x} and using that processing rate functions are bounded and denoting $\lambda = \sum_{i\in\FS_j} \lambda_i$, and the last from the bound on $t_0$.

To conclude, we can use \eqref{eq:growth-of-x} and \eqref{eq:growth-of-N} to pick $\delta > 0$ small enough such that for every $t \in [0,t_0]$ we have $\|(\bN(t),\bx(t)) - (\bN^*, \bx^*) \|_{\infty} < \epsilon$ and $x_{ij}(t_0) = 0$ for all $(i,j) \in \AS \setminus \AS(\bx^*)$.
\end{proof}

\subsection{Proof of Lemma~\ref{lem:spectral-bound}}

\begin{proof}
We first prove a well-known connection between the Dirichet sum and the Laplacian of a graph based on the following algebraic identity:
\[
    \sum_{i=1}^n x_i^2 - \frac 1 n \left(\sum_{i=1}^n x_i\right)^2 = \frac 1 {2n} \sum_{i, j} (x_i - x_j)^2\,.
\]
Using Lemma~\ref{lem:laplacian-matrix2}, we can write the spectral gap as
\[
    \operatorname{gap}\left(\sum_{i\in\FS}\lambda_i \eta_i E_i\right) = \min_{\by \in \mathbb R^{|\BS|} : \mathbf 1^\top \by = 0, \by^\top \by = 1} \frac 1 2  \sum_{i\in\FS} \frac{\lambda_i \eta_i}{|\BS_i|} \sum_{j,j' \in \BS_i} (y_j - y_{j'})^2\,,
\]
where we normalize the vectors $\by$ to be in the unit sphere. Because the feasible set is compact and non-empty, and the objective is continuous, an optimal solution exists by Weierstrass theorem. Let $\by$ be an optimal solution and $j_0 \in \arg\max_{j \in \BS} y_j^2$ be a backend with the highest squared value in the vector $\by$. Because $\mathbf 1^\top \by = 0$, there exists a backend $j_1 \neq j_0$ such that $y_{j_1} y_{j_0} < 0$.

Let $\mathcal P$ be the shortest path between backends $j_0$ and $j_1$. By restricting attention to this path, we can lower bound the objective as follows
\[
    \frac 1 2  \sum_{i\in\FS} \frac{\lambda_i \eta_i}{|\BS_i|} \sum_{j,j' \in \BS_i} (y_j - y_{j'})^2
    \ge \sum_{j\sim i \sim j' \in \mathcal P} (y_j - y_{j'})^2  \frac{\lambda_i \eta_i}{|\BS_i|}\,,
\]
where $j \sim i \sim j'$ denotes a path from backend $j$ to backend $j'$ through frontend $i$. We removed the factor of $1/2$ because every combination of backends is counted twice in the original sum. Alternatively, note that
\begin{align*}
    |y_{j_1} - y_{j_0}| &= \Bigg| \sum_{j \sim j' \in \mathcal{P}} y_{j'} - y_{j} \Bigg|
    \le  \sum_{j \sim i \sim j' \in \mathcal{P}} \left| y_{j'} - y_{j} \right| \cdot \frac{ \sqrt{\lambda_i \eta_i}}{\sqrt{|\BS_i|}} \cdot \frac{ \sqrt{|\BS_i|}}{\sqrt{\lambda_i \eta_i}}\\
    &\le \Bigg(\sum_{j \sim i \sim j' \in \mathcal{P}} \left( y_{j'} - y_{j} \right)^2 \cdot \frac{ \lambda_i \eta_i}{|\BS_i|} \cdot \sum_{i \in \FS(\mathcal{P})} \frac{|\BS_i|}{\lambda_i \eta_i}\Bigg)^{1/2}\,,
\end{align*}
where the first equation follows from telescoping the sum, the first inequality by the triangle inequality, and the last inequality by Cauchy-Schwartz. Therefore, we have
\[
    \sum_{j\sim i \sim j' \in \mathcal P} (y_j - y_{j'})^2  \frac{\lambda_i \eta_i}{|\BS_i|} \ge \frac{(y_{j_1} - y_{j_0})^2}{d(\mathcal P)} \ge \frac{y_{j_0}^2}{d(\mathcal P)} \ge \frac{y_{j_0}^2}{d(\mathcal G)}\,,
\]
where we used our definition for the length of a path, the second inequality follows from expanding the square in the numerator and using that $y_{j_1}^2 \ge 0$ and $y_{j_1} y_{j_0} < 0$, and the last because $d(\mathcal P) \le d(\mathcal G)$. The result follows because $\by$ lies in the unit sphere since
\[
    1 = \sum_{j \in \BS} y_j^2 \le |\BS| \cdot y_{j_0}^2\,,
\]
because $j_0$ is the entry with the highest squared value.
\end{proof}

\subsection{Proof of Lemma~\ref{lem:zero-solution}}
\begin{proof}
When $s=0$, the associated fundamental solution is constant, i.e., $\bar x_{ij}(t) = \bar x_{ij}(0)$ and $\bar N_j(t) = \bar N_j(0)$. We assume without loss that the network $\mathcal G$ is connected. Otherwise, it is sufficient to work with each connected component at a time. Because the solution is constant, equation~\eqref{eq:linear-dynamics-x} implies that for each $(i,j) \in \AS$
\[
\sigma_j \bar N_j(0) = \frac 1 {|\BS_i|}
    \sum_{j' \in \BS_i} \sigma_{j'} \bar N_{j'}(0)\,.
\]
Because the network is connected, we obtain that $\sigma_j \bar N_j(0)$ is constant across backends, i.e., there exists $c \in \mathbb R$ such that $\sigma_j \bar N_j(0) = c$ for all $j \in \BS$. Moreover, equation~\eqref{eq:linear-dynamics-N} implies that for each backend $j \in \BS$
\begin{align}\label{eq:N-at-zero}
\sum_{i\in\FS_j} \lambda_i \bar x_{ij}(0) = \ell_j' \bar N_j(0) = \frac{\ell'_j}{\sigma_j} c\,,
\end{align}
where the last equation follows from the previous observation. Summing over all backends we obtain that
\[
    c \sum_{j \in \BS }\frac{\ell'_j}{\sigma_j} = \sum_{j \in \BS} \sum_{i\in\FS_j} \lambda_i \bar x_{ij}(0) = \sum_{i \in \FS} \lambda_i \sum_{j\in\BS_i}  \bar x_{ij}(0) = 0\,,
\]
where the second equaility follows from exchanging the order of summation, and the last because $\sum_{j \in \BS_i} x_{ij}(t) = 1$ and $\bar{\boldsymbol{x}}_i(t)$ is the difference between two elements in the probability simplex. Because $\ell_j' > 0$ and $\sigma_j > 0$, we obtain that $c = 0$ and it follows that $\bar N_j(0) = 0$ for all $j \in \BS$.

Suppose $\bar x_{ij}(0) \neq 0$ for some $(i,j) \in \AS$. Equation~\eqref{eq:N-at-zero} gives that $\sum_{i\in\FS_j} \lambda_i \bar x_{ij}(0) = 0$ for all $j \in \BS$. Therefore, the solution $x_{ij}^* + \bar x_{ij}(0)$ is feasible and optimal for the static routing problem, which contradicts the uniqueness of the optimal solution from Assumption~\ref{assume:interior-general}. Therefore, $\bar x_{ij}(0) = 0$ for all $(i,j) \in \AS$ and the result follows.
\end{proof}

\subsection{Proof of Lemma~\ref{lem:laplacian-matrix1}}
\begin{proof}
Take $\bx \in \mathbb C^{|\BS|}$. The quadratic form is given by
\begin{align*}
     \bx^\dagger\left(\diag(\ba_i) - \frac{\ba_i \ba_i^\top}{\ba_i^\top \mathbf 1}\right) \bx &= \bx^\dagger \diag(\ba_i) \bx - \frac 1 {|\BS_i|}(\bx^\dagger \ba_i) (\ba_i^\top \bx)
     = \sum_{j \in \BS_i} |x_i|^2 - \frac 1 {|\BS_i|} \Big|\sum_{j\in \BS_i} x_i\Big|^2\,,
\end{align*}
where we used that $\ba_i^\top \mathbf 1 = |\BS_i|$ and $\bx^\dagger \ba_i = \sum_{j\in \BS_i} \overline{x_i} = \overline{\sum_{j\in \BS_i} x_i}$. Positive semi-definiteness follows because
\[
\Big|\sum_{j\in \BS_i} x_i\Big|^2 \le \Big(\sum_{j\in \BS_i} |x_i|\Big)^2
\le |\BS_i| \sum_{j\in \BS_i} |x_i|^2\,,
\]
from the triangle inequality and Jensen's inequality because the quadratic function is convex. The spectral radius is bounded by one because $\Big|\sum_{j\in \BS_i} x_i\Big|^2 \ge 0$ and $\sum_{j \in \BS_i} |x_i|^2 \le \sum_{j \in \BS} |x_i|^2$.
\end{proof}

\subsection{Proof of Lemma~\ref{lem:geometric-bounds}}
\begin{proof}
We perform the change of variables $\tau = \omega / (2 w)$ and $c=\alpha/w$ with $\omega \ge 0$ and $\alpha \ge 0$. With this transformation the function $f$ is given by
\[
    f(\omega) = \frac{ e^{-\im \omega}}{\im \omega (\im \alpha - \im \omega/2 + 1)}\,.
\]

We need to argue that $\Re(f(\omega)) \ge -1 + \alpha \Im(f(\omega))$. We have that
\begin{align*}
    &\Re(f(\omega)) + 1 - \alpha \Im(f(\omega)) =\\
    &\frac {-4\,\sin \left( \omega \right) {\alpha}^{2}+2\,\sin \left(
\omega \right) \alpha\,\omega+4\,{\alpha}^{2}\omega-4\,\alpha\,{\omega
}^{2}+{\omega}^{3}+2\,\cos \left( \omega \right) \omega-4\,\sin
 \left( \omega \right) +4\,\omega}{\omega \left( (2\alpha - \omega)^2 +4 \right) }\,.
\end{align*}
Because the denominator is non-negative, we need to check that the numerator is non-negative. The numerator is convex in $\alpha$ because the coefficient of $\alpha^2$ is $4 (\omega - \sin(\omega)) \ge 0$. Taking derivatives, we obtain that the minimum value is verified at
\[
\alpha^* = \frac {\omega \left( 2\omega - \sin \left( \omega \right)
 \right) }{4(\omega - \sin \left( \omega \right) )} \ge 0\,.
\]
Evaluating the numerator at the minimum value we obtain
\[
{\frac { \left( {\omega}^{2}-16 \right)  \left( \cos \left( \omega
 \right)  \right) ^{2}+8\,\omega\, \left( -\sin \left( \omega \right)
+\omega \right) \cos \left( \omega \right) +15\,{\omega}^{2}-32\,\sin
 \left( \omega \right) \omega+16}{4\omega -4 \sin \left( \omega \right)
\omega}}\,,
\]
which is non-negative for all $\omega \ge 0$.
\end{proof}

\subsection{Proof of Lemma~\ref{lem:exp-bound}}
\begin{proof}
Because $a$ is real we can write $\exp(\im a) = \cos a + \im \sin a$ and, thus
\begin{align*}
    |\exp(\im a) - 1| = \sqrt{(1-\cos a)^2 + \sin^2 a} = \sqrt{2 - 2 \cos a} = \sqrt{2} \sqrt{1 - \cos a} \le |a|\,,
\end{align*}
where the second equation follows because $\cos^2 a + \sin^2 a = 1$ and the inequality because $\cos a \ge 1 - a^2/2$.
\end{proof}

\subsection{Proof of Lemma \ref{lem:spectrum-product-sum}}
\begin{proof}
Consider the matrix $A_\epsilon = A + \epsilon I$, which is invertible for every $\epsilon>0$. Note that $A$ and $A_\epsilon$ have the same eigenvectors and $\operatorname{spec}(A_\epsilon) = \operatorname{spec}(A) + \epsilon$. Using that $A_\epsilon$ is invertible, we obtain that
\begin{align*}
    \operatorname{spec}\left( (A_\epsilon+P) B\right)
    &= \operatorname{spec}\left( A_\epsilon ( B + A_\epsilon^{-1} P B) \right)\subseteq W(A_\epsilon) W\left( B + A_\epsilon^{-1} P B \right)\\
    &\subseteq (\epsilon + W(A)) \left( W(B) + W(A_\epsilon^{-1} P B) \right)\,,
\end{align*}
where the first inclusion follows because $A_\epsilon$ is positive semi-definite, and the second because $W(A_\epsilon) = W(A) + \epsilon$ together with the sub-additivity of the numerical range. The result follows because eigenvalues and the numerical range are continuous functions of $\epsilon$, if $\lim_{\epsilon \downarrow 0} A_\epsilon^{-1} P = A^+ P$. We conclude by proving this last claim.

Because $A$ is positive semi-definite it is diagonalizable. Let $\mu_i \ge 0$ be the eigenvalues and $\by_i$ the associated eigenvectors, which form an orthonormal basis. Denote by $\boldsymbol{P}_j$ the $j$-th column vector of $P$. %First note that
%\[
%    \boldsymbol{P}_j \in \operatorname{range}(P) \subseteq \operatorname{range}(A) = \operatorname{kernel}(A^\top)^\bot = \operatorname{kernel}(A)^\bot\,,
%\]
%where the first equality follows by the fundamental theorem of linear algebra and the second equality because $A$ is symmetric.
By assumption, for every eigenvector $\by_i$ associated to a zero eigenvalue $\mu_i = 0$, we obtain that $\by_i^\dagger \boldsymbol{P}_j = 0$ since the eigenvector $\by_i$ lies in the kernel of $A$ and $\operatorname{range}(P) \subseteq \operatorname{kernel}(A)^\bot$. Therefore,
\begin{align*}
    A_\epsilon^{-1} \boldsymbol{P}_j = \sum_{i} (\mu_i + \epsilon)^{-1} \by_i \by_i^\dagger \boldsymbol{P}_j
    = \sum_{i : \mu_i > 0} (\mu_i + \epsilon)^{-1} \by_i \by_i^\dagger \boldsymbol{P}_j\,,
\end{align*}
where the first equation follows by  $A_\epsilon^{-1} = \sum_{i} (\mu_i + \epsilon)^{-1} \by_i \by_i^\dagger$ from the formula for the inverse of diagonalizable matrix. The result follows by taking limits and using that the pseudo-inverse is $A^+ =  \sum_{i : \mu_i > 0} \mu_i^{-1} \by_i \by_i^\dagger$.
\end{proof}

\subsection{Proof of Lemma~\ref{lem:laplacian-matrix2}}
\begin{proof}
For the first part, take a vector $\by \in \mathbb C^{|\BS|}$ such that $E_i \by = 0$. Therefore,
\[
\diag(\ba_i)\by = \frac{\ba_i \ba_i^\top}{\ba_i^\top \mathbf 1} \by\,.
\]
This implies that for all $j \in \BS_i$ we have $y_j = 1/|\BS_i| \sum_{j' \in \BS_i} y_{j'}$. Equivalently, we obtain that $y_j = y_{j'}$ for all $j, j' \in \BS_i$.

For the second part, we use that the kernel of a sum of positive semi-definite matrices is the intersection of the kernels to obtain that $\operatorname{kernel}\left(\sum_{i\in\FS} \lambda_i \eta_i E_i\right) = \bigcap_{i\in\FS} \operatorname{kernel}\left(\lambda_i \eta_i E_i\right) = \bigcap_{i\in\FS} \operatorname{kernel}\left(E_i\right)$ where the last equation follows because $\lambda_i > 0$. Because the network is connected, there exists a path connecting every backend, which implies that $y_j = y_{j'}$ for all $j,j' \in \BS$.

The third part follows from the variational definition of eigenvalues and using part 2 to identify the kernel of the matrix.
\end{proof}

\subsection{Proof of Lemma \ref{lem:ratio-bound}}
\begin{proof}
Let $f(w) = (w^2+a^2)/(w^2+b^2)$. Suppose $b\ge a$. We can write $f(w) = 1 - (b^2 - a^2)/(w^2+b^2)$, which is non-decreasing in $w$. Therefore, $f(w) \le \lim_{w \rightarrow \infty} f(w) = 1$ and it follows that $\sqrt{f(w)} \le 1$. Now suppose $a \le b$. In this case $f(w)$ is non-increasing in $w$. Therefore, $f(w) \le f(0) = a^2/b^2$ and it follows that $\sqrt{f(w)} \le a/b$. The result follows from combining both cases.
\end{proof}

\end{document}